\documentclass{easychair}

\usepackage{cutelmet}
\usepackage{amssymb}
\usepackage{mathtools}
\usepackage{proof}
\usepackage{bussproofs}
\usepackage{latexsym}
\usepackage{stella}
\newtheorem{theorem}{Theorem}[section]
\newtheorem{corollary}{Corollary}[section]
\newtheorem{proposition}{Proposition}[section]

\newtheorem{example}{Example}[section]
                   
\def\@ptdef[#1]{\begin{definition}[#1]\rm}
\newtheorem{definition}{Definition}[section]
                     {{\hfill \Endmark}\end{definition}\@endparenv}

\title{Schemata, Cyclic Proofs and Herbrand Systems}
\titlerunning{Schemata and Cyclic Proofs}
\authorrunning{A. Leitsch et. al.}

\author{
Alexander Leitsch\inst{1}
\and
Anela Loli{\'c}\inst{2}
\and
Stella Mahler\inst{3}  
}

\institute{
  Institute of Logic and Computation,\\TU Wien, 
   Vienna, Austria\\
  \email{leitsch@logic.at}
\and
   Institute of Logic and Computation,\\TU Wien, 
   Vienna, Austria\\
   \email{anela@logic.at}
\and
   Institute of Logic and Computation,\\TU Wien, 
   Vienna, Austria\\
   \email{stella@logic.at}
 }
 
\begin{document}

\maketitle

\begin{abstract} Inductive proofs can be represented by proof schemata, a formalism that represents infinite sequences of proofs by recursive definitions. Since proof schemata avoid the explicit application of induction rules, they admit novel applications, one of which is the realization of Herbrand's theorem in the presence of induction. In this paper, we develop a new type of proof schema based on point transition systems. For skolemized proof schemata without quantified cuts, so-called Herbrand systems, that is, schemata of Herbrand instances of quantified formulas, can be computed. Herbrand systems also allow the representation of schemata of Herbrand sequents, thereby realizing Herbrand's theorem for proof schemata. We compare proof schemata with cyclic proofs and define a transformation from a large class of cyclic proofs to proof schemata. Finally, we show that proof schemata based on point transition systems are capable of proving the 2-Hydra statement, a well-known example that is provable by the cyclic proof system $\mathsf{CLKID}^\omega$ but not in $\mathsf{LKID}$.
\end{abstract}

\section{Introduction}\label{sec.introduction}

The role of a proof extends beyond the verification of a statement. Formal proofs often contain information about the mathematical objects involved, including algorithms or bounds that are not reflected in the theorem itself. Understanding and extracting such information has become an important objective of proof theory. Building on foundational ideas of Kreisel \cite{GKreisel1951,GKreisel1952,GKreisel1958}, the field of proof mining studies methods for uncovering the computational and mathematical content hidden in formal derivations. From this perspective, proofs are viewed not merely as certificates of truth but as structured objects whose analysis can provide deeper insight into the underlying mathematics.

An important tool for analyzing proofs is Herbrand's theorem, which is one of the most important results of mathematical logic. It expresses the fact that in a formal (cut-free) proof of an end-sequent in prenex form the propositional and quantifier inferences can be separated. In the formalism of sequent calculus this means that a so-called Herbrand sequent can be extracted from a proof; the propositional inferences operate above the sequent, the quantifier inferences below. This form of separation was first described in form of the midsequent theorem in Gentzen's famous paper~\cite{Gen35}. In proof theory Herbrand's theorem is also an important tool in proof mining; a prominent example is the analysis of two proofs of Roth's theorem (see~\cite{Luckhardt.1989}). In automated theorem proving Herbrand's theorem is used as a tool to prove completeness of refinements of resolution. The theorem also yields a compact representation of proofs by abstracting away the propositional inferences. Compact representations of proofs also play a major role in computational proof analysis: formal proofs obtained by cut-elimination from formalized mathematical proofs are typically very long covering up the mathematical content of the proofs. Experiments with the system CERES (cut-elimination by resolution, see~\cite{BL00} and~\cite{BL11}) have revealed that Herbrand forms are mathematically significant in displaying the main mathematical arguments of a proof in a natural way~\cite{hetzl2008herbrand}.

The core of the first-order CERES method (for the most general definition of the method, see~\cite{LCL.2026})  consists of the construction of a resolution refutation of a characteristic formula (which represents the derivations of the cut formulas); this resolution refutation is then combined with a so-called proof projection to a proof of the given theorem containing only quantifier-free cuts (called a CERES normal form). As Herbrand's theorem holds also in presence of any quantifier-free cuts, a Herbrand sequent could be extracted from such a CERES normal form. But Herbrand sequents cannot directly be extracted from proofs with (quantified) cuts and cut-elimination is required as a preprocessing. However, in presence of the induction principle Herbrand's theorem actually fails. As most interesting mathematical proofs contain applications of mathematical induction, the extraction of ``Herbrand information" from such proofs seems to be impossible. Instead, it was shown that an extension of the CERES method to {\em proof schemata} could capture cut-elimination in inductive proofs by replacing Herbrand instances by Herbrand systems (representing schemata of substitutions which, in turn, can be used to represent schemata of Herbrand sequents). A first thorough analysis of a schematic CERES method can be found in~\cite{LPW17}; the inductive proofs investigated in this paper are those representable by a single parameter - in the formalisation by a proof schema.
In~\cite{Thesis.Lolic.2020} the approach in~\cite{LPW17} was extended to arbitrarily many induction parameters thus considerably increasing the strength of the method. The most advanced extension of proof schemata with arbitrarily many induction parameters can be found in~\cite{LCL.2026}. In the schematic CERES method two types of proof schemata are considered: 1. the input schema representing an inductive proof and 2. a refutation schema based on schematic resolution applied to the characteristic formula schema, which is quantifier-free. While the input schemata are based on primitive recursion of proofs, more complex forms of recursions were introduced for the characteristic formula schemata in~\cite{CLL.2021} and in~\cite{LCL.2026}.  In~\cite{CLL.2021} the refutational calculus was substantially extended by the use of {\em point transition systems} (see also ~\cite{LCL.2026}). The expressivity of the extended calculus was demonstrated on a formula schema which could not be refuted by earlier methods. However, recursion based on point transition systems has not yet been applied to input proof schemata.

While proof schemata provide one approach to the analysis of inductive proofs, there exist other formalisms for inductive inference admitting cut-elimination.
Among the most prominent are the cyclic proof systems of Brotherston and Simpson~\cite{Brotherston.2005, brotherston2011sequent} and the inductive calculi of McDowell and Miller~\cite{mcdowell2000cut}.
For a long time, the extraction of Herbrand-type structures from such calculi remained open, in contrast to the development of Herbrand systems for proof schemata~\cite{LPW17, Thesis.Lolic.2020, LL.2025, LCL.2026}.
Recently, however, Afshari et al.~\cite{afshari2025herbrand} showed that Herbrand structures can indeed be extracted from cyclic proofs via higher-order recursion schemes.
Our work builds on the cyclic framework of Brotherston and Simpson and investigates its connection to the proof schema formalism and schematic CERES.

Cyclic proofs provide an alternative formalism for inductive reasoning by replacing explicit induction rules with global proof structures containing cycles. Instead of applying an induction rule locally, cyclic proofs allow derivations in which certain sequents may reoccur, forming cycles that are justified by a global soundness condition. 
This condition, typically formulated as a trace condition, guarantees that every infinite path through the derivation exhibits a notion of syntactic progress.
From the perspective of proof schemata, these cyclic dependencies naturally correspond to recursive calls between schematic proof components.

In this paper, we consider the cyclic proof system for inductive definitions $\mathsf{CLKID}^{\omega}$ of \cite{brotherston2006.thesis, Brotherston.2005, brotherston2011sequent}.
While there are approaches to extract Herbrand structures from cyclic proofs through higher-order recursion schemes \cite{afshari2025herbrand}, a translation of $\mathsf{CLKID}^{\omega}$ proofs into the proof schema formalism allows the direct extraction of Herbrand systems from inductive proofs without quantified cuts and the application of schematic CERES for inductive proofs with quantified cuts.
Due to the nature of proof schemata we consider a subclass of $\mathsf{CLKID}^{\omega}$ derivations restricted to proofs in which the variables governing the inductive behaviour of the cyclic proof are not bound by strong quantification; here, strong quantifiers refer to universal quantifiers of positive polarity and existential quantifiers of negative polarity. In addition, we limit our scope to systems where the inductive definitions are indeed recursive definitions. 
Despite this restriction, it is possible to formalize the Two-Hydra statement of Berardi and Tatsuta~\cite{BT.2017} in our formalism.
In case of recursive definitions that do not form a complete characterization we perform an extension of the calculus, which is justified by assuming the standard semantics of $\mathsf{CLKID}^{\omega}$.

In Section~\ref{sec.schema} we first define the calculus {\LK} (in the version we are using it). In Section~\ref {subsec.pts}  the main features of point transition systems are described (for a more detailed description we refer to~\cite{LCL.2026}). In Section~\ref{subsec.LKschema} we introduce a schematic logic and corresponding proof schemata based on the sequent calculus 
$\LKN$. The proof schemata in this paper differ from the schemata in~\cite{CLL.2021} and~\cite{LCL.2026} in various ways. They are simpler concerning the definition of terms and formulas as no recursive schemata are considered on this level. On the other hand $\LKN$ is more powerful than the schematic {\LK}-calculi defined so far as it allows weak quantification over parameters. Moreover, our schemata admit a more powerful proof recursion based on point transition systems (note that in~\cite{CLL.2021} point transition systems have been used only for schematic resolution calculi, not for the {\LK}-based calculi).  In Section~\ref{sec.herbrandsystems} we show that for all proof schemata (based on point transition systems) with at most quantifier-free cuts Herbrand systems and schematic Herbrand sequents can be computed. This result extends a former result about all proof schemata based on primitive recursion and containing at most quantifier-free cuts. Section~\ref{sec.translation} shows that a subclass of cyclic proofs can be algorithmically translated into proof schemata, thus making cyclic proofs candidates for the extraction of Herbrand systems. Finally, in Section~\ref{sec.2hydraexample}, we show that the 2-Hydra example can be translated into a proof schema and we construct the corresponding Herbrand system. Combining this construction with the result of Berardi and Tatsuta~\cite{BT.2017}, we obtain that proof schemata based on point transition systems can prove inductive statements that are not provable in $\mathsf{LKID}$.

\section{Proof Schemata}\label{sec.schema}

A proof schema is a formalism admitting the use of mathematical induction without applying an induction rule. Induction is replaced by an expression representing an infinite sequence of proofs. Proof schemata have been used successfully for cut-elimination in inductive proofs and in proof analysis of mathematical proofs (see e.g.~\cite{BHLRS.2008},\cite{LPW17} and~\cite{Thesis.Lolic.2020}). Unlike in the publications mentioned above we consider only proof schemata without recursive term- and formula expressions, but with more complex proof recursions. The basic calculus for all our investigations in this paper is the Gentzen calculus {\LK} to be defined below.

\begin{definition}[sequent calculus $\LK$] \label{def.LK}
Here, we will use a variant of Gentzen's version of \textbf{LK} \cite{Gen35}. Since we consider multi-sets of formulas, we do not need exchange or permutation rules. 
There are two groups of rules, the logical and the structural ones. All rules except the cut have left and right versions, denoted by $l$ and $r$, respectively. The binary rules are of multiplicative type, i.e. no auto-contraction of the context is applied. 
Below, $A$ and $B$ denote formulas whereas $\Gamma, \Delta, \Pi, \Lambda$ denote multi-sets of formulas. In the rules there are introducing or auxiliary formulas in the premises and introduced or principal formulas in the conclusion. We indicate these formulas for all rules, auxiliary formula occurrences are marked by $+$ and principal formula occurrences are marked by $*$. We frequently say auxiliary (main) formula instead of auxiliary (main) formula occurrence.
\\[1ex]
{\bf The logical rules:}

$\land$-introduction

	\begin{minipage}{0.4\linewidth}
	\centering
	\begin{prooftree}
	\AxiomC{$A^+, B^+, \Gamma \vdash \Delta$}
	\RightLabel{$\land_{l}$}
	\UnaryInfC{$(A \land B)^*, \Gamma \vdash \Delta$}
	\end{prooftree}
	\end{minipage}
	\begin{minipage}{0.4\linewidth}
	\centering
	\begin{prooftree}
	\AxiomC{$\Gamma_1 \vdash \Delta_1, A^+$}
	\AxiomC{$\Gamma_2 \vdash \Delta_2, B^+$}
	\RightLabel{$\land_{r}$}
	\BinaryInfC{$\Gamma_1 , \Gamma_2 \vdash \Delta_1 , \Delta_2 , (A \land B)^*$}
	\end{prooftree}
	\end{minipage}
	
$\lor$-introduction

	\begin{minipage}{0.4\linewidth}
	\centering
	\begin{prooftree}
	\AxiomC{$\Gamma \vdash \Delta, A^+, B^+$}
	\RightLabel{$\lor_{r}$}
	\UnaryInfC{$\Gamma \vdash \Delta, (A \lor B)^+$}
	\end{prooftree}
	\end{minipage}	
	\begin{minipage}{0.4\linewidth}
	\centering
	\begin{prooftree}
	\AxiomC{$A^+, \Gamma_1 \vdash \Delta_1$}
	\AxiomC{$B^+, \Gamma_2 \vdash \Delta_2$}
	\RightLabel{$\lor_{l}$}
	\BinaryInfC{$(A \lor B)^*, \Gamma_1, \Gamma_2 \vdash \Delta_1 , \Delta_2$}
	\end{prooftree}
	\end{minipage}
	
$\to$-introduction

	\begin{minipage}{0.4\linewidth}
	\centering
	\begin{prooftree}
	\AxiomC{$\Gamma_1 \vdash \Delta_1, A^+$}
	\AxiomC{$B^+, \Gamma_2 \vdash \Delta_2$}
	\RightLabel{$\to_{l}$}
	\BinaryInfC{$(A \to B)^*, \Gamma_1 , \Gamma_2 \vdash \Delta_1 , \Delta_2$}
	\end{prooftree}
	\end{minipage}
	\begin{minipage}{0.4\linewidth}
	\centering
	\begin{prooftree}
	\AxiomC{$A^+, \Gamma \vdash \Delta, B^+$}
	\RightLabel{$\to_{r}$}
	\UnaryInfC{$\Gamma \vdash \Delta, (A \to B)^*$}
	\end{prooftree}
	\end{minipage}
	
$\neg$-introduction

	\begin{minipage}{0.4\linewidth}
	\centering
	\begin{prooftree}
	\AxiomC{$\Gamma \vdash \Delta, A^+$}
	\RightLabel{$\neg_{l}$}
	\UnaryInfC{$\neg A^*, \Gamma \vdash \Delta$}
	\end{prooftree}
	\end{minipage}
	\begin{minipage}{0.4\linewidth}
	\centering
	\begin{prooftree}
	\AxiomC{$A^+, \Gamma \vdash \Delta$}
	\RightLabel{$\neg_{r}$}
	\UnaryInfC{$\Gamma \vdash \Delta, \neg A^*$}
	\end{prooftree}
	\end{minipage}
	
$\forall$-introduction

	\begin{minipage}{0.4\linewidth}
	\centering
	\begin{prooftree}
	\AxiomC{$A\{x \leftarrow t\}^+, \Gamma \vdash \Delta$}
	\RightLabel{$\forall_{l}$}
	\UnaryInfC{$\forall x A^*, \Gamma \vdash \Delta$}
	\end{prooftree}
	\end{minipage}
	\begin{minipage}{0.4\linewidth}
	\centering
	\begin{prooftree}
	\AxiomC{$\Gamma \vdash \Delta, A\{x \leftarrow \alpha\}^+$}
	\RightLabel{$\forall_{r}$}
	\UnaryInfC{$\Gamma \vdash \Delta, \forall x A^*$}
	\end{prooftree}
	\end{minipage}
	
where $t$ is an arbitrary term and $\alpha$ is a free variable which may not occur in $\Gamma, \Delta, A$. $\alpha$ is called an eigenvariable. 
	
$\exists$-introduction

	\begin{minipage}{0.4\linewidth}
	\centering
	\begin{prooftree}
	\AxiomC{$A\{x \leftarrow \alpha\}^+, \Gamma \vdash \Delta$}
	\RightLabel{$\exists_{l}$}
	\UnaryInfC{$\exists x A^*, \Gamma \vdash \Delta$}
	\end{prooftree}
	\end{minipage}
	\begin{minipage}{0.4\linewidth}
	\centering
	\begin{prooftree}
	\AxiomC{$\Gamma \vdash \Delta, A\{x \leftarrow t\}^+$}
	\RightLabel{$\exists_{r}$}
	\UnaryInfC{$\Gamma \vdash \Delta, \exists x A^*$}
	\end{prooftree}
	\end{minipage}
	
where the variable conditions for $\exists_{l}$ are the same as those for $\forall_r$ and similarly for $\exists_r$ and $\forall_l$. The quantifier-rules $\forall_l,\exists_r$ are called {\em weak}, the rules $\exists_l,\forall_r$ {\em strong}.\\[1ex]
{\bf The structural rules:}

weakening

	\begin{minipage}{0.4\linewidth}
	\centering
	\begin{prooftree}
	\AxiomC{$\Gamma \vdash \Delta$}
	\RightLabel{$w_{l}$}
	\UnaryInfC{$A^*, \Gamma \vdash \Delta$}
	\end{prooftree}
	\end{minipage}
	\begin{minipage}{0.4\linewidth}
	\centering
	\begin{prooftree}
	\AxiomC{$\Gamma \vdash \Delta$}
	\RightLabel{$w_{r}$}
	\UnaryInfC{$\Gamma \vdash \Delta, A^*$}
	\end{prooftree}
	\end{minipage}
	
contraction

	\begin{minipage}{0.4\linewidth}
	\centering
	\begin{prooftree}
	\AxiomC{$A^+, A^+, \Gamma \vdash \Delta$}
	\RightLabel{$c_{l}$}
	\UnaryInfC{$A^*, \Gamma \vdash \Delta$}
	\end{prooftree}
	\end{minipage}
	\begin{minipage}{0.4\linewidth}
	\centering
	\begin{prooftree}
	\AxiomC{$\Gamma \vdash \Delta, A^+, A^+$}
	\RightLabel{$c_{r}$}
	\UnaryInfC{$\Gamma \vdash \Delta, A^*$}
	\end{prooftree}
	\end{minipage}
	
cut

	\begin{prooftree}
	\AxiomC{$\Gamma_1 \vdash \Delta_1, A^+$}
	\AxiomC{$A^+, \Gamma_2 \vdash \Delta_2$}
	\RightLabel{${\it cut}$}
	\BinaryInfC{$\Gamma_1 , \Gamma_2 \vdash \Delta_1 , \Delta_2$}
	\end{prooftree}
The formula $A$ is the auxiliary formula of $\it{cut}$ (also called the cut-formula) and there are no principal ones.

An {\em axiom set} is a set of atomic sequents which are closed under substitution. An $\LK$-proof from an axiom set $\Acal$ is a tree formed according to the rules of $\LK$ such that all leaves are in $\Acal$. The axiom set $\Acal_0\colon \{A \seq A \mid A \mbox{ an atomic formula}\}$ is called the {\em  standard axiom set}.
\end{definition}

\subsection{Point Transition Systems}\label{subsec.pts}

Point transition systems (see~\cite{CLL.2021} and~\cite{LCL.2026}) are frameworks to describe the abstract form of recursions (in the form of recursive calls), independent of the objects to be defined. This formalism can be used to define recursive functions, schematic terms, schematic formulas and schematic proofs. Though we use point transitions for defining proof schemata we give first a motivating example by defining computable functions. A simple form of recursion is primitive recursion, where the base case $0$ and a step case $>0$ are distinguished. 
\begin{example}\label{ex.primrec}
Let us assume that the functions $p$ (predecessor) and $s$ (successor) are already defined. It is well-known that we can define the addition $a$ in the form 
$$a(x,y) \eqdef\ \AIf\ y=0\ \AThen\ x\ \AElse\ s(a(x,p(y))).$$ 
Here $a(x,y)$ can either call $x$ (which is already defined) or $a(x,p(y))$ (assumed to be already defined). This is a specific example of primitive recursion.
\end{example}
 However, in practice of recursive definitions, cyclic calls among different defined functions cannot be avoided as illustrated below. 

\begin{example}\label{ex.pts0}
To give just a flavor of the problem, we define functions $f,g$ and assume that $p,+,*$ are already defined:
\begin{eqnarray*}
f(x,y) &=& f(p(x),y) + g(x,p(y)) \mbox{ for } x>0 \land y>0,\\
f(x,y) &=& x \mbox{ for }x>0 \land y=0,\\
f(x,y) &=& y \mbox{ for }x=0 \land y>0,\\
f(x,y) &=& 0 \mbox{ for }x=0 \land y=0.\\[1ex]
g(x,y) &=& f(x,p(y)) + f(x,p(y))*x \mbox{ for } y>0,\\
g(x,y) &=& x*x \mbox{ for } y=0.
\end{eqnarray*}
When we are primarily interested in showing that $f$ and $g$ are well-defined - and not in their values - we may abstract from the function symbols $p,+,*$ from which we know that they are well-defined and specify total functions. The structure we are using here will be defined formally as {\em point transition system}. The points will be the arguments of the functions, the labels the names of the functions. We use the label  "end" for the cases where at most the functions $s,p,+,*$ are used  in the definitions. We give the point transition system for the definitions above (the formal definitions then follow below).
\[
\begin{array}{l}
(f,(x,y)) \to \{(f,(p(x),y)), (g,(x,p(y)))\} \colon x>0 \land y>0,\\
(f,(x,y)) \to \{(\mbox{end},(x,y))\} \colon x>0 \land y=0,\\
(f,(x,y)) \to \{(\mbox{end},(x,y))\} \colon x=0 \land y>0,\\
(f,(x,y)) \to \{(\mbox{end},(x,y))\} \colon x=0 \land y=0.\\[1ex]
(g,(x,y)) \to \{(f,(x,p(y)))\}\colon y>0,\\
(g,(x,y)) \to \{(\mbox{end},(x,y))\} \colon y=0.
\end{array}
\]
By using an adequate tuple ordering for the arguments of the functions $f$ and $g$ we can show that $f$ and $g$ are indeed well-defined. 
Moreover, it is easy to see that the conditions above define a {\em partition}, i.e. the cases cover any tuple $(x,y)$.
\end{example}
We are now developing the formal framework of point transition systems. Let $T_0$ be the set of terms defined over {\em parameters} (which are variables over the natural numbers), the functions $s$ (successor), $p$ (predecessor), and $0$ (zero), where $p$ is defined as $p(0) = 0$ and $p(s(x)) =x$; the subset of parameter-free terms in $T_0$ is denoted by $T^G_0$. It is easy to show that terms in $T^G_0$ are equivalent to {\em numerals}, i.e. expressions of the form $s^\alpha(0)$ for some $\alpha \in \N$. The semantics of point transition systems is based on {\em parameter assignments}, which are functions from parameters to numerals. For further examples and the formal definition of semantics of point transition systems we refer to~\cite{LCL.2026}.
\begin{definition}[point\index{point}]\label{def.point}
A {\em point} is an element of $T_0^\alpha$ for $\alpha \geq 1$ and $T_0$ as defined above.
\end{definition}
\begin{definition}[labeled point\index{labeled point}]\label{def.labeledpoint}
Let $\Delta$ be a countably infinite set of symbols called {\em labels}. A {\em labeled point} is a pair $(\delta,p)$ where $p$ is a point and $\delta$ is a label.
\end{definition}
\begin{definition}[condition\index{condition}]\label{def.condition}
An {\em atomic condition} is either $\top,\bot$ or of the form (where $\triangle \in \{=,<,>,\leq,\geq\}$) 
\begin{itemize}
\item $s^\alpha(v) \triangle s^\beta(\bar{0})$ for a parameter $v$,  $\alpha \geq 0,\beta \geq 0$, or 
\item $s^\alpha(v) \triangle s^\beta(w)$ for (different) parameters $v$ and $w$,  $\alpha \geq 0,\beta \geq 0$.
\end{itemize} 
 Let $\ATC$ be the set of atomic conditions; the set $\COND$ of all conditions is defined as follows:
\begin{itemize}
\item $\ATC \IN \COND$,
\item if $C \in \COND$ then $\neg C \in \COND$,
\item if $C_1,C_2 \in \COND$ then $C_1 \land C_2 \in \COND$ and $C_1 \lor C_2 \in \COND$.
\end{itemize}
\end{definition} 
For better legibility we frequently use $k+\alpha$ instead of $s^\alpha(k)$ though the terms in the conditions do not contain $+$.
\begin{example}
$n=0$, $m<k+1$ are atomic conditions, $n=0 \land \neg m <k+1$ and $(n=0 \land m<k+1) \lor m  > k +2$ are conditions.
\end{example}
The semantics of a condition $C$ is obtained by interpreting the variables by a parameter assignment $\sigma$ and by evaluating $C$ under $\sigma$ (where $<,>$ are interpreted via the standard interpretation). So, if $\sigma(n)=2,\ \sigma(m)=3$ and $C\eqdef n<m$ then $\sigma(C) = \top$, for $D \eqdef n=m$ $\sigma(D)=\bot$.
\begin{definition}[point transition\index{point transition}]\label{def.pt}
A {\em point transition} is an expression of the form 
$$(\delta,p) \to \Lcal\colon C$$
where $(\delta,p)$ is a labeled point, $\Lcal$ is a nonempty set of labeled points, and $C$ is a condition. To extract the different parts of a point transition we define the functions $l,r,c$ as follows: if $t = (\delta,p) \to \Lcal\colon C$ then $l(t) = (\delta,p)$, $r(t) = \Lcal$ and $c(t) = C$.
\end{definition}
To every label $\delta$ we assign points of the same arity; the result is a so-called point transition cluster.
\begin{definition}[point transition cluster\index{point transition cluster}]\label{def.ptc}
A finite set of point transitions $\Pbf$ is called a {\em point transition cluster} if for all $\delta \in \Delta$ and points $p,q$ such that 
$(\delta,p)$ and $(\delta,q)$ occur in $\Pbf$ (either as $l(t)$ or as $r(t)$ for a $t \in \Pbf$) there exists an $\alpha \geq 1$ such that $p,q \in \Pcal^\alpha$. We write $\Delta(\Pbf)$ for the set of all $\delta \in \Delta$ occurring in $\Pbf$; with $A(\delta)$ we denote the arity of the points corresponding to $\delta$.
\end{definition}
\begin{example}\label{ex.ptc}
Let 
\begin{eqnarray*}
\Pbf_1 &=& \{(\delta,(m+1,n)) \to \{(\delta,(m-1,n+1)), (\delta',(m,n,m))\}\colon m>0 \land n<m,\\
         & & (\delta,(m,n)) \to \{(\delta',(n,n,n))\}\colon m<n
\end{eqnarray*}
$\Pbf_1$ is a point transition cluster: we have $A(\delta)=2, A(\delta')=3$. The set $\Pbf_2$ defined as 
\begin{eqnarray*}
\Pbf_2 &=& \{(\delta,(m+1,n)) \to \{(\delta,(m-1,n+1)), (\delta',(m,n,m))\}\colon m>0 \land n<m,\\
         & & (\delta,(m,n)) \to \{(\delta',(n,n))\}\colon m<n
\end{eqnarray*}
is not a point transition cluster as $A(\delta')$ cannot be defined consistently. 
\end{example}
Consider the point transition cluster $\Pbf_1$ in Example~\ref{ex.ptc}. The conditions $m>0 \land n<m$ and $m<n$ exclude each other, but they are not exhaustive: we do not know what happens when $m=n$. In point transition systems (restricted forms of point transition clusters) the conditions must define a partition and the point transitions must be of a restricted form. 
\begin{definition}[partition\index{partition}]\label{def.partition}
Let $\Ccal = \{C_1,\ldots,C_\alpha\}$ be a set of conditions. $\Ccal$ is called a {\em partition} if $C_1 \lor \cdots \lor C_\alpha$ is valid and for all $i,j \in \{1,\ldots,\alpha\}$ such that $i \neq j$ we have that $C_i \land C_j$ is unsatisfiable.
\end{definition} 

\begin{example}\label{ex.if-then-else-part}
Take 
$$\Ccal = \{x>0 \land y>0,\ x>0 \land y=0,\ x=0 \land y>0,\ x=0 \land y=0 \}$$
from Example~\ref{ex.pts0}. Then  $\Ccal$ is a partition. 
\end{example}
\begin{definition}[regular point transition\index{regular point transition}]\label{def.regularpt}
Let $t\colon (\delta,p) \to \Lcal\colon C$  be a point transition in a point transition cluster and $A(\delta) = \alpha$. $t$ is called {\em regular} if 
\begin{itemize}
\item $p = (n_1,\ldots,n_\alpha)$ for distinct parameters $n_1,\ldots,n_\alpha$,
\item for all $(\delta',q) \in \Lcal$ $V(q) \IN \{n_1,\ldots,n_\alpha\}$,
\item $V(C) \IN  \{n_1,\ldots,n_\alpha\}$, where $V(C)$ denotes the set of parameters occurring in $C$.
\end{itemize}
\end{definition}
Note that the first point transition in the point transition cluster $\Pbf_1$ in Example~\ref{ex.ptc} is not regular as $p=(m+1,n)$, but it should be of the form $(m,n)$.
\begin{definition}[point transition system\index{point transition system}]\label{def.pts}
Let $\Pbf$ be a point transition cluster and $\delta_0 \in \Delta(\Pbf)$. Then the tuple 
$\Pbf^*\colon (\delta_0,\Delta^*,\Delta_e,\Pbf)$ is called a {\em point transition system} if the following conditions are fulfilled:
\begin{itemize}
\item $\delta_0 \in \Delta^*$.
\item $\Delta_e \IN \Delta^*$.
\item $\Delta^* = \Delta(\Pbf)$.
\item All point transitions in $\Pbf$ are regular.
\item Let $t_1,t_2 \in \Pbf$ such that $l(t_1) = (\delta,p)$ and $l(t_2) = (\delta,q)$; then $p=q$ - i.e. every left-hand side of a point transition corresponding to $\delta$ has the same point. We call $p$ the {\em source} of $\delta$.
\item Let $\Pbf(\delta) = \{(\delta,p) \to \Lcal_1\colon C_1,\ldots,(\delta,p) \to \Lcal_\alpha\colon C_\alpha\}$ be the set of all point transitions $t$ in $\Pbf$ with $l(t) = (\delta,p)$. Then, for $\Pbf(\delta) \neq \emptyset$, $\{C_1,\ldots,C_\alpha\}$ is a partition.
\item $\Pbf(\delta_0) \neq \emptyset$.
\item $\Pbf(\delta) = \emptyset$ for $\delta \in \Delta_e$.
\end{itemize}
The label $\delta_0$ is called the {\em start label} of $\Pbf^*$, any  $\delta \in \Delta_e$ is called an {\em end label} of $\Pbf$.
\end{definition}
\begin{example}\label{ex.pts}
Consider Example~\ref{ex.pts0}. We reformulate the point transition system given there such that it is closer to the formal definition above. For $f$ we choose $\delta_0$, $\delta_1$ for $g$ and $\delta_e$ for "end". Then $\Delta^* =\{\delta_0,\delta_1,\delta_e\}$, $\Delta_e = \{\delta_e\}$ and 
\begin{eqnarray*}
\Pbf(\delta_0) &=& \{(\delta_0,(x,y)) \to \{(\delta_0,(p(x),y)), (\delta_1,(x,p(y))\} \colon x>0 \land y>0,\\
                          & & (\delta_0,(x,y)) \to \{(\delta_e,(x,y)))\} \colon x>0 \land y=0,\\
                         & & (\delta_0,(x,y)) \to \{(\delta_e,(x,y))\} \colon x=0 \land y>0,\\
                         & & (\delta_0,(x,y)) \to \{(\delta_e,(x,y))\} \colon x=0 \land y=0 \}.\\[1ex]
\Pbf(\delta_1) &=& \{(\delta_1,(x,y)) \to \{(\delta_0,(x,p(y)))\}\colon y>0,\\
                           & & (\delta_1,(x,y)) \to \{(\delta_e,(x,y))\} \colon y=0\}.\\[1ex]
\Pbf(\delta_e) &=& \emptyset.
\end{eqnarray*}
Indeed, all point transitions are regular and the conditions belonging to $\Pbf(\delta_0)$ and $\Pbf(\delta_1)$ define a partition.
\end{example}
Every point transition system defines a computation with respect to parameter assignments. Such a computation can be described as a (possibly infinite) tree. If $\sigma$ is a parameter assignment and $\Pbf^*$ is a point transition system we write  $\Pbf^*[\sigma]$ for the computation tree defined by $\Pbf^*$ under the parameter assignment $\sigma$. For a formal definition of $\Pbf^*[\sigma]$ see~\cite{LCL.2026}.  $\Pbf^*[\sigma]$ may be finite or infinite; in the latter case the computation is nonterminating. Indeed, 
$\Pbf^*[\sigma]$ may be finite, but infinite for another $\sigma'$. The desirable property is the termination for all $\sigma \in \Scal$, where $\Scal$ denotes the set of all parameter assignments. 
\begin{definition}[terminating point transition system]\label{def.ptsterminate}
Let $\Pbf^*$ be a point transition system. $\Pbf^*$ is called {\em terminating} if, for all $\sigma \in \Scal$, $\Pbf^*[\sigma]$ is finite.
\end{definition}
In order to ensure the termination of a point transition system some syntactic criteria are necessary. To this aim we will define first an ordering on the labels and then a well-founded  ordering on points. These orderings will then be combined to a well-founded ordering on labeled points. 
\begin{definition}\label{def.label-relations}
Let $\Pbf^*\colon (\delta_0,\Delta^*,\Delta_e,\Pbf)$ be a point transition system and $\delta \in \Delta^*$ such that 
$$\Pbf(\delta) = \{(\delta,p) \to \Lcal_1\colon C_1,\ldots,(\delta,p) \to \Lcal_\alpha\colon C_\alpha\}.$$
We define $\Ccal(\delta) = \{\delta' \mid \delta' \mbox{ occurs in } \Lcal_1 \union \cdots \union \Lcal_\alpha\}$.\\[1ex]
$\Ccal(\delta)$ can be considered as the set of all labels which can be called from $\delta$. We define
\begin{itemize}
\item $\delta \redPbf \delta'$ if $\delta' \in \Ccal(\delta)$ ($\delta$ calls $\delta'$) and 
\item $\redPbfs$ as the transitive closure of $\redPbf$.
\item Assume $\delta \redPbfs \delta'$ and $\delta' \redPbfs \delta$ then we define $\delta \eqPbf \delta'$.
\end{itemize}
\end{definition}
$\delta \eqPbf \delta$ means that $\delta$ calls itself either directly (i.e. $\delta \redPbf \delta$) or over different $\delta_1,\ldots,\delta_m$ such that $\delta \redPbf \delta_1 \redPbf \cdots \redPbf \delta_m \redPbf \delta$ and we have cyclic calls. It is easy to see that for all these $\delta_m$ we have $\delta_m \eqPbf \delta$. Indeed, we have
\begin{proposition}\label{prop.eqPbf}
Let $\Pbf^*\colon (\delta_0,\Delta^*,\Delta_e,\Pbf)$ be a point transition system. Then 
\begin{itemize}
\item[1.] if $\delta \eqPbf \delta'$ then $\delta \eqPbf \delta$ and $\delta' \eqPbf \delta'$,
\item[2.] $\eqPbf$ is symmetric,
\item[3.] $\eqPbf$ is transitive.
\end{itemize}
\end{proposition}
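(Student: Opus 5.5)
The proof only needs the definitions: $\redPbfs$ is the transitive closure of $\redPbf$, and $\delta \eqPbf \delta'$ means $\delta \redPbfs \delta'$ and $\delta' \redPbfs \delta$. The one fact I will use is that a transitive closure is transitive. If $\delta_1 \redPbfs \delta_2$ and $\delta_2 \redPbfs \delta_3$, the two finite call chains can be concatenated into a chain $\delta_1 \redPbf \cdots \redPbf \delta_3$. Hence $\delta_1 \redPbfs \delta_3$. I verify the three items in the order 2, 3, 1, since item 1 follows from the other two.

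For item 2 (symmetry), the defining condition of $\delta \eqPbf \delta'$ is a conjunction of two statements that are swapped when $\delta$ and $\delta'$ are exchanged. So $\delta \eqPbf \delta'$ holds exactly when $\delta' \eqPbf \delta$ holds.

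For item 3 (transitivity), assume $\delta \eqPbf \delta'$ and $\delta' \eqPbf \delta''$. Then $\delta \redPbfs \delta'$ and $\delta' \redPbfs \delta''$, which give $\delta \redPbfs \delta''$ by transitivity of $\redPbfs$. Likewise $\delta'' \redPbfs \delta'$ and $\delta' \redPbfs \delta$ give $\delta'' \redPbfs \delta$. Together these two facts are exactly $\delta \eqPbf \delta''$.

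For item 1, suppose $\delta \eqPbf \delta'$. By item 2 we also have $\delta' \eqPbf \delta$. Applying item 3 to the pair $\delta \eqPbf \delta'$, $\delta' \eqPbf \delta$ yields $\delta \eqPbf \delta$. Applying it to $\delta' \eqPbf \delta$, $\delta \eqPbf \delta'$ yields $\delta' \eqPbf \delta'$. One could also argue directly: $\delta \redPbfs \delta' \redPbfs \delta$ gives $\delta \redPbfs \delta$, and symmetrically for $\delta'$.

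There is no real obstacle. The only point worth stating is that $\redPbfs$ is the transitive and not the reflexive-transitive closure. Consequently $\delta \eqPbf \delta$ is not automatic: it holds only when $\delta$ lies on a cycle of calls. This is why item 1 is stated conditionally, and why $\eqPbf$ is an equivalence relation only on the labels involved in cyclic calls, not on all of $\Delta^*$.
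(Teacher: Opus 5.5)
Your proof is correct and follows essentially the same route as the paper: everything reduces to transitivity of $\redPbfs$, and symmetry is immediate from the definition. Item 1 can be obtained either directly (the paper's argument, which you also give) or from items 2 and 3. In item 3 you also check the reverse direction $\delta'' \redPbfs \delta$, which the paper's proof leaves implicit.
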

\begin{proof}
\begin{itemize}
\item[1.] By $\delta \redPbfs \delta'$ and $\delta' \redPbfs \delta$ and by transitivity of $\redPbfs$.
\item[2.] Trivial by definition.
\item[3.] $\eqPbf$ is transitive: assume we have $\delta \eqPbf \delta'$ and $\delta' \eqPbf \delta''$; then we also have 
$\delta \redPbfs \delta'$ and $\delta' \redPbfs \delta''$ and $\redPbfs$ is transitive.
\end{itemize}
\end{proof}
Note that $\eqPbf$ is not an equivalence relation on $\Delta^*$ as, in case $\Delta_e \neq \emptyset$ - which is the only nontrivial case, 
$\delta \not \eqPbf \delta$ for $\delta \in \Delta_e$. Indeed, there is no $\delta'$ such that $\delta \redPbf \delta'$. \\[1ex]
We can now define a call-ordering on the labels of a point transition system:
\begin{definition}\label{def.ltPbf}
Let $\Pbf^*\colon (\delta_0,\Delta^*,\Delta_e,\Pbf)$ be a point transition system and $\delta,\delta' \in \Delta^*$. We define 
\begin{itemize}
\item $\delta' \ltPbf \delta$ if $\delta \redPbfs \delta'$ and $\delta' \not \redPbfs \delta$.
\end{itemize}
\end{definition} 
\begin{proposition}\label{prop.ltPbf}
Let $\Pbf^*$ be a point transition system. Then the relation $\ltPbf$ is transitive and irreflexive.
\end{proposition}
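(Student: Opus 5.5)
The plan is to derive both properties directly from the definition of $\ltPbf$ (Definition~\ref{def.ltPbf}), using only that $\redPbfs$ is the transitive closure of $\redPbf$ and is therefore transitive. No property of point transition systems beyond Definition~\ref{def.label-relations} is needed. The statement is really a general fact about the strict part of a preorder-like relation.

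\emph{Irreflexivity.} Suppose $\delta \ltPbf \delta$. By definition this requires both $\delta \redPbfs \delta$ and $\delta \not\redPbfs \delta$. These contradict each other, so $\delta \ltPbf \delta$ holds for no $\delta \in \Delta^*$. Note that this works even for labels that do not call themselves, such as end labels, where already the first conjunct fails.

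\emph{Transitivity.} Assume $\delta'' \ltPbf \delta'$ and $\delta' \ltPbf \delta$. The hypotheses are:
\begin{itemize}
\item $\delta \redPbfs \delta'$ and $\delta' \not\redPbfs \delta$;
\item $\delta' \redPbfs \delta''$ and $\delta'' \not\redPbfs \delta'$.
\end{itemize}
Transitivity of $\redPbfs$ gives $\delta \redPbfs \delta''$, which is the first half of $\delta'' \ltPbf \delta$.

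For the second half we argue by contradiction. Suppose $\delta'' \redPbfs \delta$. Together with $\delta \redPbfs \delta'$ and transitivity of $\redPbfs$, this yields $\delta'' \redPbfs \delta'$, contradicting the hypothesis $\delta'' \not\redPbfs \delta'$. Hence $\delta'' \not\redPbfs \delta$, and so $\delta'' \ltPbf \delta$.

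There is no real obstacle here. The only point requiring care is the second half of transitivity, where we must choose which non-reachability hypothesis to contradict. Using $\delta'' \not\redPbfs \delta'$ with the composed path $\delta'' \redPbfs \delta \redPbfs \delta'$ works. Alternatively, one could compose $\delta' \redPbfs \delta'' \redPbfs \delta$ to contradict $\delta' \not\redPbfs \delta$; either choice suffices.
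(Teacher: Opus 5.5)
Your proof is correct and follows the paper's argument: irreflexivity comes from the two contradictory conjuncts, and transitivity comes from composing $\redPbfs$-paths for the first conjunct and deriving a forbidden reachability (your $\delta'' \redPbfs \delta'$) from the assumption $\delta'' \redPbfs \delta$. The only difference is cosmetic: you unfold $\ltPbf$ in the orientation actually fixed by Definition~\ref{def.ltPbf}, whereas the paper's write-up reads it as the converse relation, which is harmless because transitivity and irreflexivity are preserved under taking the converse.
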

\begin{proof}
\begin{itemize}
\item[1.] $\ltPbf$ is transitive: Assume that $\delta \ltPbf \delta'$ and $\delta' \ltPbf \delta''$. Then by transitivity of $\redPbfs$ we obtain 
$\delta  \redPbfs \delta''$. We have to show that $\delta'' \not \redPbfs \delta$. So assume that $\delta'' \redPbfs \delta$, then we have 
$\delta'' \redPbfs \delta$ and $\delta \redPbfs \delta'$ and therefore $\delta'' \redPbfs \delta'$ by transitivity. But this contradicts 
$\delta' \ltPbf \delta''$. 
\item[2.] $\ltPbf$ is irreflexive: obvious as $\delta \redPbfs \delta$ and $\delta \not \redPbfs \delta$ is contradictory.
\end{itemize}
\end{proof}
\begin{example}\label{ex.label-relations}
Consider Example~\ref{ex.pts}. We have $\Delta^* =\{\delta_0,\delta_1,\delta_e\}$, $\Delta_e = \{\delta_e\}$ and the relations 
$$\delta_0 \eqPbf \delta_1,\ \delta_e \ltPbf \delta_0,\ \delta_e \ltPbf \delta_1.$$
\end{example}
We have defined an ordering on the labels of a point transition system, but in order to prove termination we need an ordering on labeled points. We first consider tuples over parameter-free points, i.e. tuples of numerals. On such sets of tuples there exist well-founded orderings which we denote as $<_l$.  An example  $<_l$ is, e.g., the lexicographic ordering (via the standard ordering on numerals). Combining the orderings $<_l$ and $\ltPbf$ we could define 
\begin{eqnarray*}
(\delta',p') < (\delta,p) &\mbox{ if either }& \delta' \ltPbf \delta\\ 
                                         &\mbox{ or } & \delta' \eqPbf \delta \mbox{ and   } p' <_l p.
\end{eqnarray*}
But for defining termination conditions for point transition systems we have to extend the ordering $<_l$ on ground tuples to orderings on tuples over $T_0$. As our point transitions are conditional we have to take into account the conditions.
\begin{definition}\label{def.C-order}
Let $\Pbf^*$ be a point transition system such that $(\delta,p)  \to \Lcal\colon C \in \Pbf$ and $(\delta',p') \in \Lcal$. We define 
$(\delta',p') <_C (\delta,p)$ if  
\begin{itemize}
\item either $\delta' \ltPbf \delta$ or 
\item $\delta' \eqPbf \delta$ and $\sigma(p')\Eval <_l \sigma(p)\Eval$ for all $\sigma \in \Scal$ such that $\sigma[C] = \top$,
\end{itemize}
where $<_l$ is a Noetherian ordering of tuples in $T^G_0$.
\end{definition}
We call a system which admits such an ordering {\em canonic}. 
\begin{definition}[canonic point transition systems]\label{def.pts-canonic} \mbox{ }\\
Consider a point transition system $\Pbf^*\colon (\delta_0,\Delta^*,\Delta_e,\Pbf)$.
$\Pbf^*$ is called {\em canonic} if 
$$\mbox{for all } (\delta,p)  \to \Pcal\colon C \in \Pbf \mbox{ we have } (\delta',p') <_C (\delta,p) \mbox{ for all }(\delta',p') \in \Pcal$$ 
for the ordering $<_C$ defined in Definition~\ref{def.C-order}.
\end{definition}
\begin{theorem}\label{theo.canonic-terminate}
Canonic point transition systems are terminating.
\end{theorem}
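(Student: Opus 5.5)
The plan is to argue by contradiction using König's lemma and a well-founded ordering on \emph{ground} labeled points. Fix a canonic point transition system $\Pbf^*\colon (\delta_0,\Delta^*,\Delta_e,\Pbf)$ and a parameter assignment $\sigma \in \Scal$. The nodes of the computation tree $\Pbf^*[\sigma]$ are ground labeled points $(\delta,q)$, with $q$ a tuple of numerals. The root is obtained by evaluating the source of $\delta_0$ under $\sigma$.

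The children of a node $(\delta,q)$ are computed as follows. Let $p=(n_1,\ldots,n_\alpha)$ be the source of $\delta$; by regularity the $n_i$ are distinct parameters. Put $\sigma_q(n_i)=q_i$. Since the conditions of $\Pbf(\delta)$ form a partition, there is exactly one transition $(\delta,p)\to\Lcal\colon C$ with $\sigma_q[C]=\top$. Regularity gives $V(C)$ and $V(p')$ contained in $\{n_1,\ldots,n_\alpha\}$, so everything is determined by $\sigma_q$. The children are then the points $(\delta',\sigma_q(p')\Eval)$ for $(\delta',p')\in\Lcal$.

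Since $\Lcal$ is finite, the tree is finitely branching. So if $\Pbf^*[\sigma]$ were infinite, König's lemma would give an infinite path $(\delta_0,q_0),(\delta_1,q_1),\ldots$.

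Along this path every step satisfies $\delta_i \redPbf \delta_{i+1}$, and by canonicity $(\delta_{i+1},p_{i+1}) <_C (\delta_i,p_i)$ for the transition used. Instantiating Definition~\ref{def.C-order} with $\sigma_{q_i}$, which satisfies $C$, each step has one of two forms:
\begin{itemize}
\item a label drop, $\delta_{i+1}\ltPbf\delta_i$;
\item a point drop, $\delta_{i+1}\eqPbf\delta_i$ and $q_{i+1} <_l q_i$.
\end{itemize}

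\textbf{Finitely many label drops.} Suppose step $i$ is a label drop. For every $j>i$ we have $\delta_{i+1}\redPbfs\delta_j$ or $\delta_j=\delta_{i+1}$. Now suppose some later label $\delta_j$ equals some earlier label $\delta_k$ with $k\le i$. Then $\delta_{i+1}\redPbfs\delta_k\redPbfs\delta_i$, or directly $\delta_{i+1}=\delta_k$ when $k=i$. Either way this contradicts $\delta_i\not\!\!\!\!\ \ \redPbfs$-reachability from $\delta_{i+1}$, i.e.\ it contradicts $\delta_{i+1}\ltPbf\delta_i$. So after each label drop the set of labels still available strictly shrinks. Since $\Delta^*$ is finite, there are at most $|\Delta^*|$ label drops.

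\textbf{Contradiction with well-foundedness.} From some index $N$ onward every step is therefore a point drop, giving an infinite descending chain $q_N >_l q_{N+1} >_l \cdots$ of ground tuples. This contradicts the assumption that $<_l$ is Noetherian. Hence $\Pbf^*[\sigma]$ is finite for every $\sigma$, i.e.\ $\Pbf^*$ is terminating.

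\textbf{Main obstacle.} The delicate step is the transfer from the symbolic condition ``$\sigma(p')\Eval<_l\sigma(p)\Eval$ for all $\sigma$ with $\sigma[C]=\top$'' to the concrete nodes of the tree. This needs the regularity of transitions, so that every ground point $q$ is $\sigma_q(p)\Eval$ for a suitable assignment $\sigma_q$ and all parameters involved are among the $n_i$. I would check this carefully against the formal definition of $\Pbf^*[\sigma]$ in~\cite{LCL.2026}.

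A minor point concerns arities. Points with $\eqPbf$-related labels may have different arities, so $<_l$ must be read as a single Noetherian ordering on all ground tuples, as the definition assumes.
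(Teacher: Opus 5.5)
Your proof is correct and takes the route the paper intends. The paper's own proof only points to \cite{LCL.2026}, but the discussion before Definition~\ref{def.C-order} sketches the same combined ordering you use: labels compared by $\ltPbf$, and within an $\eqPbf$-class points compared by the single Noetherian $<_l$. Your argument realizes it cleanly. Every call is either a label drop or a point drop, there are at most $|\Delta^*|$ label drops, and K\"onig's lemma on the finitely branching tree $\Pbf^*[\sigma]$ yields the infinite descending $<_l$-chain.

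One slip in the label-drop bookkeeping: the equality case comes from $j=i+1$, which gives $\delta_{i+1}=\delta_k$, not from $k=i$. The resulting possibility $\delta_{i+1}=\delta_i$ is excluded because the call edge $\delta_i\redPbf\delta_{i+1}$ then reads $\delta_{i+1}\redPbf\delta_i$, contradicting $\delta_{i+1}\ltPbf\delta_i$.
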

\begin{proof}
In~\cite{LCL.2026}.
\end{proof}
As a consequence a recursive definition having a corresponding canonic point transition is well defined for all parameter assignments, i.e. for any inputs. As an example consider the function definitions in Example~\ref{ex.pts0}; the corresponding point transition is canonical (by choosing $<_l$ as the lexicographic ordering) and thus the function $f$ is well-defined for all inputs, i.e. for all parameter assignments  $\sigma$.

\subsection{{\LK}-Schemata}\label{subsec.LKschema}

Proof schemata are syntactic expressions representing infinite sequences of proofs indexed by parameters, that is, free variables over the natural numbers that may be instantiated by numerals. In this paper, we consider a restricted form of proof schemata in which only proofs, but neither terms nor formulas, are defined recursively. This contrasts with the more general frameworks of~\cite{CLL.2021} and~\cite{LCL.2026}, where terms and formulas may also have inductive definitions. Moreover, throughout this paper we work in first-order logic over the natural numbers, rather than over arbitrary domains.
\begin{definition}\label{def.variables}
We consider two classes of free variables over $\omega$, the class of {\em ordinary free variables} $V_f$ and the set of {\em parameters} $\Ncal$; we assume $V_f$ and $\Ncal$ to be disjoint. The set of {\em bound variables} are described by $V_b$.
\end{definition}
On the basis of these variables we will define a schematic $\Ncal$-logic. As syntactic material serve $\bar{0}$ (the constant symbol for $0$), $s$ (the successor function) and $p$ (the predecessor function). Moreover, an infinite set of function symbols $\FS$ such that 
$\FS = \Union^\infty_{i=0}\FS_i$, $\FS_i\colon \omega^i \to \omega$ and $\FS \intrs \{\bar{0},s,p\} = \emptyset$. We also use an infinite set of predicate symbols $\PS$ such that $\PS = \Union^\infty_{i=0}\PS_i$ and $\PS_i\colon \omega^i \to o$.
\begin{definition}\label{def.N-terms}
We define the set $\TN$ of {\em $\Ncal$-semi-terms}:
\begin{itemize}
\item $\Ncal \union V_f \union V_b \IN \TN$,
\item $\bar{0} \in \TN$,
\item if $t \in \TN$ then $p(t),s(t) \in \TN$,
\item if $f \in \FS^i$ and $t_1,\ldots,t_i \in \TN$ then $f(t_1,\ldots,t_i) \in \TN$.
\end{itemize}
$\Ncal$-semi-terms not containing bound variables are called {\em $\Ncal$-terms}, and $\Ncal$-terms not containing parameters are simply called {\em terms} and are denoted by $T$. 
\end{definition}
The set of terms $T_0$ defined in Section~\ref{subsec.pts}, which is a subset of $\TN$, will mainly be used for defining proof recursion. Note that, as we subject $\bar{0},s,p$ to the standard semantics, every term in $T^G_0$ can be identified with a numeral.\\[1ex]
Now we have two types of free variables, $V_f$ and $\Ncal$. For the concept of substitution we take the set $V_f$, while substitutions on the domain $\Ncal$ will be defined as parameter assignments. 
\begin{definition}[$V_f$-substitution]\label{def.Vf-subst}
Let $x_1,\ldots,x_k$ be (different) variables in $V_f$ and $t_1,\ldots,t_k$ be terms in $\TN$. Then $\theta\colon \{x_1 \ass t_1,\ldots,x_k \ass t_k\}$ is called a $V_f$-substitution.
\end{definition}
The application of $V_f$-substitutions to terms in $\TN$ is defined in the usual way. 
\begin{definition}\label{def.N-formulas}
The set of {\em $\Ncal$-formulas} $\FN$ is defined as follows:
\begin{itemize}
\item If $P \in \PS_i$ and $t_1,\ldots,t_i$ are $\Ncal$-terms then $P(t_1,\ldots,t_i) \in \FN$,
\item if $F \in \FN$ then $\neg F \in \FN$,
\item if $F_1,F_2 \in \FN$ then $F_1 \land F_2,\ F_1 \lor F_2,\  F_1 \impl F_2 \in \FN$, 
\item if $x \in \Ncal \union V_f$, $F \in \FN$ and $y \in V_b$ then $\forall y.F\{x \ass y\},\ \exists y.F\{x \ass y\} \in \FN$.
\end{itemize}
An $\Ncal$-formula not containing parameters is simply called a {\em formula}.
\end{definition}
The semantics of $\TN$ and $\FN$ is based on so-called parameter assignments, a replacement of parameters by numerals. We denote the set of numerals by $\Num$.
\begin{definition}\label{def.par-assign}
A {\em parameter assignment} is a function $\Ncal \to \Num$. The set of all parameter assignments is denoted by $\Scal$. 
\end{definition}
Parameter assignments are usually denoted by $\sigma$. We define the application of $\sigma$ to terms in  $\TN$.
\begin{definition}\label{def.sigmaT}
Let $t \in \TN$ and $\sigma \in \Scal$; we define 
\begin{itemize}
\item for $n \in \Ncal$ $\sigma(n)$ is already defined. 
\item $\sigma(x) = x$ for $x \in V_f$, 
\item $\sigma(x) = x$ for $x \in V_b$,
\item $\sigma(\bar{0}) = \bar{0}$,
\item if $t \in \TN\setminus T^G_0$ then $\sigma(p(t)) = p(\sigma(t))$ ,$\sigma(s(t)) = s(\sigma(t))$,
\item if $t \in T^G_0$ then $\sigma(p(t)) = p(\sigma(t))\Eval$, $\sigma(s(t)) = s(\sigma(t))$, where $p(\sigma(t))\Eval$ is the numeral corresponding to the term $p(\sigma(t))$.
\item if $f \in \FS^i$ and $t_1,\ldots,t_i \in \TN$ then $\sigma(f(t_1,\ldots,t_i)) =  f(\sigma(t_1),\ldots,\sigma(t_i))$.
\end{itemize}
\end{definition}
\begin{example}
Let $f \in \FS_3$, $g \in \FS_1$, $n,m \in \Ncal$, $x \in V_f,\ y \in V_b$, $\sigma(n) = \bar{1}, \sigma(m) = \bar{3}$ (where we write $\bar{\alpha}$ for $s^\alpha(\bar{0})$). Then 
$$\sigma(f(g(x),p(g(n)),p(m))) = f(g(x),p(g(\bar{1})),\bar{2}).$$
\end{example}
It is evident that a parameter assignment maps an $\Ncal$-term ($\Ncal$-semi-term) to a term (semi-term) and thus defines a semantics on $\TN$.
\begin{proposition}\label{prop.sem-TN}
Let $\sigma$ be a parameter assignment and $t$ be an $\Ncal$-term ($\Ncal$-semi-term). Then $\sigma(t)$ is a term (semi-term).
\end{proposition}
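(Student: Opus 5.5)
We argue by structural induction on $t$, following the clauses of Definition~\ref{def.N-terms}, and show that $\sigma(t)$ contains no parameters. Since $\sigma$ leaves variables in $V_f \union V_b$ untouched, it also follows that $\sigma(t)$ contains no new bound variables. Hence an $\Ncal$-term (no bound variables) is sent to a term, and an $\Ncal$-semi-term to a semi-term, i.e.\ a parameter-free element of $\TN$.

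For the base cases: if $t = n \in \Ncal$, then $\sigma(n)$ is a numeral $s^\alpha(\bar{0})$, which is parameter-free and variable-free. If $t \in V_f \union V_b$ or $t = \bar{0}$, then $\sigma(t) = t$, which contains no parameters. The case $t = f(t_1,\ldots,t_i)$ with $f \in \FS^i$ is immediate from the induction hypothesis, since $\sigma(t) = f(\sigma(t_1),\ldots,\sigma(t_i))$ and no parameter is introduced.

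The cases $s(t')$ and $p(t')$ require splitting according to Definition~\ref{def.sigmaT}. If $t' \notin T^G_0$, then $\sigma(p(t')) = p(\sigma(t'))$ and $\sigma(s(t')) = s(\sigma(t'))$, and the induction hypothesis applies directly. If $t' \in T^G_0$, we use that every term of $T^G_0$ evaluates to a numeral, as noted in Section~\ref{subsec.pts}. Then $p(\sigma(t'))\Eval$ is a numeral, and $s(\sigma(t'))$ is parameter-free because $\sigma(t')$ is.

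The main point needing care is that Definition~\ref{def.sigmaT} does not state explicitly that it is well defined, because of the case split on membership in $T^G_0$. We therefore note that the clauses are exhaustive and mutually exclusive. We also check that $\sigma$ commutes with the structure, so that the induction is legitimate. Apart from this, the statement is routine, and we expect the argument to be essentially a one-line induction.
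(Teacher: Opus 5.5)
Your proof is correct and takes the same route as the paper: a structural induction over the clauses of Definition~\ref{def.sigmaT}, showing that every right-hand side is a parameter-free semi-term, and contains no bound variables when $t$ has none. Your extra discussion of the $T^G_0$ case split and of well-definedness only spells out what the paper's one-line argument leaves implicit.
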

\begin{proof}
A simple inductive argument shows that the right-hand-sides of the equations in Definition~\ref{def.sigmaT} are always semi-terms; 
if the term  $t$ to which $\sigma$ is applied is an $\Ncal$-term then $\sigma(t)$ is a term.
\end{proof} 
Parameter assignments on $\Ncal$-formulas behave homomorphically:
\begin{definition}\label{def.sigmaFN}
Let $\sigma$ be a parameter assignment, then 
\begin{itemize}
\item If $P \in \PS^i$ and $t_1,\ldots,t_i$ are $\Ncal$-terms then $\sigma(P(t_1,\ldots,t_i)) = P(\sigma(t_1),\ldots,\sigma(t_n))$,
\item $\sigma(\neg F) = \neg \sigma(F)$,
\item $\sigma(F_1 \circ F_2) = \sigma(F_1) \circ \sigma(F_2)$ for $\circ \in \{\land,\lor,\impl\}$, 
\item $\sigma(Q y.F) = Q y.\sigma(F)$ for $Q \in \{\forall, \exists\}$.
\end{itemize}
\end{definition}
\begin{example}
Let $m,n \in \Ncal$, $f \in \FS_1$ and $P,Q \in \PS_4$,  such that $\sigma(m) = \bar{0}$, $\sigma(n) = \bar{2}$ then 
\[
\begin{array}{l}
\sigma(\forall y.\exists z(P(f(y),z,s(m),p(n)) \impl Q(y,f(z),m,n))) =\\ 
\forall y.\exists z(P(f(y),z,\bar{1},\bar{1}) \impl Q(y,f(z),\bar{0},\bar{2})).
\end{array}
\]
\end{example}
\begin{proposition}\label{prop.semFN}
Let $F$ be an $\Ncal$-formula and $\sigma$ a parameter assignment. Then $\sigma(F)$ is a formula
\end{proposition}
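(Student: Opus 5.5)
The plan is a structural induction on the construction of $F$ according to Definition~\ref{def.N-formulas}, with the induction hypothesis covering all proper subformulas. To handle the quantifier case smoothly, I would first strengthen the claim to \emph{$\Ncal$-semi-formulas}, that is, expressions built by the same clauses but whose atoms may contain bound variables from $V_b$. The strengthened statement is that $\sigma(F)$ contains no parameters and has exactly the same bound-variable structure as $F$. In other words, $\sigma$ maps semi-formulas to parameter-free semi-formulas and closed ones to closed ones. The statement of the proposition is then the special case where $F$ is an $\Ncal$-formula.

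\textbf{Atomic case.} Let $F = P(t_1,\ldots,t_i)$. By Definition~\ref{def.sigmaFN}, $\sigma(F) = P(\sigma(t_1),\ldots,\sigma(t_i))$. By Proposition~\ref{prop.sem-TN}, each $\sigma(t_j)$ is a term (respectively a semi-term), and so contains no parameters. Hence $\sigma(F)$ is a parameter-free atom, which is a formula (respectively a semi-formula).

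\textbf{Propositional cases.} For $\neg$, $\land$, $\lor$ and $\impl$, Definition~\ref{def.sigmaFN} commutes $\sigma$ with the connective. The result then follows directly from the induction hypothesis, together with the closure clauses of Definition~\ref{def.N-formulas} applied with no parameters present.

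\textbf{Quantifier case (main obstacle).} Here $F = Q y.G\{x \ass y\}$ with $x \in \Ncal \union V_f$, and $\sigma(F) = Q y.\sigma(G\{x\ass y\})$. Applying the induction hypothesis to the semi-formula $G\{x \ass y\}$ shows that $\sigma(G\{x\ass y\})$ is parameter-free and has $y$ as its only new unbound occurrence. To exhibit $\sigma(F)$ in the syntactic shape required by the last clause of Definition~\ref{def.N-formulas}, I would choose a fresh $z \in V_f$ not occurring in $F$. Then I would show $\sigma(G\{x\ass y\}) = \sigma(G\{x \ass z\})\{z \ass y\}$. This holds because $\sigma$ fixes $y$, $z$ and all of $V_f \union V_b$ by Definition~\ref{def.sigmaT}, and it acts compositionally on terms. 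The $V_f$-substitution of $z$ therefore commutes with $\sigma$; a short auxiliary induction over terms establishes this.

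With the commutation in hand, $\sigma(G\{x\ass z\})$ is a formula by the induction hypothesis, since it has smaller depth than $F$. Hence $Q y.\sigma(G\{x\ass z\})\{z\ass y\}$ is a formula by the quantifier clause. The commutation lemma is the only step requiring any care: one must check that replacing a parameter $x$ by $y$ before evaluation does not interfere with the evaluation rule for $p$ on ground $T_0$-terms. The point is that a term containing $y$ is never in $T^G_0$, so it is handled by the homomorphic clause of Definition~\ref{def.sigmaT}. Once this is verified, everything else is routine.
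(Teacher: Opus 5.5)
Your argument is correct and is the routine induction the paper compresses into \emph{trivial}: it is driven by the homomorphic clauses of Definition~\ref{def.sigmaFN} together with Proposition~\ref{prop.sem-TN}. In the quantifier clause with $x\in\Ncal$, $\sigma$ does not commute with the binding replacement, and your device of renaming $x$ to a fresh $z\in V_f$ handles a real subtlety the paper glosses over; note only that this makes the induction one on depth rather than on immediate subformulas, and that it relies on the equally routine fact that $G\{x\ass z\}$ is again an $\Ncal$-formula of the same depth as $G$.
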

\begin{proof}
trivial.
\end{proof}
For defining a version of {\LK} on this $\Ncal$-logic we need the extension of formulas to $\Ncal$-sequents. 
\begin{definition}\label{def.Nsequent}
Let $\Gamma,\Delta$ be multisets of $\Ncal$-formulas. Then $\Gamma \seq \Delta$ is an $\Ncal$-sequent.
\end{definition}
The extension of a parameter assignment to $\Ncal$-sequents is straightforward: Let $S$ be $A_1,\ldots,A_i \seq B_1,\ldots,B_j$. Then 
$$\sigma(S) = \sigma(A_1),\ldots,\sigma(A_i) \seq \sigma(B_1),\ldots,\sigma(B_j).$$
It follows that $\sigma(S)$ is an (ordinary) sequent. \\[1ex]
We have seen that the interpretation of $\Ncal$-terms and $\Ncal$-formulas via a parameter assignment is almost straightforward and behaves like a substitution. However, if we extend the application of $\sigma$ to proofs we face a problem as the following example shows. 
\begin{example}
Let $n \in \Ncal$ and $\varphi$ be the following {\LK}-proof (where $x \in V_b$, $P,Q \in \PS_1$)
\[
\infer[\forall\colon r]{\forall x.P(x) \seq \forall x(P(x) \lor Q(x))}
  { \infer[\forall\colon l]{\forall x.P(x) \seq P(n) \lor Q(n)}
   { \infer[\lor\colon r]{P(n) \seq P(n) \lor Q(n)}
     { P(n) \seq P(n)
      }
   }
  }
\]
Now assume that $\sigma(n) = \bar{0}$. Then we obtain $\sigma(\varphi)=$
\[
\infer[\forall\colon r]{\forall x.P(x) \seq \forall x(P(x) \lor Q(x))}
  { \infer[\forall\colon l]{\forall x.P(x) \seq P(\bar{0}) \lor Q(\bar{0})}
   { \infer[\lor\colon r]{P(\bar{0}) \seq P(\bar{0}) \lor Q(\bar{0})}
     { P(\bar{0}) \seq P(\bar{0})
      }
   }
  }
\]
Then, obviously, $\sigma(\varphi)$ is not an {\LK}-proof as the last inference is not a correct $\forall\colon r$-introduction. However, 
if we omit the last inference in $\varphi$ the application of $\sigma$ would indeed result in an {\LK}-proof. We see that we can quantify via $\forall\colon l$ over a parameter, but not via $\forall\colon r$. The case for $\exists\colon l$ corresponds to $\forall\colon r$. We see that strong quantification over parameters ($\forall\colon r$, $\exists\colon l$) is harmful, but weak quantification ($\forall\colon l$, $\exists\colon r$) is not.
\end{example}
\begin{definition}[$\LKN$]\label{def.LK-N}
$\LKN$ is a sequent calculus over $\Ncal$-sequents, where the rules are the same as for {\LK} in Definition~\ref{def.LK} with the exception that $\forall\colon r$ and $\exists\colon l$ have new definitions as shown below.
\[
\infer[\forall\colon r]{\Gamma \seq \Delta, \forall y.A(y)}
   { \Gamma \seq \Delta, A(x)}
\]
where $x \in V_f$, $y \in V_b$ such that $x$ does not occur in $\Gamma,\Delta$. Note that $x$ may not be in $\Ncal$.
\[
\infer[\exists\colon l]{\exists y.A(y), \Gamma \seq \Delta}
  {A(x),\Gamma \seq \Delta}
\]
where the conditions are the same as for $\forall\colon l$.\\[1ex]
$\Acal$ is called an $\Ncal$-axiom  set if $\Acal$ consists of atomic sequents which are closed under $V_f$-substitutions. The definition of $\LKN$-proofs is completely analogous to that of $\LK$-proofs, taking into account the different quantifier rules and the different concept of axiom set. As we consider predicate logic with equality we admit not only tautological atomic sequents as axioms but also all atomic sequents which are valid in predicate logic with equality; in particular we admit reflexivity, transitivity, symmetry and substitution axioms of $=$, i.e. axioms of  the form $s=t, P[s]_\Lambda \seq P[t]_\Lambda$ and  $s=t, P[t]_\Lambda \seq P[s]_\Lambda$, where $\Lambda$ is an occurrence of $s$ which is replaced by $t$. If $\varphi$ is an $\LKN$-proof and $(n_1,\ldots,n_k)$ is a list of all parameters occurring in $\varphi$ we denote $\varphi$ by $\varphi(n_1,\ldots,n_k)$. 
\end{definition}
We can now define the semantics of $\LKN$-proofs via parameter assignments.
\begin{definition}\label{def.sigma-LKN}
Let $\sigma$ be a parameter assignment and $\varphi(n_1,\ldots,n_k)$ be an $\LKN$-proof. Let us assume that $\sigma(n_i) =r_i$ for $
i =1,\ldots,k$ . Then $\sigma(\varphi(n_1,\ldots,n_k))$ (also written as $\varphi(r_1,\ldots,r_k)$) is defined inductively:
\begin{itemize}
\item $\varphi= S$ where $S$ is an $\Ncal$-sequent. Then $\sigma(\varphi) = \sigma(S)$.
\item $\varphi$ is of the form 
\[
\infer[\xi]{S}{\varphi'}
\]
where $\xi$ is a unary rule. Then $\sigma(\varphi)=$
\[
\infer[\xi]{\sigma(S)}{\sigma(\varphi')}
\]
\item $\varphi$ is of the form 
\[
\infer[\xi]{S}{ \varphi_1 & \varphi_2}
\]
where $\xi$ is a binary rule. Then $\sigma(\varphi)=$
\[
\infer[\xi]{\sigma(S)}{ \sigma(\varphi_1) & \sigma(\varphi_2)}
\]
\end{itemize}
\end{definition}
\begin{example}\label{ex.sigma-LKN}
Let $n \in \Ncal$, $y \in V_f$, $x,z \in V_b$ and $\varphi(n)$ be the $\LKN$-proof 
\[
\infer[\forall\colon l]{\forall z\forall x.P(z,x) \seq \forall x(P(n,x) \lor Q(x))}
{
\infer[\forall\colon r]{\forall x.P(n,x) \seq \forall x(P(n,x) \lor Q(x))}
  { \infer[\forall\colon l]{\forall x.P(n,x) \seq P(n,y) \lor Q(y)}
   { \infer[\lor\colon r]{P(n,y) \seq P(n,y) \lor Q(y)}
     { P(n,y) \seq P(n,y)
      }
   }
  }
}
\]
Let $\sigma(n) = \bar{1}$. Then $\sigma(\varphi)(n)= \varphi(\bar{1}) =$
\[
\infer[\forall\colon l]{\forall z\forall x.P(z,x) \seq \forall x(P(\bar{1},x) \lor Q(x))}
{
\infer[\forall\colon r]{\forall x.P(\bar{1},x) \seq \forall x(P(\bar{1},x) \lor Q(x))}
  { \infer[\forall\colon l]{\forall x.P(\bar{1},x) \seq P(\bar{1},y) \lor Q(y)}
   { \infer[\lor\colon r]{P(\bar{1},y) \seq P(\bar{1},y) \lor Q(y)}
     { P(\bar{1},y) \seq P(\bar{1},y)
      }
   }
  }
}
\]
Obviously, $\varphi(\bar{1})$ is an $\LK$-proof.
\end{example}
\begin{proposition}\label{prop.sem-LKN}
Let $\varphi(n_1,\ldots,n_k)$ be an $\LKN$-proof of an $\Ncal$-sequent $S$ (from arbitrary initial sequents) and $\sigma$ be a parameter assignment. Then $\sigma(\varphi(n_1,\ldots,n_k))$ is an $\LK$-proof of $\sigma(S)$.
\end{proposition}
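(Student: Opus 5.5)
The plan is to argue by induction on the structure (equivalently, the height) of the $\LKN$-proof $\varphi$, following the inductive clauses of Definition~\ref{def.sigma-LKN}. The claim to carry through the induction is: for every subproof $\psi$ of $\varphi$ with end-sequent $S'$, the tree $\sigma(\psi)$ is an $\LK$-proof of $\sigma(S')$ from the initial sequents $\sigma(\Acal)$, where $\Acal$ is the set of initial sequents used in $\varphi$. Definition~\ref{def.sigma-LKN} already fixes the shape of $\sigma(\psi)$: the same rule labels sit on $\sigma$ applied pointwise to the sequents. So it suffices to check, rule by rule, that $\sigma$ maps every instance of an $\LKN$ rule to an instance of the same $\LK$ rule. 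Propositions~\ref{prop.sem-TN} and~\ref{prop.semFN} guarantee that all resulting objects are genuine terms, formulas and sequents.

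\textbf{Base case.} If $\psi$ is an initial sequent $S'$, then $\sigma(\psi)=\sigma(S')$, an initial sequent of the instantiated axiom set. For an $\Ncal$-axiom set such as the standard one, or the equality axioms, $\sigma(S')$ is again of the required form: for example $\sigma(A)\seq\sigma(A)$, or an equality axiom, since $\sigma$ commutes with term replacement.

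\textbf{Propositional and structural rules.} Since $\sigma$ acts homomorphically on formulas (Definition~\ref{def.sigmaFN}) and pointwise on multisets, $\sigma((A\circ B)^*,\Gamma\seq\Delta)$ is $(\sigma(A)\circ\sigma(B)),\sigma(\Gamma)\seq\sigma(\Delta)$, and the premises map correspondingly. The same holds for contraction, weakening and cut, because equal formulas have equal images.

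\textbf{Quantifier rules.} These carry the real content. The key lemma is that $\sigma$ commutes with substitution of terms for bound variables:
\[\sigma(A\{y\ass t\}) = \sigma(A)\{y\ass \sigma(t)\}.\]
This is proved by a straightforward induction on $A$ and $t$, using that $\sigma(y)=y$ for $y\in V_b$. One point needs care: in Definition~\ref{def.sigmaT}, $p$ is evaluated only on ground $T_0$-terms. I would verify that this evaluation is compatible with substitution, which holds because bound variables never sit inside a ground $T_0$-subterm.

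With the lemma, weak rules ($\forall\colon l$, $\exists\colon r$) with arbitrary $t\in\TN$ map to weak $\LK$ inferences with the term $\sigma(t)$. For the strong rules, $\LKN$ requires the eigenvariable $x$ to lie in $V_f$, and $\sigma(x)=x$. Hence the premise maps to $\sigma(\Gamma)\seq\sigma(\Delta),\sigma(A)\{y\ass x\}$. The eigenvariable condition is preserved because $\sigma$ introduces no variables of $V_f$: it only replaces parameters by numerals. So if $x$ does not occur in $\Gamma,\Delta,A$, it does not occur in their images either.

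\textbf{Main obstacle.} The genuine obstacle is exactly this strong-quantifier step. The whole restriction of $\LKN$ (eigenvariables not in $\Ncal$) exists to make it go through, so this is where I would spend the care. The substitution-commutation lemma, including the $p$-evaluation subtlety, is the remaining technical point.
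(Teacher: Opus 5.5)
Your overall route is the paper's: induction on the proof, homomorphic treatment of the propositional and structural rules and of cut, the identity $\sigma(A\{y\leftarrow t\})=\sigma(A)\{y\leftarrow\sigma(t)\}$ for the weak quantifier rules, and the observation that eigenvariables of the strong rules lie in $V_f$ and are fixed by $\sigma$. Your explicit check that the eigenvariable condition survives is a detail the paper leaves implicit.

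The step that fails as you describe it is exactly the one you single out. Your justification, that bound variables never sit inside a ground $T_0$-subterm, is true but does not address the problem. The trouble is a non-ground subterm such as $p(y)$ that becomes ground only after $\sigma(t)$ is substituted for $y$. Take $A=P(p(y))$ and $t=n\in\Ncal$ with $\sigma(n)=\bar{3}$. By the paper's evaluation convention (its example gives $\sigma(p(m))=\bar{2}$ for $\sigma(m)=\bar{3}$), $\sigma(A\{y\leftarrow n\})=P(\bar{2})$, whereas $\sigma(A)\{y\leftarrow\sigma(n)\}=P(p(\bar{3}))$. So the image of the $\forall\colon l$ inference from $P(p(n))$ to $\forall y.P(p(y))$ is not literally an $\LK$ inference. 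The same happens with a numeral as instantiating term, even in a parameter-free proof.

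This case is not exotic. The $\forall\colon l$ steps on $H_b$, $H_c$, $H_d$ in the Hydra schema instantiate atoms like $P(pu,ppv)$ with parameters.

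What does hold is the identity modulo evaluation, $\sigma(A\{y\leftarrow t\})=\mathrm{nf}(\sigma(A)\{y\leftarrow\sigma(t)\})$, where $\mathrm{nf}$ replaces every ground $T_0$-subterm by its numeral. This follows by induction on $A$. With it, $\sigma(\varphi)$ is an $\LK$-proof once ground $T_0$-terms are identified with numerals. The paper stipulates this identification (``every term in $T^G_0$ can be identified with a numeral''), and its own proof tacitly relies on it when it asserts the identity without comment. If $\sigma$ were plain substitution without evaluating $p$, your identity would hold literally.

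So you should state the lemma in this normalized form and make the identification explicit, rather than claim syntactic commutation.
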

\begin{proof}
We proceed by induction on the number of proof nodes in $\varphi$:
\begin{itemize}
\item Let $\varphi = S$ for some $\Ncal$-sequent $S$. Then, by Definition~\ref{def.sigma-LKN} $\sigma(\varphi) = \sigma(S)$ and 
$\sigma(\varphi)$ is an $\LK$-proof of $\sigma(S)$.
\item Assume that $\varphi=$
\[
\infer[\xi]{S}
  {\deduce{S'}{(\varphi')}
  }
\]
for a unary rule $\xi$. We have to show that $\sigma(\varphi)$, which by Definition~\ref{def.sigma-LKN} is equal to 
\[
\infer[\xi]{\sigma(S)}
  {\deduce{\sigma(S')}{(\sigma(\varphi'))}
  }
\]
is an $\LK$-proof of $\sigma(S)$. By induction hypothesis $\sigma(\varphi')$ is an $\LK$-proof of $\sigma(S')$. If the inference $\xi$ is either $\land\colon l$ or $\lor\colon r$ then, for $\circ \in \{\land,\lor\}$, $\sigma(A \circ B) = \sigma(A) \circ \sigma(B)$ (by Definition~\ref{def.sigmaFN}) and thus $\xi$ infers the principal formula $\sigma(A) \circ \sigma(B)$ from the auxiliary formulas 
$\sigma(A),\sigma(B)$. A similar argument holds for $\xi = \neg\colon l$ and $\xi = \neg\colon r$. The cases of contraction and weakening are trivial. Indeed, $\sigma$ behaves in a ``homomorphic'' way. For illustration we show the case $\land\colon l$. Here we have $\varphi=$
\[
\infer[\land\colon l]{A \land B,\Gamma \seq \Delta}
 { \deduce{A,B,\Gamma \seq \Delta}{(\varphi')}
 }
\]
and, by Definition~\ref{def.sigma-LKN}, $\sigma(\varphi)=$
\[
\infer[\land\colon l]{\sigma(A \land B), \sigma(\Gamma) \seq \sigma(\Delta)}
 { \deduce{\sigma(A),\sigma(B), \sigma(\Gamma) \seq \sigma(\Delta)}{(\sigma(\varphi'))}
 }
\]
By induction hypothesis and $\sigma(A \land B) = \sigma(A) \land \sigma(B)$ $\sigma(\varphi)$ is an $\LK$-proof the sequent 
$$\sigma(A \land B,\Gamma \seq \Delta).$$
It remains to investigate the case where $\xi$ is an introduction rule of a quantifier. We investigate the cases $\forall\colon l$ and 
$\forall\colon r$ (the existential introductions are symmetric). So let $\varphi=$
\[
\infer[\forall\colon l]{\forall x.A,\Gamma \seq \Delta}
  { \deduce{A\{x \ass t\}, \Gamma \seq \Delta}{(\varphi')}
   }
\]
By Definition~\ref{def.sigma-LKN} we get $\sigma(\varphi)=$
\[
\infer[\forall\colon l]{\sigma(\forall x.A),\sigma(\Gamma) \seq \sigma(\Delta)}
  { \deduce{\sigma(A\{x \ass t\}), \sigma(\Gamma) \seq \sigma(\Delta)}{(\sigma(\varphi'))}
   }
\]
By Definition~\ref{def.sigmaFN} $\sigma(\forall x.A) = \forall x.\sigma(A)$. Moreover, 
$$\sigma(A\{x \ass t\}) = \sigma(A)\{x \ass \sigma(t)\}$$
as $x \in V_b$ and thus $\sigma(x)=x$. Therefore $\sigma(\varphi)$ is an $\LK$-proof.\\[1ex]
Now consider $\varphi=$
\[
\infer[\forall\colon r]{\Gamma \seq \Delta, \forall x.A}
{ \deduce{\Gamma \seq \Delta, A\{x \ass y\}}{(\varphi')}
}
\]
$y$ is an eigenvariable and, by definition of $\LKN$, we have $x \in V_f$ (and $x \not \in \Ncal$). 
By Definition~\ref{def.sigma-LKN} we have $\sigma(\varphi)=$
\[
\infer[\forall\colon r]{\sigma(\Gamma) \seq \sigma(\Delta), \sigma(\forall x.A)}
{ \deduce{\sigma(\Gamma) \seq \sigma(\Delta), \sigma(A\{x \ass y\})}{(\sigma(\varphi'))}
}
\]
But $\sigma(\forall x. A) = \forall x.\sigma(A)$ and $\sigma(A\{x \ass y\}) = \sigma(A)\{x \ass y\}$ as $x,y$ are in $V_f$ and thus are not parameters, and therefore $\sigma(x) =x, \sigma(y) = y$. So $\sigma(\varphi)$ is an $\LK$-proof.
\item The last rule of $\varphi$ is a binary rule: trivial, as for logical rules introducing $A \circ B$ we have 
$\sigma(A \circ B) = \sigma(A) \circ \sigma(B)$. If the binary rule is a cut with formula $A$ we get a cut with formula $\sigma(A)$.
\end{itemize}
  
\end{proof}
$\LKN$-proofs can be plugged together via calls and parameter substitutions; the result of such operations is an inductive proof schema. In order to define these proof schemata we need parameter substitutions ($\Ncal$-substitutions) and  several syntax extensions.
\begin{definition}\label{def.par-subst}
Let $n_1,\ldots,n_k$ be (different) variables in $\Ncal$ and $t_1,\ldots,t_k \in T_0$. Then the substitution $\{n_1 \ass t_1,\ldots,n_k \ass t_k\}$ is called a {\em parameter substitution}. The set of all parameter substitutions is denoted by $\SubN$.
\end{definition}
Parameter substitutions should not be confused with parameter assignments; parameter assignments are mappings from $\Ncal$ to 
$\Num$ and not substitutions on $\Ncal$ with range in $T_0$. In fact, parameter assignments define the semantics, where parameter substitutions are part of the syntax. Parameter substitutions are ordinary substitutions and can be extended to formulas, sequents and proofs in an obvious way. For schemata we need expressions representing proofs, so-called proof terms.
\begin{definition}\label{def.proofterms}
Let $\Pi$ be an infinite set of symbols called {\em proof symbols}. To each symbol $\rho \in \Pi$ we can assign a finite set of parameters 
$\Ncal(\rho)$ and an $\Ncal$-sequent $\Seq(\rho)$ such that the set of parameters in $S$ (from now on denoted by $\Ncal(S)$) is $\Ncal(\rho)$. Let $\Ncal(\rho) = \{n_1,\ldots,n_k\}$ (the order of the parameters does not matter) then the expression $\rho(n_1,\ldots,n_k)$ is called a {\em proof term for $\rho$}.
\end{definition}
Like ordinary terms, proof terms can be subject to substitutions. So let $\rho(n_1,\ldots,n_k)$ be a proof term and $\theta$ be a parameter substitution with domain $\{n_1,\ldots,n_k\}$. Then $\rho(n_1,\ldots,n_k)\theta = \rho(n_1\theta,\ldots,n_k\theta)$ and 
also $\rho(n_1\theta,\ldots,n_k\theta)$ is a proof term for $\rho$.
We also define 
\begin{eqnarray*}
\Seq(\rho(n_1,\ldots,n_k)) &=& \Seq(\rho),\\
\Seq(\rho(n_1\theta,\ldots,n_k\theta) &=& \Seq(\rho)\theta
\end{eqnarray*}
We can now define derivations corresponding to proof terms $\rho(n_1,\ldots,n_k)$ having initial sequents which are either axioms or proof terms; we call these derivations $s$-proofs. 
\begin{definition}\label{def.sproof}
Let $\rho \in \Pi$, $\Ncal(\rho) =\{n_1,\ldots,n_k\}$ and $\Seq(\rho)$ be defined. Let $\varphi(\rho,n_1,\ldots,n_k)$ be an 
$\LKN$-derivation of $\Seq(\rho)$ with $\Ncal(\varphi(\rho,n_1,\ldots,n_k)) = \Ncal(\rho)$ ($\varphi(\rho,n_1,\ldots,n_k)$ need not be a proof but may have arbitrary initial $\Ncal$-sequents). Let $\varphi^*(\rho,n_1,\ldots,n_k)$ be the following modification of 
$\varphi(\rho,n_1,\ldots,n_k)$: 
initial sequents are either taken from $\varphi(\rho,n_1,\ldots,n_k)$ or initial sequents $S'$ in $\varphi(\rho,n_1,\ldots,n_k)$  are replaced by a (new type of) inference of the form 
\[
\infer[\rho'\theta]{S'}{\rho'(m_1,\ldots,m_l)\theta}
\]
where $\rho'(m_1,\ldots,m_l)\theta$ is a proof term (corresponding possibly to another proof symbol $\rho'$) 
such that $\Seq(\rho')\theta = S'$. Then $\varphi^*(\rho,n_1,\ldots,n_k)$ is called an {\em $s$-proof corresponding to $\rho$}. 
$\varphi(\rho,n_1,\ldots,n_k)$ is called the {\em core proof} of $\varphi^*(\rho,n_1,\ldots,n_k)$. 
\end{definition}
\begin{example}\label{ex.sproof}
Let $\rho$ be a proof symbol, $f \in \FS_1,\fhat \in \FS_2, P \in \PS_1$ and $c$ a constant symbol such that 
\begin{eqnarray*}
\Ncal(\rho) &=& \{n\} \mbox{ and }\\
\Seq(\rho) &=& {\it fdef}, \forall x(P(x) \impl P(f(x))) \seq P(c) \impl P(\fhat(c,n)) \mbox{ where }\\
{\it fdef}  &=& \forall x.\fhat(x,\bar{0}) = x, \forall x \forall z(\fhat(x,s(z)) = f (\fhat(x,z))).
\end{eqnarray*}
We define $\varphi(\rho,n)$; by definition $\varphi(\rho,n)$ must be an $\LKN$-derivation of $\Seq(\rho)$ and 
$\Ncal(\varphi(\rho,n)) = \{n\}$. In defining $\varphi(\rho,n)$  we assume $n>0$.  $\varphi(\rho,n)=$
\[
\infer[{\it cut}+c^*]{{\it fdef},\forall x(P(x) \impl P(f(x))) \seq P(c) \impl P(\fhat(c,n))}
{ {\it fdef},\forall x(P(x) \impl P(f(x))) \seq P(c) \impl P(\fhat(c,p(n))))
  &
 \deduce{S}{(\psi)}
 } 
\]
where 
$$S  = P(c) \impl P(\fhat(c,p(n))), {\it fdef},\forall x(P(x) \impl P(f(x))) \seq P(c) \impl P(\fhat(c,n)).$$
and $\psi=$
\[
\infer[\impl\colon r]{S\colon P(c) \impl P(\fhat(c,p(n))), {\it fdef},\forall x(P(x) \impl P(f(x))) \seq P(c) \impl P(\fhat(c,n))}
  { \infer[w\colon l]{P(c), P(c) \impl P(\fhat(c,p(n))), {\it fdef},\forall x(P(x) \impl P(f(x))) \seq P(\fhat(c,n))}
    { \infer[\impl\colon l]{P(c), P(c) \impl P(\fhat(c,p(n))),\forall x,z(\fhat(x,s(z)) = f (\fhat(x,z))),\forall x(P(x) \impl P(f(x))) \seq     
         P(\fhat(c,n))}
     { P(c) \seq P(c) 
                     &
         \deduce{S_1}{(\psi_1)}
      }
      }
   }
\]
where $\psi_1=$
\[
\infer[\forall\colon l]{S_1\colon P(\fhat(c,p(n))),\forall x,z(\fhat(x,s(z)) = f (\fhat(x,z))),\forall x(P(x) \impl P(f(x))) \seq     
         P(\fhat(c,n))}
  { \infer[\impl\colon l]{P(\fhat(c,p(n))),\forall x,z(\fhat(x,s(z)) = f (\fhat(x,z))), P(\fhat(c,p(n))) \impl P(f(\fhat(c,p(n)))) \seq 
	      P(\fhat(c,n))}
     {    P(\fhat(c,p(n))) \seq P(\fhat(c,p(n)))
            &
          \deduce{S_2}{(\psi_2)}
      }
    }
\]
where $\psi_2=$
\[
\infer[\forall\colon l^*]{S_2\colon \forall x,z(\fhat(x,s(z)) = f (\fhat(x,z))),P(f(\fhat(c,p(n)))) \seq P(\fhat(c,n))}
  { \infer[{\it cut}]{\fhat(c,s(p(n))) = f (\fhat(c,p(n))), P(f(\fhat(c,p(n)))) \seq P(\fhat(c,n))}
     { \fhat(c,s(p(n))) = f (\fhat(c,p(n))), P(f(\fhat(c,p(n)))) \seq P(\fhat(c,s(p(n))))
         &
        \deduce{S_3}{(\psi_3)}
     }
   }
\]
for $\psi_3=$
\[
\infer[{\it cut}]{S_3\colon P(\fhat(c,s(p(n)))) \seq P(\fhat(c,n))}
   { \seq s(p(n)) = n
       &
      s(p(n))=n, P(\fhat(c,s(p(n)))) \seq P(\fhat(c,n))
   }
\]
Note that we assumed $n>0$ and so the equation $s(p(n))=n$ is valid under $n>0$ and we can use $\seq s(p(n))=n$ as an axiom! Axioms of this type are not admitted in $\LKN$ as they are not valid, but we will define these kind of axioms (which depend on conditions) formally below.\\[1ex]
By replacing the leftmost leaf in $\varphi(\rho,n)$ (for $n>0$) we obtain $\varphi^*(\rho,n)=$
\[
\infer[{\it cut}]{{\it fdef},\forall x(P(x) \impl P(f(x))) \seq P(c) \impl P(\fhat(c,n))}
{ \infer[\rho\{n \ass p(n)\}]{{\it fdef},\forall x(P(x) \impl P(f(x))) \seq P(c) \impl P(\fhat(c,p(n)))}
   {\rho(p(n))
   }
  &
 \psi
 } 
\]
So, here, the $\rho'$ in Definition~\ref{def.sproof} is $\rho$ itself and represents a recursive call. 
\end{example}
Obviously, the proof $\varphi^*(\rho,n)=$ represents a recursive call only for $n> \bar{0}$ and the base case for $n=0$ is missing. If we take into account this case distinction we can obtain a proof schema.
\begin{example}\label{ex.proofschema}
Let $\varphi^*(\rho,n)$ be the $s$-proof of Example~\ref{ex.sproof}. We have to find a base case, i.e. an $\LKN$-proof of the sequent 
$${\it fdef}, \forall x(P(x) \impl P(f(x))) \seq P(c) \impl P(\fhat(c,\bar{0})).$$
The proof we give is even an $\LK$-proof as, by the replacement of $n$ by $\bar{0}$, our proof term becomes 
$\rho(\bar{0})$ and is parameter-free. So we obtain an $\LK$-proof $\chi$ corresponding to the proof term $\rho(\bar{0})$ for $\chi=$
\[
\infer[w^*]{{\it fdef}, \forall x(P(x) \impl P(f(x))) \seq P(c) \impl P(\fhat(c,\bar{0}))}
 { \infer[\forall\colon l]{\forall x.\fhat(x,\bar{0})=x \seq P(c) \impl P(\fhat(c,\bar{0}))}
    { \infer[\impl\colon r]{\fhat(c,\bar{0}) = c \seq P(c) \impl P(\fhat(c,\bar{0}))}
     { P(c), \fhat(c,\bar{0}) = c \seq  P(\fhat(c,\bar{0}))
      }
    }
  }
\]
Note that the leaf of $\chi$ is an equality axiom. Thus the full recursion can be described as 
$$\rho(n) \defeq \{n=0\colon \chi,\ n>0\colon \varphi^*(\rho,n)\}.$$
Here we have a partition as it is also required for point transition systems.
It remains to define the semantics for $s$-proofs $\varphi^*(\rho,n)$ (i.e. to define $\sigma(\varphi^*(\rho,n))\Eval$ and to show that for all $\sigma$ the run $\sigma(\rho_n)\Eval$ yields an {\LK}-proof. 
\end{example}
We first define a so-called {\em semi-proof schema} which (in case of termination under a parameter assignment) yields an {\LK}-proof. Proof schemata will be semi-proof schemata which terminate for every parameter assignment (and thus always yield {\LK}-proofs). 
\begin{definition}\label{def.semi-proof-schema}
A semi-proof schema $\Pbf$ is a tuple $(\rho_0,R,\Ncal_0,t,\Seq,\Pi)$ where
\begin{itemize}
\item $R$ is a set of proof symbols and $\rho_0 \in R$.
\item $\Ncal_0$ is a finite set of parameters (which we will describe by $\{n_1,\ldots,n_k\}$ - if not explicitly defined differently).
\item $t$ is a mapping assigning to every $\rho \in R$ a tuple of parameters  $t(\rho)\colon (m_1,\ldots,m_l)$ for 
$\{m_1,\ldots,m_l\} \IN \Ncal_0$.   
\item $\Seq$ is a mapping assigning to every $\rho \in R$ an $\Ncal$-sequent $\Seq(\rho)$ such  that $\Seq(\rho)$ contains the parameters in $\Ncal(\rho)$.
\end{itemize}
Now let $\rho \in R$ then $\rho(t(\rho))$ is called a {\em proof term in $\Pbf$}; by definition $\rho(t(\rho))$ is of the form $\rho(n_1,\ldots,n_l)$. \\[1ex]
$\Pi$ is a mapping assigning to every proof term $\rho(m_1,\ldots,m_l)$ in $\Pbf$ a definition $D(\rho)$ of the form   
\begin{eqnarray*}
\rho(m_1,\ldots,m_l) &\defeq& \{C_1\colon \xi_1(\rho)(m_1,\ldots,m_l), \ldots, C_k\colon \xi_k(\rho)(m_1,\ldots,m_l)\}
\end{eqnarray*}
where $(\C_1,\ldots,C_k)$ is a partition and the $\xi(\rho)(m_1,\ldots,m_l)$ are $s$-proofs corresponding to $\rho$ where the leaves of $\xi(\rho)(m_1,\ldots,m_l)$ which are not axioms are instances of proof terms in $\Pbf$.  We also extend the set of axioms. 
Besides the tautological and equality axioms (which are valid in predicate logic with equality) 
we admit in the proofs $\xi_j$ (defined under the condition $C_j$) axiom sequents of the form 

$$S\colon A_1,\ldots,A_n \seq B_1,\ldots,B_m$$
where the $A_i$ and $B_j$ are of the form $s \circ t$ where $\circ \in \{<,>,=,\leq,\geq\}$ such that $s,t,$ are terms in $T_0$ and 
$C_j \models_{\Mcal} S$ under the standard interpretation $\Mcal$ of the conditions extended by the standard interpretation of $p$.
\\[1ex]
The set of all defining equations in $\Pi$ is denoted as $\Dcal(\Pi)$. $\rho_0$ is called the {\em main proof symbol} of 
$\Pbf$.
\end{definition}
Note that the new axioms $S$ defined above do not necessarily consist of atomic {\em conditions} but may contain atoms $s \circ t$ where $s$ and $t$ may contain the predecessor $p$; that such an extension of the syntax is unavoidable is shown in Example~\ref{ex.sproof} where the sequent $\seq s(p(x)) = x$ appears as an axiom which is valid under the condition $x>0$. There the predecessor does not only appear in a proof-call but also in a nonrecursive side proof. The problem $C_j \models_{\Mcal} S$ above is decidable as it can be transformed into an equivalent set of problems not containing the predecessor $p$. By the elimination of $p$ we obtain problems in quantifier-free Presburger arithmetic where efficient solution methods exist (see, e.g.,~\cite{Haase.2018}). We illustrate this transformation for the sequent $S\colon \seq s=t$ where $s$ and $t$ may contain the predecessor. The cases for the other predicates are analogous.  For sequents with more than one atom the $A_i$, $B_j$ containing $p$ have to eliminated successively.  We consider now the different types of equations $s = t$ for terms $s,t \in T_0$. By $p(s(x)) = x$ it suffices to consider the following cases for the equation $s=t$: 
\begin{itemize}
\item[a.]$s^kp^n\bar{0} = s^lp^m\bar{0}$ for $k,n,l,m \geq 0$, $n \geq m$.
\item[b.] $s^kp^nx = s^lp^m\bar{0}$ for $k,n,l,m \geq 0$, $n \geq m$ and $x \in \Ncal$.
\item[c.]  $s^kp^n x = s^lp^m x$ for $k,n,l,m \geq 0$, $n \geq m$ and $x \in \Ncal$.
\item[d.] $s^kp^n x = s^lp^m y$ for $k,n,l,m \geq 0$, $n \geq m$ and $x,y \in \Ncal$, $x \neq y$.
\end{itemize}
We have to decide $C_j \models s=t$. The case a is trivial as $s^kp^n\bar{0}\Eval, s^lp^m\bar{0}\Eval$ are numerals which are either equal (then we replace the equation by $\top$) or different - in which case we replace the equation by $\bot$. We only consider case d in detail, the others are analogous.   \\[1ex]
If, in case d, we have $x \geq n$ and $y \geq m$, the equation $s^kp^nx = s^lp^my$ can be written as 
\begin{eqnarray*}
(x-n)+k &=& (y-m)+l \mbox{ as then }p\mbox{ acts as }-1, \mbox{ and so }\\
x+k &=& y+l +(n-m).
\end{eqnarray*}
Note that $k,l,m,n$ are constants and so the last equation is equivalent to $x+k = y+r$ for some $r \geq 0$. Now 
$$(1)\colon C_j  \land x \geq n \land y \geq m\models x+k = y+r$$ 
is decidable within quantifier-free Presburger arithmetic. We now consider the remaining cases 
\begin{itemize}
\item $x <n$ and $y \geq m$,
\item $x \geq n$ and $y <m$,
\item $x <n$ and $y <m$.
\end{itemize}
In case $x<n$ and $y \geq m$ we consider the problems
$$(2i)\colon C_j \land x = i \land y \geq m \models s^kp^n i\Eval + m = y+l$$ 
for $i=0,\ldots,n-1$.\\[1ex]
For $x \geq n$ and $y<m$ we consider 
$$(3i)\colon C_j \land x \geq n \land y =i \models x+k = s^lp^m i\Eval + n$$ 
for $i=0,\ldots,m-1$ and, finally, for $x<n$ and $y<m$ 
$$(4ij)\colon C_j \land x=i \land y =k \models s^kp^n i\Eval = s^lp^m k\Eval$$ 
for $i=0,\ldots,n-1$  and  $k =1,\ldots,m-1$.\\[1ex]
As the conditions $(1), (2i),(3i),(4ij)$ define a partition, $C_j \models  s=t$ (and therefore $C_j \models\ \seq s=t$) iff all all the conditions  $(1), (2i),(3i),(4ij)$ are valid.\\[1ex]
In practice we can avoid the application of expressions like $s^lp^k x$ to numerals smaller than $k$, thus avoiding a non-invertible behavior of $p$. Given the sequent  $S\colon A_1,\ldots,A_n \seq B_1,\ldots,B_m$ we identify, for every variable $x$ occurring in $S$, the term $s^lp^k x$ with maximal $k$ and postulate the condition $x \geq k$. This simplifies the application of the algorithm considerably and the transformation to a problem in quantifier-free Presburger arithmetic is polynomial.
\begin{example}
Consider the sequent  $\seq ssppx = y$ and let $C$ be a condition. It suffices to consider the cases $x \geq 2$ and $x<2$ as both $l$ and $m$ are $0$. So we obtain 
\begin{eqnarray*}
(1)\ C \land x \geq 2 &\models& ssppx = y,\\
(2-1)\ C \land x=0 &\models& 2=y,\\
(2-2)\ C \land x=1 &\models& 2 = y.
\end{eqnarray*}
If $C$ is the condition $x=y$ then, obviously (2-1) and (2-2) are false and so $C \not \models\ \seq ssppx = y$. However if $C$ is the condition  $x=y \land x > 3$ then $C \models\ \seq ssppx = y$; note that, under that condition $x \geq 2$ and thus also under $x>3 \land x=y$, $ssppx=y$ boils down to $x=y$.
\end{example}
We extend now the proofs in Example~\ref{ex.proofschema} to a more complex semi-proof schema.
\begin{example}\label{ex. semi-proof-schema}
Let $\Pbf$ be the semi-proof schema $(\rho_0,\{\rho_0,\rho\},\{n\},t,\Seq,\Pi)$ where 
$\rho$ is defined in Example~\ref{ex.proofschema}. In particular we have  
\begin{eqnarray*}
t(\rho) &=& (n),\\
\Seq(\rho) &=& {\it fdef},\forall x(P(x) \impl P(f(x))) \seq P(c) \impl P(\fhat(c,n)),\\
\Pi(\rho(n)) &=& \rho(n) \defeq \{n=0\colon \xi_1(\rho)(n),\ n>0\colon \xi_2(\rho)(n)\},
\end{eqnarray*}
where $\xi_1(\rho)(n) = \chi$ and $\xi_2(\rho)(n) = \varphi^*(\rho,n)$ where $\chi$ is defined in Example~\ref{ex.proofschema} and 
$\varphi^*(\rho)(n)$ in Example~\ref{ex.sproof}. We define 
\begin{eqnarray*}
\Seq(\rho_0) &=& P(c), {\it fdef}, \forall x(P(x) \impl P(f(x))) \seq P(\fhat(c,s(n))),\\
\Pi(\rho_0(n)) &=& \rho_0(n) \defeq \{n=0\colon \xi_1(\rho_0)(n),\ n>0\colon \xi_2(\rho_0)(n)\}
\end{eqnarray*}
where $\xi_1(\rho_0)(n)=$
\[
\infer[\forall\colon l^*]{P(c), {\it fdef}, \forall x(P(x) \impl P(f(x))) \seq P(\fhat(c,s(0)))}
{ \infer[\forall\colon l ]{ P(c),\fhat(c,0)=c,\fhat(c,s(0)) =f(\fhat(c,0)) , \forall x(P(x) \impl P(f(x))) \seq P(\fhat(c,s(0)))}
  { \infer[{\it cut}+c^*]{P(c),\fhat(c,0)=c, \fhat(c,s(0)) =f(\fhat(c,0)) ,  P(c) \impl P(f(c)) \seq P(\fhat(c,s(0)))}
     {\deduce{\Gamma \seq P(f(\fhat(c,0)))}{(\xi'_1)}
      &
       \infer[w^*]{P(f(\fhat(c,0))),\Gamma \seq P(\fhat(c,s(0)))}
	          {  P(f(\fhat(c,0))),  \fhat(c,s(0)) =f(\fhat(c,0))  \seq  P(\fhat(c,s(0)))
               }  
     }
  }
}
\]
For $\Gamma = P(c),\fhat(c,0)=c, \fhat(c,s(0)) =f(\fhat(c,0)) ,  P(c) \impl P(f(c))$ and $\xi'_1=$
\[
\infer[{\it cut}+c^*]{\Gamma \seq P(f(\fhat(c,0)))}
    { \infer[w^*]{\Gamma \seq P(f(c))}
       { \infer[\impl\colon l]{P(c), P(c) \impl P(f(c)) \seq P(f(c))}
          { P(c) \seq P(c)
            &
            P(f(c)) \seq P(f(c))
          }
      }
        &
     \infer[w^*]{ P(f(c)), \Gamma \seq P(f(\fhat(c,0)))}{P(f(c)), \fhat(c,0)=c \seq P(f(\fhat(c,0)))}
    }
\]
$\xi_2(\rho_0)(n)=$
\small
\[
\infer[{\it cut} ]{P(c), {\it fdef}, \forall x(P(x) \impl P(f(x))) \seq P(\fhat(c,s(n)))}
 { \infer[(\rho,\theta)]{{\it fdef}, \forall x(P(x) \impl P(f(x)))  \seq P(c) \impl  P(\fhat(c,s(n)))}
   { \rho(s(n))
    }
   & 
  \infer[\impl\colon l]{P(c), P(c) \impl  P(\fhat(c,s(n))) \seq P(\fhat(c,s(n)))}
   { P(c) \seq P(c)
     &
      P(\fhat(c,s(n))) \seq P(\fhat(c,s(n)))
   }
 }
\]
\normalsize  
for $\theta = \{n \ass s(n)\}$.
\end{example}
A semi-proof schema can be considered as a program, given a parameter assignment as input, producing an $\LK$- proof -- provided it terminates. We have already defined $\sigma(\varphi)$ for $\LKN$-proofs. We extend this semantics to semi-proof schemata. 
\begin{definition}[semantics of semi-proof schemata]\label{def.sem-proofterms}\mbox{ }\\
Let $\sigma$ a parameter assignment and $\Pbf\colon (\rho_0,R,\Ncal_0,t,\Seq,\Pi)$ be a proof schema such that $\Ncal_0 = \{n_1,\ldots,n_k\}$ and let 
$\rho(m_1,\ldots,m_l)$ be the proof term corresponding to $\rho$. Then $\Pi$ assigns to $\rho(m_1,\ldots,m_l)$ a  definition $D(\rho)$ of the form   
\begin{eqnarray*}
\rho(m_1,\ldots,m_l) &\defeq& \{C_1\colon \xi_1(\rho)(m_1,\ldots,m_l), \ldots, C_j\colon \xi_j(\rho)(m_1,\ldots,m_l)\}
\end{eqnarray*}
Now let us assume that $\sigma(m_i) = r_i$ for $i=1,\ldots,k$. Then, as $(C_1,\ldots,C_j)$ is a partition, there is exactly one $j \in \{1,\ldots,k\}$ such that  $\sigma(C_j) = \top$. So we define 
$$\sigma(\rho(m_1,\ldots,m_l))\Eval = \rho(r_1\ldots,r_l)\Eval = \sigma(\xi_j(\rho)(m_1,\ldots,m_l))\Eval.$$
We have to define $\sigma(\xi_j(\rho)(m_1,\ldots,m_l))\Eval$. We know that $\xi_j(\rho)(m_1,\ldots,m_l)$ is an s-proof. We consider two cases 
\begin{itemize}
\item[a.] $\xi_j(\rho)(r_1,\ldots,r_l)$ is an $\LKN$-proof. Then 
$\sigma(\xi_j(\rho)(m_1,\ldots,m_l))\Eval = \xi_j(\rho)(r_1,\ldots,r_l)$.
\item[b.] $\xi_j(\rho)(m_1,\ldots,m_l)$ is not an $\LKN$-proof. Then  $\xi_j(\rho)(m_1,\ldots,m_l)$ is an $s$-proof with a $\LKN$-core  $\zeta$.  In $\xi_j(\rho)(m_1,\ldots,m_l)$ there are leaves which are not axioms but proof terms. Let 
$\rho'(k_1,\ldots,k_s)\theta$ be such a proof term at a position $\lambda$ with 
$$\theta = \{k_1 \ass t_1,\ldots,k_s \ass t_s\}.$$ 
Let $\Lambda'$ be the set of all positions $\lambda'$ such that $\lambda'$ is a leaf position in $\zeta$ corresponding to $\lambda$ (i.e. the consequent of $\lambda$) and $S(\lambda')$ be  the sequent occurring at $\lambda'$. Then we replace in $\sigma(\zeta)$ the leaf 
$\sigma(S(\lambda'))$ by $\sigma'(\rho'(k_1,\ldots,k_s))\Eval$ for 
$$\sigma'(k_i) = \sigma(t_i) \mbox{ for } i=1,\ldots,s.$$

We do this for all proof terms occurring at the leaves and, in case of termination, obtain $\sigma(\xi_j(\rho)(m_1,\ldots,m_l))\Eval$. 
Note that by definition of $\sigma(t_i)$ for terms $t_i \in T_0$, $\sigma(t_i)$ is a numeral. If for all positions $\lambda' \in \Lambda'$ the evaluation $\sigma'(\rho'(k_1,\ldots,k_s))\Eval$ terminates with an $\LK$-proof $\chi(\lambda')$ then 
$$\sigma(\xi_j(\rho)(m_1,\ldots,m_l))\Eval = \sigma(\zeta)\{S(\lambda')\setminus \chi(\lambda') \mid \lambda' \in \Lambda'\}.$$
\end{itemize}
When we apply this definition to the the evaluation of the proof term $\rho_0(n_1,\ldots,n_k)$ and all resulting computations terminate 
then $\sigma(\rho_0(n_1,\ldots,n_k))\Eval$ is the $\LK$-proof of $\Pbf$ corresponding to $\sigma$.
\end{definition}
\begin{example}\label{ex.semantics}
Consider the parameter assignment $\sigma(n)= \bar{1}$ ($\sigma(m)$ for  $m \neq n$ may be arbitrary)  and the semi-proof schema from Example~\ref{ex. semi-proof-schema}. We had 
$\Pbf = (\rho_0,\{\rho_0,\rho\},\{n\},t,\Seq,\Pi)$ where 
$\rho$ is defined in Example~\ref{ex.proofschema} and 
\begin{eqnarray*}
t(\rho) &=& (n),\\
\Seq(\rho) &=& {\it fdef},\forall x(P(x) \impl P(f(x))) \seq P(c) \impl P(\fhat(c,n)),\\
\rho(n) &\defeq& \{n=0\colon \xi_1(\rho)(n),\ n>0\colon \xi_2(\rho)(n)\},
\end{eqnarray*}
where $\xi_1(\rho)(n) = \chi$ and $\xi_2(\rho)(n) = \varphi^*(\rho,n)$ where $\chi$ is defined in Example~\ref{ex.proofschema} and $\varphi^*(\rho,n)$ in Example~\ref{ex.sproof}. The main proof term $\rho_0$ was defined as 
\begin{eqnarray*}
\Seq(\rho_0) &=& P(c), {\it fdef}, \forall x(P(x) \impl P(f(x))) \seq P(\fhat(c,s(n))),\\
\rho_0(n) &\defeq& \{n=0\colon \xi_1(\rho_0)(n),\ n>0\colon \xi_2(\rho_0)(n)\}
\end{eqnarray*}
We have to compute $\sigma(\rho_0(n))\Eval$ which, by definition, is 
$\xi_2(\rho_0)(\bar{1})\Eval=$
\small
\[
\infer[{\it cut} ]{P(c), {\it fdef}, \forall x(P(x) \impl P(f(x))) \seq P(\fhat(c,\bar{2}))}
 { \infer[(\rho,\theta)]{{\it fdef}, \forall x(P(x) \impl P(f(x)))  \seq P(c) \impl  P(\fhat(c,\bar{2}))}
   { \sigma(\rho(s(n))\Eval
    }
   & 
  \infer[\impl\colon l]{P(c), P(c) \impl  P(\fhat(c,\bar{2})) \seq P(\fhat(c,\bar{2}))}
   { P(c) \seq P(c)
     &
      P(\fhat(c,\bar{2})) \seq P(\fhat(c,\bar{2}))
   }
 }
\]
We have to compute $\sigma(\rho(s(n))$ which is $\sigma'(\rho(n))\Eval$ for $\sigma'(n)$ where $\sigma'(n)=\bar{2}$. By definition of $\rho$ we have 
$$\sigma'(\rho(n))\Eval = \sigma'(\xi_2(\rho)(n))\Eval = \sigma'(\varphi^*(\rho,n))\Eval.$$
we first compute $\sigma'(\varphi(\rho,n))\Eval$ where $\varphi(\rho,n)$ is the core of  $\varphi^*(\rho,n))$.\\[1ex]
$\sigma'(\varphi(\rho,n))\Eval=$
\[
\infer[{\it cut}+c^*]{{\it fdef},\forall x(P(x) \impl P(f(x))) \seq P(c) \impl P(\fhat(c,\bar{2}))}
{ {\it fdef},\forall x(P(x) \impl P(f(x))) \seq P(c) \impl P(\fhat(c,\bar{1})))
  &
 \deduce{\sigma'(S)}{(\sigma'(\psi)\Eval)}
 } 
\]
where 
$$\sigma'(S)  = P(c) \impl P(\fhat(c,\bar{1})), {\it fdef},\forall x(P(x) \impl P(f(x))) \seq P(c) \impl P(\fhat(c,\bar{2}))$$
and $\sigma'(\psi)\Eval=$
\[
\infer[\impl\colon r]{\sigma'(S)\colon P(c) \impl P(\fhat(c,\bar{1})), {\it fdef},\forall x(P(x) \impl P(f(x))) \seq P(c) \impl P(\fhat(c,\bar{2}))}
  { \infer[w\colon l]{P(c), P(c) \impl P(\fhat(c,\bar{1})), {\it fdef},\forall x(P(x) \impl P(f(x))) \seq P(\fhat(c,\bar{2}))}
    { \infer[\impl\colon l]{P(c), P(c) \impl P(\fhat(c,\bar{1})),\forall x \forall z(\fhat(x,s(z)) = f (\fhat(x,z))),\forall x(P(x)     
        \impl P(f(x))) \seq  P(\fhat(c,\bar{2}))}
     { P(c) \seq P(c) 
                     &
         \deduce{\sigma'(S_1)\Eval}{(\sigma'(\psi_1)\Eval)}
      }
      }
   }
\]
where $\sigma'(\psi_1)\Eval=$
\[
\infer[\forall\colon l]{\sigma'(S_1)\colon P(\fhat(c,\bar{1})),\forall x \forall z(\fhat(x,s(z)) = f (\fhat(x,z))),\forall x(P(x) \impl P(f(x))) \seq     
         P(\fhat(c,\bar{2}))}
  { \infer[\impl\colon l]{P(\fhat(c,\bar{1})),\forall x \forall z(\fhat(x,s(z)) = f (\fhat(x,z))), P(\fhat(c,\bar{1})) \impl P(f(\fhat(c,\bar{1})))   
        \seq P(\fhat(c,\bar{2}))}
     {    P(\fhat(c,\bar{1})) \seq P(\fhat(c,\bar{1}))
            &
          \deduce{\sigma'(S_2)}{(\sigma'(\psi_2)\Eval)}
      }
    }
\]
where $\sigma'(\psi_2)\Eval=$
\[
\infer[\forall\colon l^*]{\sigma'(S_2)\colon \forall x,z(\fhat(x,s(z)) = f (\fhat(x,z))),P(f(\fhat(c,\bar{1}))) \seq P(\fhat(c,\bar{2}))}
  { \infer[{\it cut}]{\fhat(c,\bar{2}) = f (\fhat(c,\bar{1})), P(f(\fhat(c,\bar{1}))) \seq P(\fhat(c,\bar{2}))}
     { \fhat(c,\bar{2}) = f (\fhat(c,\bar{1})), P(f(\fhat(c,\bar{1}))) \seq P(\fhat(c,\bar{2}))
         &
        \deduce{\sigma'(S_3)}{(\sigma'(\psi_3))\Eval}
     }
   }
\]
for $\sigma'(\psi_3)\Eval=$
\[
\infer[{\it cut}]{\sigma'(S_3)\colon P(\fhat(c,\bar{2})) \seq P(\fhat(c,\bar{2}))}
   { \seq \bar{2} = \bar{2}
       &
      \bar{2} = \bar{2} , P(\fhat(c,\bar{2})) \seq P(\fhat(c,\bar{2}))
   }
\]
Note that $\sigma'(\psi_3)\Eval$ is indeed a redundant proof. The reason is that $\seq s(p(n)) = n$ normalizes under $\sigma'$ to $\seq \bar{2} = \bar{2}$ - which is a valid sequent in first-order logic. Obviously, 
$\sigma'(\varphi(\rho,n))\Eval$ is an $\LK$-proof. \\[1ex]
Remember that $\varphi^*(\rho,n)=$
\[
\infer[{\it cut}]{{\it fdef},\forall x(P(x) \impl P(f(x))) \seq P(c) \impl P(\fhat(c,n))}
{ \infer[\rho\{n \ass p(n)\}]{{\it fdef},\forall x(P(x) \impl P(f(x))) \seq P(c) \impl P(\fhat(c,p(n)))}
   {\rho(p(n))
   }
  &
 \psi
 } 
\]
Therefore $\sigma'(\varphi^*(\rho,n))\Eval=$
\[
\infer[{\it cut}]{{\it fdef},\forall x(P(x) \impl P(f(x))) \seq P(c) \impl P(\fhat(c,\bar{2}))}
{ \sigma(\rho(\bar{n}))\Eval
  &
 \sigma'(\psi)\Eval
 } 
\]
For (the original $\sigma$) $\sigma(n)=\bar{1}$. Now we compute $\rho(\bar{1})\Eval$ which is a similar computation as above and again yields an $\LK$-proof. We leave the details to the reader. Finally, we have to compute $\sigma(\rho_0(n))\Eval$. 
$\sigma(\rho_0(n))\Eval = \sigma(\xi_2(\rho_0)(n))\Eval=$
\small
\[
\infer[{\it cut} ]{P(c), {\it fdef}, \forall x(P(x) \impl P(f(x))) \seq P(\fhat(c,\bar{2}))}
 { \infer[(\rho,\theta)]{{\it fdef}, \forall x(P(x) \impl P(f(x)))  \seq P(c) \impl  P(\fhat(c,\bar{2}))}
   { \sigma'(\varphi^*(\rho,n))\Eval
    }
   & 
  \infer[\impl\colon l]{P(c), P(c) \impl  P(\fhat(c,\bar{2})) \seq P(\fhat(c,\bar{2}))}
   { P(c) \seq P(c)
     &
      P(\fhat(c,\bar{2})) \seq P(\fhat(c,\bar{2}))
   }
 }
\]
The obtained proof is an $\LK$-proof.
\end{example}
Due to nontermination, a semi-proof schema does not always evaluate to an {\LK}-proof under parameter assignments. We now define sufficient conditions on a semi-proof schema to terminate under all parameter assignments. The key is to extract a point transition system from the semi-proof schema which is canonical (i.e. terminating under every parameter assignment); then we will show that, from the termination of the point transition system that of the semi-proof schema follows. 
\begin{definition}[extraction of  a point transition system]\label{def.extract-pts}
Let $\Pbf\colon (\rho_0,R,\Ncal_0,t,\Seq,\Pi)$ be a semi-proof schema where, by definition, $\Pi$ is a mapping assigning to every proof term $\rho(m_1,\ldots,m_l)$ in $\Pbf$ a definition $D(\rho)$ of the form   
\begin{eqnarray*}
\rho(m_1,\ldots,m_l) &\defeq& \{C_1\colon \xi_1(\rho)(m_1,\ldots,m_l), \ldots, C_k\colon \xi_k(\rho)(m_1,\ldots,m_l)\}
\end{eqnarray*}
where $(\C_1,\ldots,C_k)$ is a partition and, for $\xi \in \{\xi_1,\ldots,\xi_k\}$, the $\xi(\rho)(m_1,\ldots,m_l)$ are $s$-proofs corresponding to $\rho$. Note  that the leaves of $\xi(\rho)(m_1,\ldots,m_l)$ which are not axioms are instances of proof terms in $\Pbf$. \\[1ex]
To $\Pbf$ we assign a point transition system $p(\Pbf)\colon (\rho_0,R \union \Delta_e,\Delta_e, p(\Pi))$ where 
\begin{itemize}
\item $\Delta_e = \{\delta_e\}$ where $\delta_e$ is not a proof symbol in $R$,
\item the sources of $\rho$ in $R$ are defined by $t(\rho)$.
\item Now let $D(\rho)$ as defined above and $p=(m_1,\ldots,m_l)$. We define 
$$p(\Pbf)(\rho) = \{(\rho,p) \to \Lcal_1\colon C_1,\ldots, (\rho,p) \to \Lcal_k\colon C_k\}$$
where the $\Lcal_j$ for $j=1,\ldots,k$ are defined as follows:
\item If $\xi_j(\rho)$ is an $\LKN$-proof (i.e. no leaves are proof terms) then $\Lcal_j = \{(\delta_e,p)\}$.
\item Assume that $\xi_j(\rho)$ is an $s$-proof which is not an $\LKN$-proof and let 
$$\rho_1(t^1_1,\ldots,t^1_{r(1)}), \ldots, \rho_\alpha(t^\alpha_1,\ldots,t^\alpha_{r(\alpha)})$$
be the proof terms appearing at the non-axiomatic leaves of $\xi_j(\rho)$. Then 
$$\Lcal_j = \{(\rho_1,(t^1_1,\ldots,t^1_{r(1)})), \ldots, (\rho_\alpha,(t^\alpha_1,\ldots,t^\alpha_{r(\alpha)})).$$
\end{itemize}
By definition $p(\Pbf)$ is a point transition system.
\end{definition} 
The point transition system $p(\Pbf)$ abstracts from the $\LKN$-proofs appearing in $\Pbf$ and only preserves the call-structure of 
$\Pbf$. If all proofs appearing in $\Pbf$ are $\LKN$-proofs then all point transitions are of the form  $(\rho,p) \to \{(\delta_e,p)\}$. 
\begin{example}\label{ex.schema-pts}
Let us consider the semi-proof schema $\Pbf\colon (\rho_0,\{\rho_0,\rho\},\{n\},t,\Seq,\Pi)$ from 
Example~\ref{ex. semi-proof-schema}. We had 
\begin{eqnarray*}
t(\rho) &=& (n),\\
\Seq(\rho) &=& {\it fdef},\forall x(P(x) \impl P(f(x))) \seq P(c) \impl P(\fhat(c,n)),\\
\Pi(\rho(n)) &=& \rho(n) \defeq \{n=0\colon \xi_1(\rho)(n),\ n>0\colon \xi_2(\rho)(n)\},
\end{eqnarray*}
where $\xi_1(\rho)(n) = \chi$ and $\xi_2(\rho)(n) = \varphi^*(\rho,n)$ where $\chi$ is defined in Example~\ref{ex.proofschema} and 
$\varphi^*(\rho)(n)$ in Example~\ref{ex.sproof}. For $\rho_0$ we had 
\begin{eqnarray*}
t(\rho_0) &=& (n),\\
\Seq(\rho_0) &=& P(c), {\it fdef}, \forall x(P(x) \impl P(f(x))) \seq P(\fhat(c,s(n))),\\
\Pi(\rho_0(n)) &=& \rho_0(n) \defeq \{n=0\colon \xi_1(\rho_0)(n),\ n>0\colon \xi_2(\rho_0)(n)\}
\end{eqnarray*}
where $\xi_1(\rho_0)(n)$ and $\xi_2(\rho_0)(n)$ are defined in Example~\ref{ex. semi-proof-schema}. The corresponding point transition system is 
$$p(\Pbf)\colon  (\rho_0,\{\rho_0,\rho,\delta_e\}, \{\delta_e\},p(\Pi))$$
where
\begin{eqnarray*}
p(\Pi)(\rho) &=& \{(\rho,n) \to \{(\delta_e,n)\}\colon n=0,\ (\rho,n) \to \{(\rho,p(n))\}\colon n>0\},\\
p(\Pi)(\rho_0) &=& \{(\rho_0,n) \to \{(\delta_e,n)\}\colon n=0,\ (\rho_0,n) \to \{(\rho,s(n))\}\colon n>0\}.
\end{eqnarray*}
We show that $p(\Pbf)$ is canonic: we have $\rho \ltPbf \rho_0$ as $\rho_0 \redPbfs \rho$ and 
$\rho \not \redPbfs \rho_0$, and $\rho \redPbf \rho$. Let us consider $<_l$, the lexicographic ordering of tuples (which on $1$-tuples reduces to the standard ordering on numerals) and let $C \equiv n>0$; then for 
\[
\begin{array}{l}
(\rho,n) \to \{(\rho,p(n))\}\colon n>0 \mbox{ we have } (\rho,p(n)) <_C (\rho,n) \mbox{ via }<_l,\\
\mbox{ and for } (\rho_0,n) \to \{(\rho,s(n))\}\colon n>0, (\rho,s(n)) <_C  (\rho_0,n) \mbox{ via } \rho \ltPbf \rho_0.
\end{array}
\]
\end{example}
\begin{definition}[proof schema]\label{def.proof-schema}
A semi-proof schema $\Pbf$ is called a {\em proof schema} if $p(\Pbf)$ is a canonic point transition system. 
\end{definition}
\begin{theorem}\label{the.proof-schema}
Let $\Pbf\colon (\rho_0,R,\Ncal_0,t,\Seq,\Pi)$ be a proof schema such that $t(\rho_0)= (n_1,\ldots,n_k)$. Then, for all parameter assignments $\sigma$, $\sigma(\rho_0(n_1,\ldots,n_k))\Eval$ is an $\LK$-proof.
\end{theorem}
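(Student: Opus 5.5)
The plan is to prove a slightly stronger statement by well-founded induction: for every proof symbol $\rho \in R$ with $t(\rho)=(m_1,\ldots,m_l)$ and every parameter assignment $\sigma$, the evaluation $\sigma(\rho(m_1,\ldots,m_l))\Eval$ terminates and yields an $\LK$-proof of $\sigma(\Seq(\rho))$. The theorem is then the instance $\rho=\rho_0$. The induction runs over labeled ground points $(\rho,\sigma(p))$, with $p$ the source of $\rho$ in $p(\Pbf)$. They are ordered by the combined ordering from Section~\ref{subsec.pts}: $(\rho',q')<(\rho,q)$ if $\rho'\ltPbf\rho$, or if $\rho'\eqPbf\rho$ and $q'<_l q$.

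First I establish well-foundedness. By Definition~\ref{def.proof-schema}, $p(\Pbf)$ is canonic, so by Theorem~\ref{theo.canonic-terminate} it is terminating: for every $\sigma$ the computation tree $p(\Pbf)[\sigma]$ is finite. The cleanest route avoids re-proving well-foundedness of the combined ordering. Instead I do induction on the height of the finite computation tree rooted at the node $(\rho,\sigma(t(\rho)))$. For every $\rho$ and $\sigma$ such a node occurs in the tree for a suitable assignment, and the sub-computation starting there is again finite. If $\rho$ is not reachable from $\rho_0$, I use that the system started at $\rho$ is still canonic.

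Next comes the inductive step. Fix $\rho,\sigma$. Since the conditions of $D(\rho)$ form a partition, exactly one $C_j$ holds under $\sigma$, and the evaluation unfolds to $\sigma(\xi_j(\rho))\Eval$.

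\emph{Case a.} If $\xi_j(\rho)$ is an $\LKN$-proof, then Proposition~\ref{prop.sem-LKN} gives that $\sigma(\xi_j(\rho))$ is an $\LK$-proof of $\sigma(\Seq(\rho))$. The only new point is the extended axioms $S$ with $C_j\models_\Mcal S$. Since $\sigma(C_j)=\top$, the sequent $\sigma(S)$ normalizes to a true variable-free arithmetic atomic sequent. I treat such sequents as admissible axioms of the target $\LK$ (valid atomic sequents under the standard interpretation), as the example $\seq \bar2=\bar2$ illustrates.

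\emph{Case b.} If $\xi_j(\rho)$ has proof-term leaves $\rho_i(t^i_1,\ldots)$, then the core $\zeta$ evaluates to an $\LK$-derivation $\sigma(\zeta)$ with the corresponding leaves $\sigma(S(\lambda'))$, again by Proposition~\ref{prop.sem-LKN}. Each recursive call is evaluated under $\sigma'$ with $\sigma'(k_i)=\sigma(t^i_k)$. The transition $(\rho,p)\to\Lcal_j\colon C_j$ is in $p(\Pbf)$ and $\sigma(C_j)=\top$, so $(\rho_i,\sigma'(t(\rho_i)))$ is a child of $(\rho,\sigma(p))$ in the computation tree. The induction hypothesis therefore gives an $\LK$-proof $\chi$ of $\sigma'(\Seq(\rho_i))$. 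It remains to check that $\sigma'(\Seq(\rho_i))=\sigma(\Seq(\rho_i)\theta)=\sigma(S(\lambda'))$, which follows from the fact that parameter substitution followed by assignment equals the composed assignment. This requires a small substitution lemma for $\sigma$ on $T_0$-terms, including the normalization of $p$. Replacing the leaves then yields an $\LK$-proof, because grafting proofs onto matching leaves preserves correctness. The eigenvariable conditions are unaffected, since eigenvariables lie in $V_f$ and can be renamed apart in the grafted subproofs if necessary.

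The main obstacle I expect is the precise link between the operational semantics of Definition~\ref{def.sem-proofterms} and the computation tree $p(\Pbf)[\sigma]$, whose formal definition lies in~\cite{LCL.2026}. Concretely, I must show that every recursive call made during evaluation corresponds to an edge in that tree, so that termination transfers. I would handle this by proving, by the same induction, that the call graph of the evaluation is exactly a homomorphic image of $p(\Pbf)[\sigma]$: nodes are labeled points under assignments, and edges are the transitions selected by the unique true condition. I would also have to take care with possible eigenvariable clashes when grafting proofs.
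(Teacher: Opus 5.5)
Your proposal is correct and is essentially the paper's argument. The paper argues by contradiction: a nonterminating evaluation of $\sigma(\rho_0(n_1,\ldots,n_k))\Eval$ would produce an infinite sequence of proof-term calls, hence an infinite path in $p(\Pbf)[\sigma]$, contradicting termination of the canonic system $p(\Pbf)$; it then takes correctness of the terminated result as immediate from Definition~\ref{def.sem-proofterms}. You run the same call/transition correspondence as a positive induction on the height of the finite computation tree, and you check explicitly the end-sequent matching via the substitution lemma, the $C_j$-axioms and the grafting.

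There is one harmless slip. It is not true that every ground point $(\rho,q)$ with $\rho$ reachable from $\rho_0$ occurs in some tree rooted at $\rho_0$: if $\rho_0(n)$ calls only $\rho(s(n))$ and $\rho$ never calls itself, then $(\rho,\bar{0})$ never occurs. This does not affect the proof, because your fallback (the system with start label $\rho$ is again canonic) works for every $\rho\in R$. For the theorem itself you could also just restrict the induction to the nodes of the single finite tree $p(\Pbf)[\sigma]$.
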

\begin{proof}
Let $\sigma$ be an arbitrary parameter assignment. As $\Pbf$ is a proof schema the corresponding point transition system $p(\Pbf)$ is canonic. I.e. $p(\Pbf)$ is terminating under $\sigma$ w.r.t. an ordering on the proof symbols and on the tuples. If the computation $\sigma(\rho_0(n_1,\ldots,n_k))\Eval$ terminates then, by Definition~\ref{def.sem-proofterms}, 
$\sigma(\rho_0(n_1,\ldots,n_k))\Eval$ is an $\LK$-proof. Thus the only possibility of getting no $\LK$-proof is nontermination. The difference between the proof schema and the corresponding point transition system consists only in the kernels of the proofs $\xi_j$ which are missing in the point transition system. As nontermination cannot occur in the kernels  (these are $\LK$-proofs under $\sigma$)  we must obtain an infinite sequence of proof terms which are defined in the evaluation of $\rho_0(n_1,\ldots,n_k)$. This infinite sequence, in turn, defines an infinite sequence in $p(\Pbf)$ under 
$\sigma$ - contradicting the canonicity of $p(\Pbf)$.
\end{proof}

\section{Herbrand Systems}\label{sec.herbrandsystems}

In its simplest form, Herbrand's theorem \cite{herbrand1930recherches} states that a formula $\exists x A(x)$ is valid only if there exist terms $t_1, \ldots, t_n$ such that the Herbrand disjunction $A(t_1) \lor \ldots \lor A(t_n)$ is valid.
More generally, proof-theoretic analyses of first-order proofs extract finite collections of instances of quantified formulas in the end-sequent from which validity can be reconstructed.

In the presence of equality, the appropriate notion is that of an equational Herbrand sequent, a sequent obtained from the quantified formulas of the end-sequent by instantiation, which is valid in first-order logic with equality. In contrast to ordinary propositional Herbrand sequents, equational Herbrand sequents are inherently first-order.
\begin{definition}[Equational Herbrand sequent]
Let $S = \Gamma \vdash \Delta$ be a skolemized prenex sequent, where the formulas in $\Gamma$ and $\Delta$
may contain quantified subformulas
\[
Qx_1 \dots Qx_n\, A(x_1,\dots,x_n)
\]
with $Q \in \{\forall,\exists\}$ and $A$ quantifier-free. Moreover, let $S$ be provable from a set of equality axioms $\mathcal{A}_E$. An equational Herbrand sequent of $S$ is a sequent obtained by replacing each quantified formula
\[
Qx_1 \dots Qx_n\, A(x_1,\dots,x_n)
\]
by a finite set of quantifier-free instances
\[
A(t_1^1,\dots,t_n^1), \dots, A(t_1^k,\dots,t_n^k)
\]
such that the resulting quantifier-free sequent is equationally valid under $\mathcal{A}_E$. This means that the equational Herbrand sequent is valid in first-order logic with equality, and therefore valid in any interpretation of the equality predicate as equality over some domain.
\end{definition}
All Herbrand sequents considered in this paper are understood modulo equality.
Herbrand sequents can be extracted directly from proofs without quantified cuts of prenex end-sequents \cite{hetzl2008herbrand}. 
Moreover, it is possible to extract their schematic version, so-called Herbrand systems, from proof schemata \cite{LCL.2026}.

$\LK$-schemata are based on s-proofs, therefore, in this section, we will restrict s-proofs to derivations without quantified cuts of skolemized, prenex end-sequents. Moreover, s-proofs may contain initial sequents that correspond to (recursive) calls and may not be axioms. By admitting non-axiomatic initial sequents in s-proofs, we do not obtain full but merely partial Herbrand substitutions for quantified formulas in the end-sequents of s-proofs.
\begin{definition}[thread, cf \cite{Takeuti}]
Let $\varphi$ be an s-proof. A thread in $\varphi$ is a path of sequent occurrences in the proof tree beginning at the end-sequent and ending in an initial sequent.
\end{definition}
\begin{definition}[trace, cf \cite{Takeuti}] 
Let $\varphi$ be an s-proof and $\theta \colon \nu_0, \ldots, \nu_{\alpha}$ a thread in $\varphi$. 
Let $\mu_0$ be a formula occurrence in the sequent at occurrence $\nu_0$, denoted as $\Seq(\nu_0)$. 
For every $i \in \{0, \ldots, \alpha-1\}$ let $\mu_{i+1}$ be an occurrence in $\Seq(\nu_{i+1})$, which is an ancestor occurrence of $\mu_i$ in $Seq(\nu_i)$. Then the sequence 
$(\nu_0, \mu_0), \ldots, (\nu_{\alpha}, \mu_{\alpha})$ 
is called a trace of $\mu_0$ in $\theta$. 
A sequence $\mu_0, \ldots, \mu_{\alpha}$ is called a trace of $\mu_0$ in $\varphi$ if there exists a thread $\theta$ in $\varphi$ such that $(\nu_0, \mu_0), \ldots, (\nu_{\alpha}, \mu_{\alpha})$ is a trace of $\mu_0$ in $\theta$.
\end{definition}
\begin{example} \label{ex.trace}
Consider the proof schema from Example \ref{ex.proofschema}, where
$$\rho(n) \defeq \{n=0\colon \chi,\ n>0\colon \varphi^*(\rho,n)\},$$
and the derivation $\chi$=
\[
\infer[w^*]{\nu_0 \colon {\it fdef}, \forall x(P(x) \impl P(f(x))) \seq P(c) \impl P(\fhat(c,\bar{0}))}
 { \infer[\forall\colon l]{\nu_1 \colon \forall x.\fhat(x,\bar{0})=x \seq P(c) \impl P(\fhat(c,\bar{0}))}
    { \infer[\impl\colon r]{\nu_2 \colon \fhat(c,\bar{0}) = c \seq P(c) \impl P(\fhat(c,\bar{0}))}
     { \nu_3 \colon P(c), \fhat(c,\bar{0}) = c \seq  P(\fhat(c,\bar{0}))
      }
    }
  }
\]
Then $\theta \colon \nu_0,\nu_1,\nu_2,\nu_3$ is a thread in $\chi$.
Let $\mu_0$ be the occurrence of the formula $\forall x. \fhat(x, \overline{0}) = x$ (in ${\it fdef}$) in the end-sequent. Then $\tau \colon \mu_0, \mu_1, \mu_2, \mu_3$ is a trace of $\mu_0$ in $\chi$, where $\mu_1$ is the occurrence of $\forall x. \fhat(x, \overline{0}) = x$ in the sequent occurrence $\nu_1$, $\mu_2$ the occurrence of $\fhat(c,\overline{0})=c$ in the sequent occurrence $\nu_2$, and $\mu_3$ the occurrence of $\fhat(c,\overline{0})=c$ in the initial sequent.
\end{example}
\begin{definition}[yield] 
Let $\tau \colon \mu_0, \ldots, \mu_{\alpha}$ be a trace in an s-proof $\varphi$. Then the sequence of formulas 
$formula(\mu_0), \ldots , formula(\mu_{\alpha})$
is called the yield of $\tau$. 
\end{definition}
\begin{example}
Consider the trace $\tau \colon \mu_0, \mu_1, \mu_2, \mu_3$ of $\mu_0$ in $\chi$ from Example \ref{ex.trace}. Then 
$$\forall x. \fhat(x, \overline{0}) = x, \ \forall x. \fhat(x, \overline{0}) = x, \ \fhat(c,\overline{0})=c, \ \fhat(c,\overline{0})=c$$
is a yield of $\tau$.
\end{example}
Note that parameter substitutions can be applied to yields of traces. To every trace $\tau \colon \mu_0, \ldots, \mu_{\alpha}$ we can attach a parameter substitution $\theta$ and obtain $\mu_0 \theta, \ldots, \mu_{\alpha}\theta$. As the $\mu_i$ denote formula occurrences, no parameter substitution can be applied here. However, we can apply the parameter substitution to its yield. Let $formula(\mu_0), \ldots , formula(\mu_{\alpha})$ be the yield of $\tau$, and for each $formula(\mu_i)$ let $\vec{n_i}$ be the parameters in $formula(\mu_i)$. Then $formula(\mu_0) \theta, \ldots , formula(\mu_{\alpha}) \theta$ denotes the sequence of formulas $formula(\mu_i)$ where each $\vec{n_i}$ is replaced by $\vec{n_i} \theta$.
\begin{proposition}
Let $\varphi$ be an s-proof without quantified cuts of a skolemized, prenex end-sequent and $\mu_0$ an occurrence of
$$Q x_1 \ldots Q x_{\beta}. A(x_1, \ldots, x_{\beta})$$
in the end-sequent, where $Q \in \{\exists, \forall\}$ and $A$ is quantifier-free. 
Let $\tau$ be a trace of $\mu_0$ in $\varphi$ which ends in an axiom. Then there is a substitution $\sigma$ such that $A \sigma$ occurs in the yield of $\tau$. 
\end{proposition}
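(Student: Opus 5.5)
The plan is to follow the trace $\tau\colon \mu_0,\ldots,\mu_\alpha$ upward from the end-sequent and maintain an invariant about the shape of the formulas in its yield. Concretely, for every $i\leq\alpha$, $formula(\mu_i)$ has the form $Q x_{j+1}\ldots Q x_\beta.\,A(t_1,\ldots,t_j,x_{j+1},\ldots,x_\beta)$ for some $j$ with $0\leq j\leq\beta$ and some terms $t_1,\ldots,t_j$. We write this as $Q x_{j+1}\ldots Q x_\beta.\,A\{x_1\ass t_1,\ldots,x_j\ass t_j\}$, and call it a partial instance of $Q x_1\ldots Q x_\beta.A$. We prove the invariant by induction on $i$. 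For $i=0$ it holds with $j=0$. Since the end-sequent is skolemized and prenex, all quantifiers $Q$ are weak in their polarity, so $\mu_0$ is a positive existential or a negative universal occurrence. Polarity is preserved along ancestor relations in every rule that can touch $\mu_i$ without changing the formula.

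For the inductive step we do a case distinction on the inference whose conclusion is $\Seq(\nu_i)$ and whose premise $\Seq(\nu_{i+1})$ contains the ancestor $\mu_{i+1}$.
\begin{itemize}
\item If $\mu_i$ is a context formula of the inference, or the principal formula of a contraction, then $formula(\mu_{i+1})=formula(\mu_i)$.
\item If $\mu_i$ is principal in a quantifier rule, the rule must be weak ($\forall\colon l$ or $\exists\colon r$), since the formula has the polarity of a weak quantifier and the end-sequent is skolemized. It therefore removes the outermost quantifier by substituting a term, and $j$ increases by one.
\item $\mu_i$ cannot be principal in a propositional rule while $j<\beta$, since its outermost symbol is then a quantifier. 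If $j=\beta$, then $formula(\mu_i)=A\sigma$ with $\sigma=\{x_1\ass t_1,\ldots,x_\beta\ass t_\beta\}$, and we are already done.
\item $\mu_i$ cannot be principal in a weakening, because then it would have no ancestor and the trace would end at $\nu_i$.
\item Cuts have no principal formula, so $\mu_i$ is context there.
\end{itemize}
The restriction to cuts without quantifiers is not needed for this ancestor argument itself. It matters only in that traces starting in the end-sequent never pass through cut formulas.

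It remains to show that the trace must reach $j=\beta$. Here we use that $\tau$ ends in an axiom. In $\LKN$ with the extended axiom sets of Definition~\ref{def.semi-proof-schema}, all axioms are atomic sequents (tautological, equality, or condition axioms $s\circ t$). So $formula(\mu_\alpha)$ is atomic, in particular quantifier-free. By the invariant it is $Q x_{j+1}\ldots Q x_\beta.A\{x_1\ass t_1,\ldots,x_j\ass t_j\}$, which forces $j=\beta$. Hence $formula(\mu_\alpha)=A\sigma$, or some earlier $formula(\mu_i)$ already equals $A\sigma$ (the propositional case above). Either way $A\sigma$ occurs in the yield.

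The main obstacle is the polarity bookkeeping showing that strong quantifier rules never act on $\mu_i$. One needs that the ancestor relation preserves side and polarity, and that no $\neg$ or $\impl$ rule can apply to a formula whose top symbol is a quantifier. The prenex form settles the latter, and skolemization settles the former. A minor point is the substitution notation: in $\LKN$ the instantiating terms of weak rules may contain parameters, so $\sigma$ is an ordinary $V_b$-instantiation whose range consists of $\Ncal$-terms.
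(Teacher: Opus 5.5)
Your argument is correct and uses the same idea as the paper's proof. Along the trace, the prenex formula is either copied unchanged (as context, through contraction, or through the context of a cut) or loses its outermost quantifier by a quantifier inference that substitutes a term. Because all axioms are atomic, every quantifier must be instantiated before the trace ends.

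The only difference is how the induction is organised. The paper inducts on the number of quantifiers: it peels off $x_1$ and recurses on the remaining part of the trace. You instead induct along the trace with a partial-instance invariant; strictly, that invariant should be stated only up to the first index where $j=\beta$.

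Your polarity discussion is harmless but not needed. Even a strong quantifier rule would substitute an eigenvariable, which is also a term, so the invariant would hold anyway.
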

\begin{proof}
By induction on the number of quantifiers in $Q x_1 \ldots Q x_{\beta}. A(x_1, \ldots, x_{\beta})$.

Base case: There is one quantifier, therefore the formula is of the form $Q x A$. We follow the trace $\tau$ of the occurrence $\mu_0$ of $Q x A$ in the derivation. By the subformula property, and axioms being atomic, the quantified variable $x$ in $Q x A$ will be replaced by a term $t$ and thus, $\{x \leftarrow t \}$ is the desired substitution $\sigma$.

IH: For $n$ quantifiers in the formula $Q x_1 \ldots Q x_{n}. A(x_1, \ldots, x_{n})$ there exists a substitution $\sigma$ such that $A \sigma$ occurs in the yield of $\tau$.

Now consider the case of $n+1$ quantifiers in $Q x_1 \ldots Q x_{n+1}. A(x_1, \ldots , x_{n+1})$. We follow the trace $\tau$ of the occurrence of $\mu_0$ of $Q x_1 \ldots Q x_{n+1}. A(x_1, \ldots , x_{n+1})$ in the derivation. By the subformula property, and axioms being atomic, the first quantified variable that will be eliminated is $Q x_1$. Therefore, $x_1$ will be replaced by a term $t_1$ and a substitution $\sigma_1 = \{x_1 \leftarrow t_1\}$ is obtained. Thus, $Q x_1 \ldots Q x_{n+1}. A(x_1, \ldots , x_{n+1})$ will change to $A' \colon Q x_2 \ldots Q x_{n+1}. A\sigma_1$. In $A'$ there are only $n$ quantifiers left and we apply the induction hypothesis, i.e. there is a substitution $\sigma$ such that the formula $A\sigma_1 \sigma$ occurs in the yield of $\tau$. $\sigma_1 \sigma$ is the desired substitution.
\end{proof}
As $s$-proofs might have non-axiomatic initial sequents, we need a machinery to trace the substitutions in the different derivations.
\begin{definition} \label{def.hi1}
Let $\varphi$ be an s-proof without quantified cuts of a skolemized, prenex end-sequent $S$ and let $\mu_0$ be an occurrence of a formula
$$F \colon Qx_1 \ldots Qx_{\beta} F'(x_1, \ldots , x_{\beta})$$
in $S$ such that $Q \in \{\forall, \exists\}$ and $F'$ is quantifier-free. 
Let $\tau \colon \mu_0, \ldots, \mu_{\alpha}$ be a trace of $\mu_0$ in $\varphi$. We distinguish two cases: \\[1ex]
[A] $\mu_{\alpha}$ occurs in an axiom: Then there exist terms $t_1, \ldots, t_{\beta}$ such that formula$(\mu_{\alpha})$ is a subformula of $F'(t_1, \ldots, t_{\beta})$ and $F'(t_1, \ldots, t_{\beta})$ occurs in the yield of $\tau$. \\[1ex]
[B] $\mu_{\alpha}$ does not occur in an axiom: Then the sequent at node $\nu_{\alpha}$ is of the form 
\[
\infer[\rho'\theta]{S'}{\nu_{\alpha} \colon \rho'(m_1,\ldots,m_l)\theta}
\]
Then either
\begin{enumerate}
	\item formula$(\mu_{\alpha-1}) = F$, or
	\item there exists an $i \in \{1, \ldots, \beta-1\}$ and terms $t_1, \ldots, t_i$ such that 
	$$ \mbox{formula}(\mu_{\alpha-1}) = Qx_{i+1} \ldots Qx_{\beta} F'(t_1, \ldots, t_i, x_{i+1}, \ldots, x_{\beta}),$$
	or
	\item there exist terms $t_1, \ldots, t_{\beta}$ such that formula$(\mu_{\alpha-1})$ is a subformula of $F'(t_1, \ldots, t_{\beta})$ and $F'(t_1, \ldots, t_{\beta})$ occurs in the yield of $\tau$.
\end{enumerate}
\end{definition}
With the definition above, we can assign a substitution $hi(\tau)$ to each trace $\tau$ in an s-proof $\varphi$ of a skolemized prenex end-sequent, such that either $hi(\tau)$ is a Herbrand substitution or just a partial one.
\begin{definition} \label{def.hi2}
Let $\pi$ be an s-proof without quantified cuts of a skolemized, prenex end-sequent $S$ and let $\mu_0$ be an occurrence of a formula
$$F \colon Qx_1 \ldots Qx_{\beta} F'(x_1, \ldots , x_{\beta})$$
in $S$ such that $Q \in \{\forall, \exists\}$ and $F'$ is quantifier-free. 
Let $\tau \colon \mu_0, \ldots, \mu_{\alpha}$ be a trace of $\mu_0$ in $\pi$. For [A] in Definition \ref{def.hi1}, $hi(\tau) = \{x_1 \leftarrow t_1, \ldots, x_{\beta} \leftarrow t_{\beta}\}$. For [B], we consider the cases $1.$, $2.$, and $3.$:
\begin{itemize}
	\item case $1.$: $hi(\tau) = \emptyset$,
	\item case $2.$: $hi(\tau) = \{x_1 \leftarrow t_1, \ldots, x_i \leftarrow t_i\}$,
	\item case $3.$: $hi(\tau) = \{x_1 \leftarrow t_1, \ldots, x_{\beta} \leftarrow t_{\beta}\}$.
\end{itemize}
\end{definition}
Before we go on with the formal definitions, let us have a look at an example where we compute a set of (schematic) Herbrand substitutions for a quantified formula.
\begin{example}\label{ex.HerbrandSchemaSimple}
Consider the proof schema from Example \ref{ex.proofschema}, where
$$\rho(n) \defeq \{n=0\colon \chi,\ n>0\colon \varphi^*(\rho,n)\}.$$
The Herbrand system of $\rho(n)$ will be defined for each occurrence $\mu_0$ of a quantified formula in the end-sequent of $\rho(n)$. In this example, we will compute the Herbrand system for $\mu_0$, the occurrence of the quantified formula 
$$\forall x (P(x) \to P(f(x)))$$
in the end-sequent. Informally, the Herbrand system for $\mu_0$ is defined over the same partition as the proof schema:
$$\Theta^*(\rho, \mu_0, \lambda, n) \defeq \{n=0\colon \Theta(\chi, \mu_0, \lambda, n),\ n>0\colon \Theta(\varphi^*, \mu_0, \lambda, n)\},$$
where $\Theta(\chi, \mu_0, \lambda, n)$ and $\Theta(\varphi^*, \mu_0, \lambda,n)$ are the Herbrand systems for $\mu_0$ in $\chi$ and in $\varphi^*$, respectively. The substitution $\lambda$ can be ignored for now, but will become important later on (assume $\lambda = id$, the identical parameter substitution, for now).
\begin{enumerate}
\item We start with the partition $n = 0$ and $\Theta(\chi, \mu_0, \lambda,n)$. In $\chi$ there is no trace for $\mu_0$, because the formula $\forall x (P(x) \to P(f(x)))$ disappears by weakening, so $\Theta(\chi, \mu_0, \lambda,n) = \emptyset$.

\item For $n > 0$ we compute $\Theta(\varphi^*, \mu_0, \lambda, n)$. Here we see that there are two traces for $\mu_0$ in $\varphi^*$. The first trace $\tau_1$ proceeds on the left branch of the derivation and ends in the initial sequent $\rho(p(n))$ after an application of $\rho \theta$, for $\theta = \{n \leftarrow p(n)\}$.
The second trace $\tau_2$ proceeds on the right branch of the derivation ending in the sub-derivation $\psi_2$.
In the final Herbrand system for $\mu_0$ we have to take into account all substitution instances from all traces, thus, $\Theta(\varphi^*, \mu_0, \lambda,n)$ is defined as 
$$\Theta(\varphi^*, \tau_1, \lambda,n) \cup \Theta(\varphi^*, \tau_2, \lambda, n).$$ 
In both traces, Herbrand instances for $\mu_0$ are defined. It is easy to see that in $\tau_2$, $x$ is substituted with $\fhat(c,p(n))$, and we obtain a Herbrand instance for the formula $\forall x (P(x) \to P(f(x)))$ given by the substitution
$$\Theta(\varphi^*, \tau_2, \lambda,n) = \{x \leftarrow \fhat(c, p(n))\}\lambda.$$
The application of $\lambda$ to the substitution above can be ignored for this example, as $\lambda = id$. However, as we will explain later on, $\lambda$ might not always be $id$.

In $\tau_1$ we enter a recursion as $\varphi^*$, and hence also $\rho(n)$, is defined over $\rho(p(n))$ by a substitution $\theta = \{n \leftarrow p(n)\}$. Therefore, the Herbrand system for $\mu_0$ in $\rho(n)$ for $n>0$ will also be defined over the Herbrand system for $\mu_0$ in $\rho(p(n))$:
$$\Theta(\varphi^*, \tau_1, \lambda,n) = \Theta^*(\rho, \mu_0, \lambda \theta, p(n)).$$
Note that here, $\theta$ is indeed a substitution and not $id$ as for $\mu_0$ in the beginning.
So, 
$$\Theta(\varphi^*, \mu_0, \lambda,n) = \Theta^*(\rho, \mu_0, \lambda \theta, p(n)) \cup \{x \leftarrow \fhat(c, p(n))\}\lambda.$$
The proof schema from Example \ref{ex.proofschema} was used in Example \ref{ex. semi-proof-schema} to define a more complex proof schema. There, an initial sequent of the form $\rho(s(n))$ occurred after an application of $\rho \{n \leftarrow s(n)\}$. Therefore, the Herbrand system of the proof schema in Example \ref{ex. semi-proof-schema} will be defined over the Herbrand system we computed in this example. And now also the role of $\lambda$ (which was $id$ in this example) becomes obvious. Indeed, we will have to set $\lambda = \{n \leftarrow s(n)\}$ when using the Herbrand system computed in this example to compute the Herbrand system of Example \ref{ex. semi-proof-schema}.
\end{enumerate}
For some $n = \alpha$, we therefore obtain the set of Herbrand instances 
%
$$\fhat(c, \alpha), \ldots , \fhat(c, 0).$$
\end{example}
As indicated in the example above, to compute a Herbrand system for a formula occurrence $\mu_0$ we have to consider all traces of that formula in a proof schema. 
In $\LK$-schemata, s-proofs might define recursions by initial sequents linking to proof terms $\rho_i$. When we define the trace of a formula occurrence in an end-sequent of an s-proof, we need a machinery to track all traces of that formula that go into $\rho_i$. 
\begin{definition}
Let $\rho$ be an s-proof and $\mu_0$ a formula occurrence in its end-sequent. The set of all traces $T(\rho, \mu_0, \theta)$ for $\mu_0$, where $\theta$ is a parameter substitution, is defined as the union of all traces for $\mu_0$ in $\rho$, and, if the corresponding thread ends in an initial sequent of the form
\[
\infer[\rho'\theta']{S'}{\nu_{\alpha} \colon \rho'(m_1,\ldots,m_l)\theta'}
\]
the trace at node $\nu_{\alpha}$ is set to $T(\rho', \mu_0', \theta'\theta)$, where $\mu_0'$ is the corresponding ancestor occurrence of $\mu_0$ in $\rho'$.
\end{definition}
\begin{definition}[Herbrand system] \label{def.Herbrandschema}
Let $\Pbf \colon (\rho_0,R,\Ncal_0,t,\Seq,\Pi)$ be a proof schema, $\rho \in R$ such that the corresponding $s$-proof, of the form $\xi_i(\rho)(m_1,\ldots,m_l)$ for some condition $C_i$ and some parameter list $(m_1,\ldots,m_l)$, has quantifier-free cuts and is skolemized and prenex. Let $\mu_0$ be an occurrence of a formula
$$F \colon Qx_1 \ldots Qx_{\beta} F'(x_1, \ldots, x_{\beta})$$
in $\Seq(\rho)$ such that $Q \in \{\forall, \exists\}$ and $\tau \colon \mu_0, \ldots, \mu_{\alpha}$ a trace of $\mu_0$ in $\xi_i(\rho)$. \\[1ex]
We consider $\xi_i(\rho)(m_1, \ldots, m_l)$ under the condition $C$:

\begin{enumerate}
	\item formula$(\mu_{\alpha})$ is quantifier-free ($\lambda$ a parameter substitution). Then $\Theta(\xi_i(\rho),\tau,\lambda, m_1,\ldots,m_l) = \{hi(\tau)\lambda\}$. 
	
	\item $\mu_{\alpha}$ occurs in an initial sequent of the form $\rho'\theta[\tau]$ and formula$(\mu_{\alpha-1}) = F$. Then 
	$$\Theta(\xi_i(\rho), \tau,\lambda, m_1,\ldots,m_l) = \Theta^*(\rho',\mu_0(\rho'),\lambda\theta[\tau], m_1,\ldots,m_l),$$
	where $\mu_0(\rho')$ is the occurrence of $F$ in $\rho'$.
	
	\item $\mu_{\alpha}$ occurs in an initial sequent of the form $\rho'\theta[\tau]$ and formula$(\mu_{\alpha-1}) =$ $Qx_{i+1} \ldots$ $Qx_{\beta}$ $F'(t_1, \ldots ,$ $t_i, x_{i+1}, \ldots, x_{\beta})$. Then 
$$\Theta(\xi_i(\rho),\tau,\lambda,m_1,\ldots,m_l) = \{\{x_1 \leftarrow t_1, \ldots, x_i \leftarrow t_i\}\} \Theta^*(\rho',\mu_0(\rho'),\lambda\theta[\tau],m_1,\ldots,m_l).$$
\end{enumerate}
We define
$$\Theta(\xi_i(\rho), \mu_0,\lambda,m_1,\ldots,m_l) = \bigcup \{\Theta(\xi_i(\rho), \tau,\lambda, m_1,\ldots,m_l) \ | \ \tau \in T(\rho, \mu_0,\lambda)\}.$$
Let $D(\rho)$ be given as 
\begin{eqnarray*}
\rho(m_1,\ldots,m_l) &\defeq& \{C_1\colon \xi_1(\rho)(m_1,\ldots,m_l), \ldots, C_k\colon \xi_k(\rho)(m_1,\ldots,m_l)\}
\end{eqnarray*}
Then the Herbrand system of $\rho$ for $\mu_0$ can be defined as
\begin{eqnarray*}
\Theta^*(\rho,\mu_0,\lambda,m_1,\ldots,m_l) &\defeq& \{C_1(\rho)\colon \Theta(\xi_1(\rho),\mu_0,\lambda,m_1,\ldots,m_l), \ldots, \\
 & & \mbox{ } \  C_k(\rho)\colon \Theta(\xi_k(\rho),\mu_0,\lambda,m_1,\ldots,m_l)\},
\end{eqnarray*}
Finally, the Herbrand system for the proof schema $\Pbf$ is defined as $D(\Theta^*)=$
\begin{eqnarray*}
\Theta^*(\rho_0,\mu_0,id,m_1,\ldots,m_l) &\defeq& \{C_1(\rho_0)\colon \Theta(\xi_1(\rho_0),\mu_0,id,m_1,\ldots,m_l), \ldots, \\
 & & \mbox{ } \  C_k(\rho_0)\colon \Theta(\xi_k(\rho_0),\mu_0,id,m_1,\ldots,m_l)\},
\end{eqnarray*}
where $id$ is the identical parameter substitution on $\{m_1, \ldots, m_l\}$.

\end{definition}
\begin{example} \label{ex.Herbrandschemapartial}
Consider the proof schema $(\rho_0,\{\rho_0,\rho\},\{n\},t,\Seq,\Pi)$ from Example \ref{ex. semi-proof-schema}, where
$$\Pi(\rho_0(n)) = \rho_0(n) \defeq \{n=0\colon \xi_1(\rho_0)(n),\ n>0\colon \xi_2(\rho_0)(n)\}.$$
Let $\mu_0$ be the occurrence of the formula $\forall x (P(x) \to P(f(x)))$ in the end-sequent. We define the Herbrand system for $\mu_0$ as
$$\Theta^*(\rho_0, \mu_0, id, n) \defeq \{n=0\colon \Theta(\xi_1(\rho_0), \mu_0, id,n),\ n>0\colon \Theta(\xi_2(\rho_0), \mu_0, id, n)\}.$$
Let us start with $\Theta(\xi_1(\rho_0), \mu_0, id,n)$. In $\xi_1(\rho_0)(n)$ there is only one trace $\tau$ for $\mu_0$ ending in an initial sequent with formula$(\mu_{\alpha})$ quantifier-free (case 1. in the definition above), so we define the set of Herbrand substitutions 
$$\Theta(\xi_1(\rho_0), \tau, id,n) = hi(\tau) = \{x \leftarrow c\}.$$
Let us proceed with $\Theta(\xi_2(\rho_0), \mu_0, id, n)$. In $\xi_2(\rho_0)(n)$ there is again only one trace $\tau$ for $\mu_0$, which ends in the initial sequent $\rho(s(n))$ after the application of $\rho \theta$ for $\theta = \{n \leftarrow s(n)\}$. We are in case 2. in the definition above, and obtain
$$\Theta(\xi_2(\rho_0), \tau, id, n) = \Theta^*(\rho, \mu_0(\rho), \theta, n),$$
where $\Theta^*(\rho, \mu_0(\rho), \theta, n)$ is as in Example \ref{ex.HerbrandSchemaSimple}:
$$\Theta^*(\rho, \mu_0(\rho), \theta, n) \defeq \{n=0\colon \emptyset,\ n>0\colon \Theta(\varphi^*, \mu_0(\rho), \theta, n)\},$$
where 
$\Theta(\varphi^*, \mu_0(\rho), \theta, n) = \Theta^*(\rho, \mu_0(\rho), \theta\theta', n) \cup \{ \{x \leftarrow \fhat(c, p(n))\}\theta \}$, and $\theta' = \{n \leftarrow p(n)\}.$
Therefore, 
$$\Theta(\varphi^*, \mu_0(\rho), \theta, n) = \Theta^*(\rho, \mu_0, \theta\theta', n) \cup \{ \{x \leftarrow \fhat(c, n)\}\}.$$
It is easy to see that in this example, for some numeral $\alpha$, we obtain the sequence $\fhat(c, \alpha), \ldots, c$. In general, Herbrand systems can be evaluated for any given parameter assignment.
\end{example}
\begin{definition} \label{def.semantics}
Given a Herbrand system $\Theta^*(\rho,\mu,\lambda,m_1, \ldots, m_l)$ and a parameter assignment $\sigma$, we can evaluate the Herbrand system for any $\rho$ and $\mu$, denoted as $\sigma(\Theta^*(\rho,\mu,\lambda,m_1, \ldots, m_l))\Eval$.
Let 
\begin{eqnarray*}
\Theta^*(\rho,\mu_0,\lambda,m_1,\ldots,m_l) &\defeq& \{C_1(\rho)\colon \Theta(\xi_1(\rho),\mu_0,\lambda,m_1,\ldots,m_l), \ldots, \\
 & & \mbox{ } \  C_k(\rho)\colon \Theta(\xi_k(\rho),\mu_0,\lambda,m_1,\ldots,m_l)\},
\end{eqnarray*}
$\sigma'(m_i) = \sigma(\lambda(m_i)) = r_i$ for numerals $r_1 , \ldots, r_l$. Then there exists exactly one $i$ such that 
$\sigma'(C_i(\rho)) = \top$ for $\sigma'(m_i)=r_i$, $i = 1,\ldots,l$, and
$$\sigma(\Theta^*(\rho,\mu_0,\lambda,m_1,\ldots,m_l))\Eval \ = \sigma'(\Theta(\xi_i(\rho),\mu_0,id,m_1,\ldots,m_l))\Eval.$$
We have
$$\sigma'(\Theta(\xi_i(\rho), \mu_0,id,m_1,\ldots,m_l))\Eval \ = \bigcup \{\sigma'(\Theta(\xi_i(\rho), \tau,id, m_1,\ldots,m_l))\Eval \ | \ \tau \in T(\rho, \mu_0,id)\}.$$
We evaluate $\sigma'(\Theta(\xi_i(\rho), \tau,id, m_1,\ldots,m_l))\Eval$ and consider the cases in Definition \ref{def.Herbrandschema}:
\begin{enumerate}
	\item 
	$$\sigma'(\Theta(\xi_i(\rho),\tau,id,m_1,\ldots,m_l))\Eval \ = \{\sigma'(hi(\tau))\}.$$
	
	\item
	$$\sigma'(\Theta(\xi_i(\rho),\tau,id,m_1,\ldots,m_l))\Eval \ = \sigma'(\Theta^*(\rho', \mu_0(\rho'),\theta[\tau],m_1, \ldots, m_l))\Eval.$$
	
	\item
	\begin{eqnarray*}
	\sigma'(\Theta(\xi_i(\rho),\tau,id,m_1,\ldots,m_l))\Eval &=& \sigma'(\{\{x_1 \leftarrow t_1, \ldots, x_i \leftarrow t_i\}\}) \\
	 & & \   \sigma'(\Theta^*(\rho', \mu_0(\rho'), \theta[\tau], m_1, \ldots, m_l))\Eval.
	\end{eqnarray*}
\end{enumerate}
\end{definition}
\begin{theorem} \label{th.HerbrandsystemSoundComplete}
The Herbrand system of a proof schema $\Pbf \colon (\rho_0,R,\Ncal_0,t,\Seq,\Pi)$ is sound and complete. 
\end{theorem}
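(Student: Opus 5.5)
We first fix what soundness and completeness mean here. For every parameter assignment $\sigma$ and every occurrence $\mu_0$ of a prenex formula $F\colon Qx_1\ldots Qx_\beta F'$ in $\Seq(\rho_0)$, let $\pi_\sigma=\sigma(\rho_0(n_1,\ldots,n_k))\Eval$ be the $\LK$-proof given by Theorem~\ref{the.proof-schema}. \emph{Completeness} means that every Herbrand instance of $F$ arising in $\pi_\sigma$ belongs to $\sigma(\Theta^*(\rho_0,\mu_0,id,n_1,\ldots,n_k))\Eval$. Such an instance is a substitution $\{x_1\ass t_1,\ldots,x_\beta\ass t_\beta\}$ read off a trace of the occurrence of $\sigma(F)$ in $\pi_\sigma$ ending in an axiom. \emph{Soundness} means the converse inclusion: every element of the evaluated Herbrand system is such an instance. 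Together, replacing each quantified formula of $\sigma(\Seq(\rho_0))$ by its evaluated instances then gives an equational Herbrand sequent of $\sigma(\Seq(\rho_0))$, by the known extraction result for proofs without quantified cuts~\cite{hetzl2008herbrand}.

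The first step is to check that the evaluation in Definition~\ref{def.semantics} terminates for every $\sigma$. The recursive calls of $\Theta^*$ follow exactly the proof-term calls of the schema: in cases 2 and 3 of Definition~\ref{def.Herbrandschema}, $\Theta^*(\rho',\ldots,\lambda\theta[\tau],\ldots)$ is invoked precisely where $\xi_i(\rho)$ has the leaf $\rho'\theta[\tau]$. The accumulated substitution $\lambda$ is exactly the composition of parameter substitutions that Definition~\ref{def.sem-proofterms} uses to compute $\sigma'$. Hence any infinite evaluation of the Herbrand system yields an infinite path in $p(\Pbf)[\sigma]$. Since $p(\Pbf)$ is canonic, this is impossible by Theorem~\ref{theo.canonic-terminate}.

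The main claim we then prove is stronger and parametric in $\rho$ and $\lambda$. For every $\rho\in R$, every occurrence $\mu$ in $\Seq(\rho)$ and every $\lambda,\sigma$, with $\sigma'=\sigma\circ\lambda$, the set $\sigma(\Theta^*(\rho,\mu,\lambda,\vec m))\Eval$ equals the set of Herbrand substitutions for $\mu$ obtained from traces in $\sigma'(\rho(\vec m))\Eval$. We prove this by Noetherian induction on the well-founded ordering $<_C$ on labeled points underlying canonicity. Fix the unique $C_i$ with $\sigma'(C_i)=\top$. The proof $\sigma'(\rho)\Eval$ is $\sigma'(\zeta)$, where $\zeta$ is the core of $\xi_i(\rho)$, with each proof-term leaf $\rho'\theta$ replaced by $\sigma''(\rho')\Eval$ for $\sigma''=\sigma'\circ\theta$. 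Every trace of $\sigma'(\mu)$ in this proof either stays inside $\sigma'(\zeta)$ or enters one of the plugged subproofs through the occurrence $\mu_0(\rho')$.

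For traces staying inside $\sigma'(\zeta)$, Proposition~\ref{prop.sem-LKN} and the homomorphic action of $\sigma'$ on formulas give $\sigma'(hi(\tau))$. This is case 1, with the schematic term instantiated by $\sigma'$, which is exactly how $hi(\tau)\lambda$ is evaluated. For traces entering a plugged subproof, the prefix of quantifier instantiations in the core is $\{x_1\ass t_1,\ldots,x_i\ass t_i\}$, empty in case 2. The remaining instantiations come from traces in $\sigma''(\rho')\Eval$, and by the induction hypothesis these are given exactly by $\Theta^*(\rho',\mu_0(\rho'),\lambda\theta,\ldots)$. The hypothesis applies because $(\rho',\vec m\theta)<_{C_i}(\rho,\vec m)$. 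Composition gives case 3. Since the set $T(\rho,\mu,\lambda)$ ranges over all traces, the union over traces yields both inclusions at once.

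The main obstacle is the bookkeeping of the parameter substitutions. We must check that $\sigma(\cdot)$ with accumulated $\lambda\theta$ coincides with $\sigma''$. We must also show that the composition of a partial Herbrand substitution with the recursive ones produces exactly the substitutions read off the concatenated traces. For this, we use that cuts are quantifier-free: every trace of a quantified end-sequent formula therefore consists only of weak or strong quantifier steps and propositional steps, never passing through a cut. The skolemized prenex form ensures the quantifiers are stripped in order $x_1,\ldots,x_\beta$, which is what the proposition before Definition~\ref{def.hi1} provides.
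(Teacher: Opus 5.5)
Your architecture is essentially the paper's:
\begin{itemize}
\item termination of the evaluation of $\Theta^*$ from canonicity of $p(\Pbf)$;
\item well-founded induction along the recursive calls;
\item a case split on trace endpoints mirroring the three clauses of Definition~\ref{def.Herbrandschema};
\item a union over all traces.
\end{itemize}
Where you go beyond the paper is in compressing both directions into one invariant. You claim that $\sigma(\Theta^*(\rho,\mu,\lambda,\vec m))\Eval$ \emph{equals} the set of substitutions read off those traces of $\sigma'(\mu)$ in $\sigma'(\rho(\vec m))\Eval$ that end in an axiom. That invariant is false, so the soundness half of your argument, as you have defined soundness, does not go through.

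The failing step is your dichotomy: traces staying in the core give clause~1, and traces entering a plugged subproof give clauses~2 and~3. There is a third kind of trace. A core trace can reach a proof-term leaf $\rho'\theta$ with the formula already fully instantiated (Definition~\ref{def.hi1}, case [B]3; case [B]2 only allows $i\le\beta-1$). Such a trace falls under clause~1 of Definition~\ref{def.Herbrandschema}, which returns $hi(\tau)\lambda$ without recursing into $\rho'$. In the evaluated proof, however, the plugged subproof may introduce that quantifier-free formula by weakening, so no axiom-ending trace witnesses the substitution.

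Here is a concrete instance. Let $\Seq(\rho_1)=P(c),Q(n)\seq Q(n)$ be proved by $w\colon l$ from $Q(n)\seq Q(n)$. Let $\Seq(\rho_0)=\forall x.P(x),Q(n)\seq Q(n)$ have as core a single $\forall\colon l$ whose premise is the leaf $\rho_1(n)$. This is a cut-free proof schema, and it is canonic because $\rho_1$ never calls $\rho_0$. Its Herbrand system for $\forall x.P(x)$ is $\{\{x\ass c\}\}$. Yet $\sigma(\rho_0(n))\Eval$ contains no trace of $\forall x.P(x)$ ending in an axiom.

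Your completeness inclusion is unaffected. An axiom-ending trace in $\pi_\sigma$ that passes through such a leaf has all its instantiations in the core, so clause~1 catches it. For soundness, though, you need the weaker target the paper uses (``the trace ends in an axiom or quantifier-free occurrence''): each generated substitution yields an instance that actually occurs on an ancestor path of $\sigma(\mu_0)$ in $\pi_\sigma$.

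The correct invariant is therefore a sandwich rather than an equality: the substitutions from axiom-ending traces are contained in $\sigma(\Theta^*)\Eval$, which is contained in the substitutions of instances occurring in $\pi_\sigma$. Your induction proves this sandwich unchanged. It still supports your Herbrand-sequent remark, since superfluous instances only weaken a valid sequent.
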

\begin{proof}
We want to show that every substitution produced by the Herbrand system corresponds to an actual trace of the quantified formula in the proof schema and therefore yields a Herbrand instance occurring in the evaluated proof (Soundness), and that every Herbrand instance occurring in the evaluated proof $\sigma(\rho_0)\Eval$ is generated by the Herbrand system (Completeness). \\[1ex]
Completeness follows almost immediately. 
Let $\sigma$ be a parameter assignment. By Theorem \ref{the.proof-schema}, $\sigma(\rho_0)\Eval$ is an $\LK$-proof. Completeness follows from the definition of $\Theta^*$: every trace of a quantified formula occurrence in the evaluated proof is considered, and the recursive clauses of Definitions \ref{def.Herbrandschema} and \ref{def.semantics} exhaust all possibilities for the endpoint of a trace. Hence, every Herbrand instance occurring in $\sigma(\rho_0)\Eval$ is generated by $\sigma(\Theta^*)\Eval$.\\[1ex]
%
%
Soundness can be shown by induction on the evaluation of $\sigma(\Theta^*(\rho,\mu,\lambda,m_1,\ldots,m_l))\Eval$. Since the underlying proof schema is canonical, recursion terminates. Thus the induction is well-founded.\\[1ex]
Base case: Case 1 of Definition \ref{def.semantics}
$$\sigma(\Theta^*(\rho,\mu,\lambda,m_1,\ldots,m_l))\Eval = \{\sigma(h_i(\tau))\}.$$
Here the trace ends in an axiom or quantifier-free occurrence. By Definition \ref{def.hi2} $h_i(\tau)$ is obtained from the actual instantiation terms occurring on the trace. Therefore, the resulting substitution is a Herbrand substitution.\\[1ex]
Assume as the IH that for every recursive call strictly below the current one, $\sigma(\Theta^*(\rho,\mu,\lambda,m_1,\ldots,m_l))\Eval$ produces valid Herbrand substitutions for the corresponding proof instance. \\[1ex]
Now consider the recursive cases. In case 2, 
$$\sigma(\Theta(\xi_i(\rho),\tau,id,m_1, \ldots, m_l)\Eval \ = \sigma(\Theta^*(\rho', \mu_0(\rho'), \theta[\tau], m_1,\ldots,m_l))\Eval.$$
By the IH, the recursive evaluation already returns correct Herbrand substitutions for the called proof term.  Hence the current evaluation is sound. \\[1ex]
In case 3, we have
$$\sigma(\Theta(\xi_i(\rho),\tau,id,m_1, \ldots, m_l)\Eval \ = \sigma(\{\{x_1 \leftarrow t_1, \ldots , x_i \leftarrow t_i \}\}) \sigma(\Theta^*(\rho', \mu_0(\rho'), \theta[\tau], m_1,\ldots,m_l))\Eval.$$
The first component comes from the instantiations already performed along the trace (Definition \ref{def.hi2} case 2 or 3). The second component is sound by the induction hypothesis. Therefore their application is sound.\\[1ex]
Since Definition \ref{def.semantics} takes the union over all traces, every generated substitution corresponds to an actual trace in the proof and therefore to a valid Herbrand instance, and soundness follows.
\end{proof}
Given a schematic end-sequent $S$ together with the Herbrand systems for the occurrences of quantified formulas in $S$, a schematic Herbrand sequent can be constructed in a straightforward way.
\begin{definition}
Let $\Pbf$ be a proof schema as in Definition \ref{def.Herbrandschema}, and let $\mu_0, \ldots , \mu_n$ be all the occurrences of quantified formulas
$$F_i \colon Qx_1 \ldots Qx_{\beta_i} F_i'(x_1, \ldots, x_{\beta_i})$$
in its end-sequent $Seq(\rho)$. Let
\begin{eqnarray*}
\Theta^*(\rho_0,\mu_i,id,m_1,\ldots,m_l) &\defeq& \{C_1(\rho_0)\colon \Theta(\xi_1(\rho_0),\mu_i,id,m_1,\ldots,m_l), \ldots, \\
 & & \mbox{ } \  C_k(\rho_0)\colon \Theta(\xi_k(\rho_0),\mu_i,id,m_1,\ldots,m_l)\},
\end{eqnarray*}
be the Herbrand systems for $\mu_i$ for $0 \leq i \leq n$.
Then the sequent $H(Seq(\rho))$ is obtained from $Seq(\rho)$ by replacing all occurrences of $\mu_i$ for $0 \leq i \leq n$ by 
$$F_i'(x_1, \ldots, x_{\beta_i}) \Theta^*(\rho_0,\mu_i,id,m_1,\ldots,m_l),$$
where $F \Theta^*(\rho_0,\mu_i,id,m_1,\ldots,m_l)$ denotes the sequence of formulas $F \theta_1, \ldots , F \theta_n$,  and $\theta_1, \ldots , \theta_n$ are all the substitutions in $\Theta^*(\rho_0,\mu_0,id,m_1,\ldots,m_l)$.
$H(Seq(\rho))$ is called the schematic Herbrand sequent of $\Pbf$.
\end{definition}
\begin{example}
Consider the proof schema $(\rho_0,\{\rho_0,\rho\},\{n\},t,\Seq,\Pi)$ as in Example \ref{ex.Herbrandschemapartial}, where
$$\Pi(\rho_0(n)) = \rho_0(n) \defeq \{n=0\colon \xi_1(\rho_0)(n),\ n>0\colon \xi_2(\rho_0)(n)\}.$$
The Herbrand system for $\mu_0$ is already computed in Example \ref{ex.Herbrandschemapartial}, but to compute the schematic Herbrand sequent of the proof schema, we have to consider the Herbrand systems of the other quantified formulas in the end-sequent
$$\Seq(\rho_0) = P(c), {\it fdef}, \forall x(P(x) \impl P(f(x))) \seq P(\fhat(c,s(n)))$$
as well. Let $\mu_1$ be the occurrence of $\forall x.\fhat(x,\bar{0}) = x$, and $\mu_2$ the occurrence of $\forall x \forall z(\fhat(x,s(z)) = f (\fhat(x,z)))$ in ${\it fdef}$.
The Herbrand schema for $\mu_1$ is 
$$\Theta^*(\mu_1) = \Theta^*(\rho_0, \mu_1, id, n) \defeq \{n=0 \colon \emptyset, \ n >0 \colon \Theta^*(\rho, \mu_1, \{n \leftarrow s(n)\}, n)\},$$
where 
$$\Theta^*(\rho, \mu_1, \{n \leftarrow s(n)\}, n) \defeq \{n=0 \colon \{ x \leftarrow c\}, \ n >0 \colon \Theta^*(\rho, \mu_1, \{n \leftarrow s(n)\}\{n \leftarrow p(n)\}, n )\}.$$
Therefore, we obtain the sequence of substitutions $\{x \leftarrow c\}$ independent of $n$. The Herbrand schema for $\mu_2$ is
$$\Theta^*(\mu_2) = \Theta^*(\rho_0, \mu_2, id, n) \defeq \{n=0 \colon \{x \leftarrow c, z \leftarrow 0\}, \ n >0 \colon \Theta^*(\rho, \mu_2, \{n \leftarrow s(n)\}, n)\},$$
where $\Theta^*(\rho, \mu_2, \{n \leftarrow s(n)\}, n) \defeq$
$$\{n=0 \colon \emptyset, \ n >0 \colon \{\{x \leftarrow c, z \leftarrow p(n)\}\} \cup \Theta^*(\rho, \mu_2, \{n \leftarrow p(n)\}\{n \leftarrow s(n)\}, n )\}.$$
Therefore, for a numeral $\alpha$, we obtain the sequence of substitutions $\{x \leftarrow c, z \leftarrow \alpha\}, \ldots , \{x \leftarrow c, z \leftarrow 0\}$.
The schematic Herbrand sequent $H(Seq(\rho))$ is 
$$P(c), (\fhat(x,\bar{0}) = x) \Theta^*(\mu_1), (\fhat(x,s(z)) = f(\fhat(x,z)))\Theta^*(\mu_2), (P(x) \impl P(f(x)))\Theta^*(\mu_0) \seq P(\fhat(c,s(n))),$$
where $\Theta^*(\mu_0) = \Theta^*(\rho_0, \mu_0, id, n)$ from Example \ref{ex.Herbrandschemapartial}. 
For $\sigma(n)=0$ we obtain the Herbrand sequent by replacing each formula in the end-sequent by its evaluation:
\begin{eqnarray*}
\sigma(P(c))\Eval &=& P(c),\\
\sigma((\fhat(x,\bar{0}) = x)) \sigma(\Theta^*(\mu_1))\Eval &=& \fhat(c,\bar{0}) = c,\\
\sigma((\fhat(x,s(z)) = f(\fhat(x,z))))\sigma(\Theta^*(\mu_2))\Eval &=& \fhat(c,s(0)) = f(\fhat(c,0)),\\
\sigma(P(x) \impl P(f(x)))\sigma(\Theta^*(\mu_0))\Eval &=& P(c) \impl P(f(c)),\\
\sigma(P(\fhat(c,s(n))))\Eval &=& P(\fhat(c,s(0))),
\end{eqnarray*}
$$P(c), \fhat(c,\bar{0}) = c, \fhat(c,s(0)) = f(\fhat(c,0)), P(c) \impl P(f(c)) \seq P(\fhat(c,s(0))).$$
For $\sigma(n)=1$ we obtain
\begin{eqnarray*}
P(c), & & \\
\fhat(c,\bar{0}) = c, & & \\
\fhat(c,s(0)) = f(\fhat(c,0)), \fhat(c,s(1)) = f(\fhat(c,1)), & & \\
P(c) \impl P(f(c)), P(\fhat(c,1)) \impl P(f(\fhat(c,1))) &\seq& P(\fhat(c,s(1)))
\end{eqnarray*}
\end{example}
\section{Cyclic Proofs}\label{sec.cyclicproofs}

To make this paper self-contained, we provide a brief overview, following \cite{brotherston2006.thesis, brotherston2011sequent}, of classical first-order logic with (mutually) inductively defined predicates, denoted $\mathsf{FOL_{ID}}$, in the style of Martin-Löf \cite{martin2000iterated}. We also recall the infinitary proof system $\mathsf{LKID}^{\omega}$, which formalises reasoning by infinite descent, and its restriction $\mathsf{CLKID}^{\omega}$, presented as a cyclic proof system. Since this is only a concise summary, we refer the reader to \cite{brotherston2006.thesis, brotherston2011sequent} for full details.

\vspace{0.5em}

We fix a standard (countable) first-order language $\Sigma$ with equality that includes predicate symbols classified as either \emph{ordinary} or \emph{inductive}. Ordinary predicates, terms, and formulas are defined in the usual way. Each inductive predicate symbol is specified by a finite set of production rules (axioms) that constitute its inductive definition.

\begin{definition}[Inductive definition set]\label{def:inductiveDefinitionSet}
An \emph{inductive definition set} $\Phi$ for a signature $\Sigma$ is a finite set of productions, where each production is a rule of the form
\begin{prooftree}
    \AxiomC{$Q_1(\mathbf{u_1}), \dots, Q_h(\mathbf{u_h}),
     P_{j1}(\mathbf{t_1}), \dots, P_{jm}(\mathbf{t_m})$}
    \UnaryInfC{$P_i(\mathbf{t})$}
\end{prooftree}
where $Q_1, \dots, Q_h$ are \emph{ordinary} predicate symbols and 
$P_{j1}, \dots, P_{jm}, P_i$ are \emph{inductive} predicate symbols. The bold symbols $\mathbf{u_1}, \dots, \mathbf{u_h}, \mathbf{t_1}, \dots, \mathbf{t_m}, \mathbf{t}$ denote vectors of terms of appropriate length.

We refer to the formulas above the line as the \emph{premises} of the production, and to the formula below the line as its \emph{conclusion}. 
For brevity, we often omit parentheses, writing, for instance, $Q_1\mathbf{u_1}$ instead of $Q_1(\mathbf{u_1})$. 
When it is useful to make the variables occurring in a production explicit, we may write formulas such as $Q_1\mathbf{u_1}(\mathbf{x})$, where $\mathbf{x}$ denotes a vector of variables.
\end{definition}

\begin{example}\label{ex.productions}
Let $\Sigma = \{ 0, s \}$ be a language with constant $0$ and unary function symbol $s$ (successor). 
The inductive predicates for natural numbers $N$, even numbers $E$, and odd numbers $O$ can be specified by the inductive definition set $\Phi$ as follows:

\begin{center}
\setlength{\tabcolsep}{6pt} 
\renewcommand{\arraystretch}{1.1} 
\begin{tabular*}{0.96\textwidth}{@{\extracolsep{\fill}}ccccc@{}}
\bottomAlignProof
\AxiomC{}
\UnaryInfC{$N0$}
\DisplayProof
&
\bottomAlignProof
\AxiomC{$N x$}
\UnaryInfC{$N s(x)$}
\DisplayProof
&
\bottomAlignProof
\AxiomC{}
\UnaryInfC{$E0$}
\DisplayProof
&
\bottomAlignProof
\AxiomC{$O x$}
\UnaryInfC{$E s(x)$}
\DisplayProof
&
\bottomAlignProof
\AxiomC{$E x$}
\UnaryInfC{$O s(x)$}
\DisplayProof
\end{tabular*}
\end{center}

\noindent
Note that, to obtain structures isomorphic to the natural numbers and to their even and odd subsets, we must include additional axioms ensuring the injectivity of $s$.

\end{example}

We briefly introduce $\mathsf{LKID}^{\omega}$, an infinitary proof system that formalises a version of proof by infinite descent in $\mathsf{FOL_{ID}}$. This system serves as the foundation for defining the cyclic proof system $\mathsf{CLKID}^{\omega}$.

The proof rules of $\mathsf{LKID}^{\omega}$ consist of those of the sequent calculus $\LK$ recalled in Definition~\ref{def.LK}, together with the rules for equality and substitution shown in Figure~\ref{fig:CLKID_subst_eq}. In addition, we define \emph{unfold rules} (right-introduction) and \emph{case-split rules} (left-introduction) for each inductive predicate.
Informally, the unfold rules correspond to the closure conditions in the definitions of inductively defined predicates, whereas the case-split rules capture the associated principles of natural induction.

\begin{figure}[b]
    \centering
    \bottomAlignProof
        \AxiomC{$\Gamma \vdash \Delta$}
        \RightLabel{\scriptsize $\mathsf{subst}$}
        \UnaryInfC{$\Gamma[\Theta] \vdash \Delta[\Theta]$}
    \DisplayProof
    \hspace{2em}
    \bottomAlignProof
        \AxiomC{$\Gamma[u/x,\, t/y] \vdash \Delta[u/x,\, t/y]$}
        \RightLabel{\scriptsize $=_l$}
        \UnaryInfC{$\Gamma[t/x,\, u/y],\; t = u \vdash \Delta[t/x,\, u/y]$}
    \DisplayProof
    \hspace{2em}
    \bottomAlignProof
        \AxiomC{}
        \RightLabel{\scriptsize $=_r$}
        \UnaryInfC{$\Gamma \vdash t = t,\, \Delta$}
    \DisplayProof

    \caption{
        Equality and substitution rules. 
        The notation $\Gamma[\Theta]$ denotes the result of applying the substitution $\Theta$ 
        (a mapping of terms to free variables) to all formulas in~$\Gamma$.
    }
    \label{fig:CLKID_subst_eq}
\end{figure}

\vspace{0.5em}
For each production $\phi \in \Phi$:

\begin{center}
\AxiomC{$Q_1(\mathbf{u_1}), \dots, Q_h(\mathbf{u_h}),
     P_{j_1}(\mathbf{t_1}), \dots,  P_{j_m}(\mathbf{t_m})$}
\UnaryInfC{$P_i(\mathbf{t})$}
\DisplayProof
\end{center}

the corresponding \emph{unfold} (right-introduction) rule is of the form:

\begin{center}
\AxiomC{$
\Gamma \vdash \Delta, Q_1(\mathbf{u_1}(\mathbf{u})), \dots, 
\Gamma \vdash \Delta, Q_h(\mathbf{u_h}(\mathbf{u})), 
\Gamma \vdash \Delta, P_{j_1}(\mathbf{t_1}(\mathbf{u})), \dots,
\Gamma \vdash \Delta, P_{j_m}(\mathbf{t_m}(\mathbf{u}))
$}
\RightLabel{$\phi_r$}
\UnaryInfC{$\Gamma \vdash \Delta, P_i(\mathbf{t}(\mathbf{u}))$}
\DisplayProof
\end{center}

Here, for each $r \in \{ \mathbf{u_1}, \dots, \mathbf{u_h}, \mathbf{t_1}, \dots, \mathbf{t_m}, \mathbf{t} \}$, 
the expression $r(\mathbf{u})$ represents the term $r$ after substituting the explicitly identified variables in the production $\phi$ with the corresponding terms in $\mathbf{u}$.

\begin{example}\label{ex.rightRules}
The unfold (right-introduction) rules for the inductive predicates $N$, $E$, and $O$ from Example~\ref{ex.productions} are:

\begin{center}
    \setlength{\tabcolsep}{0pt} 
    \renewcommand{\arraystretch}{1.1} 
    \begin{tabular*}{\textwidth}{@{\extracolsep{\fill}}ccccc}
    \bottomAlignProof
        \AxiomC{}
        \RightLabel{$N_{1r}$}
        \UnaryInfC{$\vdash N0$}
    \DisplayProof
    &
    \bottomAlignProof
        \AxiomC{$\Gamma \vdash Nx, \Delta$}
        \RightLabel{$N_{2r}$}
        \UnaryInfC{$\Gamma \vdash Ns(x), \Delta$}
    \DisplayProof
    &
    \bottomAlignProof
        \AxiomC{}
        \RightLabel{$E_{1r}$}
        \UnaryInfC{$\vdash E0$}
    \DisplayProof
    &
    \bottomAlignProof
        \AxiomC{$\Gamma \vdash Ox, \Delta$}
        \RightLabel{$E_{2r}$}
        \UnaryInfC{$\Gamma \vdash Es(x), \Delta$}
    \DisplayProof
    &
    \bottomAlignProof
        \AxiomC{$\Gamma \vdash Ex, \Delta$}
        \RightLabel{$O_{r}$}
        \UnaryInfC{$\Gamma \vdash Os(x), \Delta$}
    \DisplayProof
    \end{tabular*}
\end{center}
\end{example}

The \emph{case-split} rule serves as a left-introduction rule for inductive predicates. 
It formalises simple case distinctions that correspond directly to the productions of the predicate.
For each inductive predicate $P$, there is a corresponding case-split rule $(\mathsf{Case}\,P)$ of the form:

\begin{center}
\AxiomC{$\text{case distinctions}$}
\RightLabel{$(\mathsf{Case}\,P)$}
\UnaryInfC{$\Gamma, P(\mathbf{u}) \vdash \Delta$}
\DisplayProof
\end{center}

Specifically, for every production $\phi \in \Phi$ whose conclusion is $P$, such as

\begin{center}
\AxiomC{$Q_1(\mathbf{u_1}(\mathbf{x})), \dots, Q_h(\mathbf{u_h}(\mathbf{x})), P_{j_1}(\mathbf{t_1}(\mathbf{x})), \dots, P_{j_m}(\mathbf{t_m}(\mathbf{x}))$}
\UnaryInfC{$P(\mathbf{t}(\mathbf{x}))$}
\DisplayProof
\end{center}

there is a corresponding case distinction:

\begin{center}
$\Gamma, \mathbf{u} = \mathbf{t}(\mathbf{y}), Q_1(\mathbf{u_1}(\mathbf{y})), \dots, Q_h(\mathbf{u_h}(\mathbf{y})), P_{j_1}(\mathbf{t_1}(\mathbf{y})), \dots, P_{j_m}(\mathbf{t_m}(\mathbf{y})) \vdash \Delta$
\end{center}

where $\mathbf{y}$ is a vector of pairwise distinct eigenvariables of appropriate length satisfying $y \notin FV(\Gamma \cup \Delta \cup \{ P(\mathbf{u}) \})$ for all $y \in \mathbf{y}$. The formulas $P_{j_1}(\mathbf{t_1}(\mathbf{y})), \dots, P_{j_m}(\mathbf{t_m}(\mathbf{y}))$ are called the \emph{case-descendants} of the principal formula $P(\mathbf{u})$.


\begin{example}\label{ex.case-splitRules}
The case-split (left-introduction) rules for the inductive predicates $N$, $E$, and $O$ from Example~\ref{ex.productions} are:

\begin{center}
    \bottomAlignProof
        \AxiomC{$\Gamma, t = 0 \vdash \Delta$}
        \AxiomC{$\Gamma, t = s(x), Nx \vdash \Delta$}
        \RightLabel{$(\mathsf{Case}\, N)$}
    \BinaryInfC{$\Gamma, Nt \vdash \Delta$}
    \DisplayProof

    \vspace{1em}

    \begin{tabular}{c@{\hspace{3em}}c}
        \bottomAlignProof
            \AxiomC{$\Gamma, t = 0 \vdash \Delta$}
            \AxiomC{$\Gamma, t = s(x), Ox \vdash \Delta$}
            \RightLabel{$(\mathsf{Case}\, E)$}
        \BinaryInfC{$\Gamma, Et \vdash \Delta$}
        \DisplayProof
        &
        \bottomAlignProof
            \AxiomC{$\Gamma, t = s(x), Ex \vdash \Delta$}
            \RightLabel{$(\mathsf{Case}\, O)$}
        \UnaryInfC{$\Gamma, Ot \vdash \Delta$}
        \DisplayProof
    \end{tabular}
\end{center}

\end{example}

$\mathsf{LKID}^{\omega}$ is based on infinite derivation trees, where we distinguish between \emph{leaves} and \emph{buds}. 
Leaves are nodes that conclude inference rules without premises, i.e., axioms, whereas buds are non-leaf nodes that do not arise as the conclusion of any inference rule.

\begin{definition}[$\mathsf{LKID}^{\omega}$ pre-proof] 
An $\mathsf{LKID}^{\omega}$ \emph{pre-proof} of a sequent $S$ is a (possibly infinite) derivation tree built using the inference rules of $\mathsf{LKID}^{\omega}$, with root $S$, and containing no buds.
\end{definition}

To ensure the soundness of $\mathsf{LKID}^{\omega}$ pre-proofs, a \emph{global trace condition} is imposed, ensuring the continuous progression of arguments along infinite branches of a derivation. Informally, this condition requires that on every infinite branch of the proof tree, at least one inductively defined predicate on the left-hand side of the sequents is unfolded infinitely often.
A \emph{path} $\pi$ in a derivation tree is a finite or infinite sequence of sequents $\pi = (S_i)_{0 \le i < \alpha} $
with $\alpha \in \mathbb{N} \cup \{\infty\}$, such that $S_{i+1}$ is a child of $S_i$ for all $i + 1 < \alpha$.

\begin{definition}[Traces]
Let $(\Gamma_i \vdash \Delta_i)_{i \geq 0}$ be a path in a $\mathsf{LKID}^{\omega}$ pre-proof. A \emph{trace} along this path is a sequence of formulas $(\tau_i)_{i \geq 0}$ satisfying for all $i$:

\begin{itemize}
    \item $\tau_i = P_{j_i}(\mathbf{t_i}) \in \Gamma_i$, where $P_{j_i}$ is an inductive predicate appearing in $\Gamma_i$;
    \item if $(\Gamma_i \vdash \Delta_i)$ is the conclusion of a $(\mathsf{subst})$ rule with substitution $\Theta$, then $\tau_i = \tau_{i+1}[\Theta]$;
    \item if $(\Gamma_i \vdash \Delta_i)$ is the conclusion of a $(=_l)$ rule with principal formula $t = u$, then there is a formula $F$ and variables $x,y$ such that $\tau_i = F[t/x, u/y]$ and $\tau_{i+1} = F[u/x, t/y]$;
    \item if $(\Gamma_i \vdash \Delta_i)$ is the conclusion of a case-split rule, then either $\tau_{i+1} = \tau_i$ or $\tau_{i}$ is its principal formula and $\tau_{i+1}$ is a case-descendant of $\tau_i$. In this case, $i$ is called a \emph{progress point};
    \item if $(\Gamma_i \vdash \Delta_i)$ is the conclusion of any other rule, then $\tau_{i+1} = \tau_i$.
\end{itemize}

A \emph{progressing trace} is a trace that contains infinitely many progress points.
\end{definition}

\begin{definition}[$\mathsf{LKID}^{\omega}$ proof]
An $\mathsf{LKID}^{\omega}$ pre-proof $\Pi$ is an \emph{$\mathsf{LKID}^{\omega}$ proof} if it satisfies the \emph{global trace condition} -- that is, every infinite path $(\Gamma_i \vdash \Delta_i)_{i \ge 0}$ in $\Pi$ possesses an infinitely progressing trace along some tail $(\Gamma_i \vdash \Delta_i)_{i \ge k}$ for some $k \ge 0$.
\end{definition}

The cyclic proof system $\mathsf{CLKID}^{\omega}$ can be viewed as a restricted fragment of $\mathsf{LKID}^{\omega}$, obtained by limiting proofs to \emph{regular} derivation trees -- possibly infinite trees that contain only finitely many distinct subtrees. 
Proofs in $\mathsf{CLKID}^{\omega}$ are thus represented as finite graphs, where \emph{buds} are linked back to previously occurring sequents, called their \emph{companions}. 
Intuitively, a cyclic proof corresponds to an $\mathsf{LKID}^{\omega}$ derivation that can be seen as the result of repeatedly unfolding the buds of a finite proof graph into their companion nodes.

\begin{definition}[Companion]
Let $B$ be a bud in an $\mathsf{LKID}^{\omega}$ derivation tree $\Pi$. 
An internal node $C$ of $\Pi$ is called a \emph{companion} of $B$ if $B$ and $C$ correspond to identical sequents.
\end{definition}

\begin{definition}[$\mathsf{CLKID}^{\omega}$ pre-proof]
A $\mathsf{CLKID}^{\omega}$ pre-proof of a sequent $\Gamma \vdash \Delta$ is a pair $(\Pi, R)$, 
where $\Pi$ is a finite derivation tree constructed according to the inference rules of $\mathsf{LKID}^{\omega}$ 
with end-sequent $\Gamma \vdash \Delta$, and $R$ is a mapping that assigns to each bud in $\Pi$ a corresponding companion node.
\end{definition}

\begin{definition}[$\mathsf{CLKID}^{\omega}$ proof]
A $\mathsf{CLKID}^{\omega}$ proof is a $\mathsf{CLKID}^{\omega}$ pre-proof whose unfolding satisfies the global trace condition.
\end{definition}

\begin{example}\label{ex.even_odd_cyclic}
Consider the language $\Sigma=\{0,s\}$ together with the productions from Example~\ref{ex.productions} and the unfold and case-split rules given in Examples~\ref{ex.rightRules} and~\ref{ex.case-splitRules}.  
Below we present a $\mathsf{CLKID}^{\omega}$ proof of the sequent $E x \lor O x \vdash N x$.
The proof contains two bud–companion pairs, which are labelled (a) and (b) respectively.

\begin{center}

    \AxiomC{}
    \RightLabel{$N_{1r}$}
    \UnaryInfC{$\vdash N0$}
    \RightLabel{$=_l$}
    \UnaryInfC{$x = 0 \vdash Nx$}

    \AxiomC{$Ox \vdash Nx$ (a)}
    \RightLabel{\scriptsize$ \mathsf{subst}$}
    \UnaryInfC{$Oy \vdash Ny$}
    \RightLabel{$N_{2r}$}
    \UnaryInfC{$Oy \vdash Ns(y)$}
    \RightLabel{$=_l$}
    \UnaryInfC{$x = s(y), Oy \vdash Nx$}
    
    \RightLabel{$(\mathsf{Case}\, E)$}
    \BinaryInfC{$Ex \vdash Nx$ (b)}

    \AxiomC{$Ex \vdash Nx$ (b)}
    \RightLabel{\scriptsize $ \mathsf{subst}$}
    \UnaryInfC{$Ey \vdash Ny$}
    \RightLabel{$N_{2r}$}
    \UnaryInfC{$Ey \vdash Ns(y)$}
    \RightLabel{$=_l$}
    \UnaryInfC{$x = s(y), Ey \vdash Nx$}
    \RightLabel{$(\mathsf{Case}\, O)$}
    \UnaryInfC{$Ox \vdash Nx$ (a)}
    
    \RightLabel{$\lor_l$}
    \BinaryInfC{$Ex \lor Ox \vdash Nx$}
    \DisplayProof
\end{center}

\end{example}

\section{From Cyclic Proofs to Proof Schemata}\label{sec.translation}

In this section, we present an algorithmic translation of a subclass of $\mathsf{CLKID}^{\omega}$ proofs into the proof schema formalism. We consider $\mathsf{CLKID}^{\omega}$ derivations using a first-order language with equality over the domain of natural numbers with constant $0$ and the successor $s$. The resulting proof schemata additionally make use of the predecessor function $p$.

In addition, we restrict our attention to $\mathsf{CLKID}^{\omega}$ derivations in which no variable occurring in the guard of a case-split rule is strongly quantified\footnote{Universal quantifiers of positive polarity and existential quantifiers of negative polarity are called \emph{strong quantifiers.}} elsewhere in the proof. Intuitively, this ensures that the variables that are actually involved in representing the induction in a cyclic proof are not strongly quantified.

The cyclic framework recalled in the previous section and the proof schema formalism represent two fundamentally different approaches to inductive definitions. Consequently, they are not translatable into each other in full generality. 
Hence, we restrict attention to those inductive definition sets that can be reformulated in a way compatible with the proof schema framework.
Informally speaking, we can only consider cyclic proofs whose inductive definition sets behave recursively in the following sense: premises of productions do not introduce fresh variables, i.e.\ every free variable occurring in a premise also occurs in the conclusion, and productions defining the same inductive predicate are pairwise non-overlapping, so that at most one production applies to a given parameter instance.

This motivates the following class of inductive definitions.


\begin{definition}[Inductive predicates of definition type]\label{def.inductive_predicates_of_definition_type}
A set of production rules is of \emph{inductive predicate definition type} if, for every inductive predicate symbol $P$ of arity $n$, all its production rules are of the form
\[
\infer{P(x_1, \dots, x_n)}
      {C^i \colon P^i_1(\mathbf{r}^i_1) \land \cdots \land P^i_{k_i}(\mathbf{r}^i_{k_i})}
\]
where: 
\begin{itemize}
\item each $C^i$ is a condition over the signature $\{\bar{0}, s, p\}$ on variables $x_1, \dots, x_n$,
\item the conditions $C^i$ are pairwise disjoint,
\item $FV ( \{  P^i_1(\mathbf{r}^i_1), \dots , P^i_{k}(\mathbf{r^i_k}  \} ) \IN \{x_1, \dots , x_n \}$, 
\item and there are no other production rules for $P$.
\end{itemize}
\end{definition}

This definition ensures that inductive predicates behave like recursive definitions: for each parameter instance, at most one production applies, and no production introduces additional recursion variables through its premises. The definitions need neither be terminating nor complete.

\begin{example}
Consider the inductive definition set $\Phi$ from Example~\ref{ex.productions}. All productions can be transformed into the form required by Definition~\ref{def.inductive_predicates_of_definition_type}.

\begin{center}
\setlength{\tabcolsep}{6pt}
\renewcommand{\arraystretch}{1.1}
\begin{tabular*}{0.96\textwidth}{@{\extracolsep{\fill}}ccccc@{}}
\bottomAlignProof
\AxiomC{$x=0$}
\UnaryInfC{$Nx$}
\DisplayProof
&
\bottomAlignProof
\AxiomC{$x>0: N p(x)$}
\UnaryInfC{$N x$}
\DisplayProof
&
\bottomAlignProof
\AxiomC{$x=0$}
\UnaryInfC{$Ex$}
\DisplayProof
&
\bottomAlignProof
\AxiomC{$x>0: O p(x)$}
\UnaryInfC{$E x$}
\DisplayProof
&
\bottomAlignProof
\AxiomC{$x>0: E p(x)$}
\UnaryInfC{$O x$}
\DisplayProof
\end{tabular*}
\end{center}
\end{example}

\begin{example}
Consider the inductive definition set $\Phi_{R^+}$ from Example 2.2.7 in \cite{brotherston2006.thesis}, where $R$ is an ordinary predicate symbol of arity 2 and $R^+$ is an inductive predicate symbol of the same arity with the following production rules:
\begin{center}
\setlength{\tabcolsep}{6pt}
\renewcommand{\arraystretch}{1.1}
\begin{tabular*}{0.3\textwidth}{@{\extracolsep{\fill}}cc@{}}
\bottomAlignProof
\AxiomC{$R(x,y)$}
\UnaryInfC{$R^+(x,y)$}
\DisplayProof
&
\bottomAlignProof
\AxiomC{$R^+(x,y)$}
\AxiomC{$R^+(y,z)$}
\BinaryInfC{$R^+(x,z)$}
\DisplayProof
\end{tabular*}
\end{center}

$\Phi_{R^+}$ cannot be translated into a set of inductive predicates of definition type, since the second production introduces the variable $y$ in the premises that does not occur in the conclusion, violating the requirement that no fresh variables may be introduced by premises. Furthermore, the conditions induced by the productions are not pairwise disjoint according to Definition~\ref{def.inductive_predicates_of_definition_type}.
\end{example}
Note that Example~\ref{ex.even_odd_cyclic} from the previous section remains expressible within our framework. Moreover, numerous examples from Brotherston's thesis~\cite{brotherston2006.thesis} satisfy these conditions as well. Hence, despite the restrictions imposed by the translation, a substantial class of cyclic proofs remains within the scope of our approach.\\[1ex]
The core translation is performed by the algorithm \textsc{transformIntoSchema} (Alg.~\ref{alg:transformIntoSchema}). Given a cyclic proof $\Pi_C = (\Pi, R)$ with end-sequent $\Gamma \vdash \Delta$ and its associated inductive definition set $\Phi$, the algorithm constructs a proof schema $\mathbf{P} = (\rho_0, \mathcal{R}_P, \mathcal{N}_0, t, \mathit{Seq}, \Pi_P)$.
The resulting schema is rooted in the initial proof symbol $\rho_0$, which derives the sequent $\mathit{pdef}, \Gamma \vdash \Delta$. Here, \pdef\ denotes the first-order encoding of the inductive definition set $\Phi$, which will be explained later.
The transformation proceeds through five primary phases:
\begin{description}

    \item[Step 1: Initialization] 
The inductive definition set $\Phi$ is converted into a set of first-order axioms \pdef. The algorithm identifies all companion nodes in $\Pi$, which are subsequently used to partition the cyclic proof into its constituent subderivations in Step~2.
    
    \item[Step 2: Cycle Decomposition] 
    The structural unfolding of the cyclic proof $\Pi$ is simulated through the linking of modular $s$-proofs. In this step, each companion node identifies the root of a specific subderivation within $\Pi$. The algorithm partitions the global proof into these constituent subderivations, assigning a unique proof symbol $\rho_i$ to each.

\medskip
\noindent

\noindent
\begin{minipage}[t]{0.48\textwidth}
A cyclic proof $T$:
\begin{center}
\scalebox{1.1}{

\tikz[baseline=(current bounding box.north)]{
\node[minimum width=3cm, minimum height=.2cm, inner sep=0pt] (l1) {};
\draw[thick] (l1.west)--(l1.east);

\node[minimum width=.3cm, minimum height=.8cm, inner sep=0pt, anchor=south east] (b1) at ($(l1.north east)+(0,.2)$) {};
\fill[black] (b1.north) circle (3pt);
\fill[black] (b1.south) circle (3pt);
\fill[black] ($(b1.center)+(0,.1)$) circle (1pt);
\fill[black] ($(b1.center)+(0,-.1)$) circle (1pt);

\node[minimum width=.3cm, minimum height=.8cm, inner sep=0pt, anchor=south west] (b2) at ($(l1.north west)+(0,.2)$) {};
\fill[black] (b2.south) circle (3pt);
\node[minimum width=3cm, minimum height=.2cm, inner sep=0pt, below=1.5cm of l1, xshift=-1.25cm] (l2) {};
\draw[thick] (l2.west)--(l2.east);

\node[minimum width=.3cm, minimum height=.8cm, inner sep=0pt, anchor=south east] (b3) at ($(l2.north east)+(0,.2)$) {};
\fill[black] (b3.north) circle (3pt);
\fill[black] (b3.south) circle (3pt);
\fill[black] ($(b3.center)+(0,.1)$) circle (1pt);
\fill[black] ($(b3.center)+(0,-.1)$) circle (1pt);

\node[minimum width=.3cm, minimum height=.8cm, inner sep=0pt, anchor=south west] (b4) at ($(l2.north west)+(0,.2)$) {};
\fill[black] (b4.north) circle (3pt);
\fill[black] (b4.south) circle (3pt);
\fill[black] ($(b4.center)+(0,.1)$) circle (1pt);
\fill[black] ($(b4.center)+(0,-.1)$) circle (1pt);
\node[minimum width=3cm, minimum height=.2cm, inner sep=0pt, below=1.5cm of l2, xshift=-1.25cm] (l3) {};
\draw[thick] (l3.west)--(l3.east);

\node[minimum width=.3cm, minimum height=.8cm, inner sep=0pt, anchor=south east] (b5) at ($(l3.north east)+(0,.2)$) {};
\fill[black] (b5.north) circle (3pt);
\fill[black] (b5.south) circle (3pt);
\fill[black] ($(b5.center)+(0,.1)$) circle (1pt);
\fill[black] ($(b5.center)+(0,-.1)$) circle (1pt);

\node[minimum width=.3cm, minimum height=.8cm, inner sep=0pt, anchor=south west] (b6) at ($(l3.north west)+(0,.2)$) {};
\fill[black] (b6.north) circle (3pt);
\fill[black] (b6.south) circle (3pt);
\fill[black] ($(b6.center)+(0,.1)$) circle (1pt);
\fill[black] ($(b6.center)+(0,-.1)$) circle (1pt);

\node[minimum width=.3cm, minimum height=.1cm, inner sep=0pt, anchor=north] (b7) at ($(l3.south)+(0,-.2)$) {};
\fill[black] (b7.north) circle (3pt);
\node[draw, semithick, densely dotted, rounded corners=1mm, minimum width=.55cm, minimum height=.45cm, inner sep=0pt] (d1) at (b3.south) {};

\node[draw, semithick, densely dotted, rounded corners=1mm, minimum width=.55cm, minimum height=.45cm, inner sep=0pt] (d2) at (b7.north) {};

\draw[-latex, blue, rounded corners=2mm] (b1.north east)--++(.5,0) |- (d1.east);
\draw[-latex, blue, rounded corners=2mm] (b4.north west)--++(-1.75,0) |- (d2.west);

\node[left] at (d1.west) {$C_2$};
\node[right] at (d2.east) {$C_1$};
\node[above left=1cm of b4.north] {$T$};
}
}
\end{center}
\end{minipage}%
\hfill
\begin{minipage}[t]{0.48\textwidth}
$T$ gets decomposed in derivations $T_1$ and $T_2$:
\begin{minipage}[t]{\textwidth}
\begin{center}
\scalebox{.8}{
\tikz[baseline=(current bounding box.north)]{
\node[minimum width=3cm, minimum height=.2cm, inner sep=0pt] (l2) {};
\draw[thick] (l2.west)--(l2.east);

\node[minimum width=.3cm, minimum height=.8cm, inner sep=0pt, anchor=south east] (b3) at ($(l2.north east)+(0,.2)$) {};
\fill[black] (b3.south) circle (3pt);

\node[minimum width=.3cm, minimum height=.8cm, inner sep=0pt, anchor=south west] (b4) at ($(l2.north west)+(0,.2)$) {};
\fill[black] (b4.north) circle (3pt);
\fill[black] (b4.south) circle (3pt);
\fill[black] ($(b4.center)+(0,.1)$) circle (1pt);
\fill[black] ($(b4.center)+(0,-.1)$) circle (1pt);
\node[minimum width=3cm, minimum height=.2cm, inner sep=0pt, below=1.5cm of l2, xshift=-1.25cm] (l3) {};
\draw[thick] (l3.west)--(l3.east);

\node[minimum width=.3cm, minimum height=.8cm, inner sep=0pt, anchor=south east] (b5) at ($(l3.north east)+(0,.2)$) {};
\fill[black] (b5.north) circle (3pt);
\fill[black] (b5.south) circle (3pt);
\fill[black] ($(b5.center)+(0,.1)$) circle (1pt);
\fill[black] ($(b5.center)+(0,-.1)$) circle (1pt);

\node[minimum width=.3cm, minimum height=.8cm, inner sep=0pt, anchor=south west] (b6) at ($(l3.north west)+(0,.2)$) {};
\fill[black] (b6.north) circle (3pt);
\fill[black] (b6.south) circle (3pt);
\fill[black] ($(b6.center)+(0,.1)$) circle (1pt);
\fill[black] ($(b6.center)+(0,-.1)$) circle (1pt);

\node[minimum width=.3cm, minimum height=.1cm, inner sep=0pt, anchor=north] (b7) at ($(l3.south)+(0,-.2)$) {};
\fill[black] (b7.north) circle (3pt);
\node[draw, semithick, densely dotted, rounded corners=1mm, minimum width=.55cm, minimum height=.45cm, inner sep=0pt] (d1) at (b3.south) {};

\node[draw, semithick, densely dotted, rounded corners=1mm, minimum width=.55cm, minimum height=.45cm, inner sep=0pt] (d2) at (b7.north) {};

\draw[-latex, blue, rounded corners=2mm] (b4.north west)--++(-1.75,0) |- (d2.west);

\node[left] at (d1.west) {$C_2$};
\node[right] at (d2.east) {$C_1$};
\node[above=.25cm of b4.north] {$T_1$};
}
}
\end{center}
\end{minipage}
\vspace{10pt}
\begin{minipage}[t]{\textwidth}
\begin{center}
\scalebox{.8}{
\tikz[baseline=(current bounding box.north)]{
\node[minimum width=3cm, minimum height=.2cm, inner sep=0pt] (l1) {};
\draw[thick] (l1.west)--(l1.east);

\node[minimum width=.3cm, minimum height=.8cm, inner sep=0pt, anchor=south east] (b1) at ($(l1.north east)+(0,.2)$) {};
\fill[black] (b1.north) circle (3pt);
\fill[black] (b1.south) circle (3pt);
\fill[black] ($(b1.center)+(0,.1)$) circle (1pt);
\fill[black] ($(b1.center)+(0,-.1)$) circle (1pt);

\node[minimum width=.3cm, minimum height=.8cm, inner sep=0pt, anchor=south west] (b2) at ($(l1.north west)+(0,.2)$) {};
\fill[black] (b2.south) circle (3pt);
\node[minimum width=3cm, minimum height=.2cm, inner sep=0pt, below=1.5cm of l1, xshift=-1.25cm] (l2) {};

\node[minimum width=.3cm, minimum height=.8cm, inner sep=0pt, anchor=south east] (b3) at ($(l2.north east)+(0,.2)$) {};
\fill[black] (b3.north) circle (3pt);
\fill[black] (b3.south) circle (3pt);
\fill[black] ($(b3.center)+(0,.1)$) circle (1pt);
\fill[black] ($(b3.center)+(0,-.1)$) circle (1pt);

\node[draw, semithick, densely dotted, rounded corners=1mm, minimum width=.55cm, minimum height=.45cm, inner sep=0pt] (d1) at (b3.south) {};

\draw[-latex, blue, rounded corners=2mm] (b1.north east)--++(.5,0) |- (d1.east);

\node[left] at (d1.west) {$C_2$};
\node[above=0cm of b2.north] {$T_2$};
}
}
\end{center}

\end{minipage}

\end{minipage}

For every resulting subderivation $T_i$, a local parameter set $\mathcal{N}_i$ is obtained from the variables occurring in the guards of case-split rules throughout the cyclic proof. More precisely, the algorithm first collects all variables occurring on the left-hand side of guards and then restricts this set to those variables that also occur in the end-sequent of $T_i$.

    \item[Step 3: Case Partitioning] 
    Derivations obtained in the previous step are further decomposed according to the branches of their case-split rules. This produces a set of derivations representing every possible trace.

\medskip
\noindent
A derivation $T$ gets split into three derivations:

    \begin{center}
   \scalebox{1}{
\tikz{

\node[minimum width=3cm, minimum height=.2cm, inner sep=0pt] (l1) {};
\draw[thick] (l1.west)--(l1.east);

\node[minimum width=.3cm, minimum height=.8cm, inner sep=0pt, anchor=south east] (b1) at ($(l1.north east)+(0,.2)$) {};
\fill[black] (b1.north) circle (3pt);
\fill[black] (b1.south) circle (3pt);
\fill[black] ($(b1.center)+(0,.1)$) circle (1pt);
\fill[black] ($(b1.center)+(0,-.1)$) circle (1pt);

\node[minimum width=.3cm, minimum height=.8cm, inner sep=0pt, anchor=south west] (b2) at ($(l1.north west)+(0,.2)$) {};
\fill[black] (b2.south) circle (3pt);
\node[minimum width=3cm, minimum height=.2cm, inner sep=0pt, below=1.5cm of l1, xshift=-1.25cm] (l2) {};
\draw[thick] (l2.west)--(l2.east);

\node[minimum width=.3cm, minimum height=.8cm, inner sep=0pt, anchor=south east] (b3) at ($(l2.north east)+(0,.2)$) {};
\fill[black] (b3.north) circle (3pt);
\fill[black] (b3.south) circle (3pt);
\fill[black] ($(b3.center)+(0,.1)$) circle (1pt);
\fill[black] ($(b3.center)+(0,-.1)$) circle (1pt);

\node[minimum width=.3cm, minimum height=.8cm, inner sep=0pt, anchor=south west] (b4) at ($(l2.north west)+(0,.2)$) {};
\fill[black] (b4.north) circle (3pt);
\fill[black] (b4.south) circle (3pt);
\fill[black] ($(b4.center)+(0,.1)$) circle (1pt);
\fill[black] ($(b4.center)+(0,-.1)$) circle (1pt);
\node[minimum width=3cm, minimum height=.2cm, inner sep=0pt, below=1.5cm of l2, xshift=-1.25cm] (l3) {};
\draw[thick] (l3.west)--(l3.east);

\node[minimum width=.3cm, minimum height=.8cm, inner sep=0pt, anchor=south east] (b5) at ($(l3.north east)+(0,.2)$) {};
\fill[black] (b5.north) circle (3pt);
\fill[black] (b5.south) circle (3pt);
\fill[black] ($(b5.center)+(0,.1)$) circle (1pt);
\fill[black] ($(b5.center)+(0,-.1)$) circle (1pt);

\node[minimum width=.3cm, minimum height=.8cm, inner sep=0pt, anchor=south west] (b6) at ($(l3.north west)+(0,.2)$) {};
\fill[black] (b6.north) circle (3pt);
\fill[black] (b6.south) circle (3pt);
\fill[black] ($(b6.center)+(0,.1)$) circle (1pt);
\fill[black] ($(b6.center)+(0,-.1)$) circle (1pt);

\node[minimum width=.3cm, minimum height=.1cm, inner sep=0pt, anchor=north] (b7) at ($(l3.south)+(0,-.2)$) {};
\fill[black] (b7.north) circle (3pt);
\node[left=.5cm of b1.north] {$T_3$};
\node[left=.5cm of b4.north] {$T_2$};
\node[left=.5cm of b6.north] {$T_1$};

\node[right=.25cm of l2.east] {Case};
\node[right=.25cm of l3.east] {Case};

\draw[blue, rounded corners=2mm] ($(b1.north)+(0,.3)$) -- ++(-.35,0) |- ($(b2.south)+(-.3,.3)$) --++(0,-.5) -| ($(b3.south)+(-.35,-.6)$) -| ($(b5.south)+(-.35,-.2)$) -| ($(b7.south)+(-.3,-.1)$) -| ($(b5.east)+(.2,.1)$) -| ($(b3.east)+(.2,.5)$) -| ($(b1.east)+(.2,0)$) |- ($(b1.north)+(0,.3)$);

\draw[red, rounded corners=2mm] ($(b4.north)+(0,.3)$) -- ++(-.35,0) |- ($(b5.north)+(-.25,.4)$) |- ($(b7.north)+(-.2,.2)$) |- ($(b7.south)+(.2,-.2)$) -| ($(b5.north)+(.25,.3)$) |- ($(b4.south)+(.3,0)$) |- ($(b4.north)+(0,.3)$);

\draw[teal, rounded corners=2mm] ($(b6.north)+(0,.3)$) -- ++(-.35,0) |- ($(b7.north)+(-.4,.2)$) |- ($(b7.south)+(.3,-.3)$) |- ($(b6.south)+(.3,0)$) |- ($(b6.north)+(0,.3)$);
}
   }
    \end{center}
    
    \item[Step 4: Subderivation Transformation] 
    Each partitioned derivation is transformed into a definition component for its respective proof symbol. This involves: (i) extracting logical conditions for the parameter set through the normalization of guards in case-split rules; and (ii) transforming the derivation into a valid s-proof $\zeta$.

\item[Step 5: Schema Assembly] 
The final components are aggregated into the proof schema $\mathbf{P}$. The set of proof symbols $\mathcal{R}_P$ is derived from the companion nodes. The global parameter set $\mathcal{N}_0$ is obtained as the union of all local parameter sets, while $t$ records the local parameter information by mapping each proof symbol $\rho_i$ to its parameter set $\mathcal{N}_i$. These local parameter sets are obtained from the variables occurring in the guards of case-split rules, restricted to those variables that also occur in the end-sequent of the corresponding subderivation $T_i$.
The conditional definition of each proof symbol, structured as a partition, represents every possible trace from the original cyclic proof.

\end{description}

\begin{algorithm}[H]
\caption{\textsc{transformIntoSchema}($\Pi_C = (\Pi, R), \Phi$)\label{alg:transformIntoSchema}}
\begin{algorithmic}[1]
\Require 
Cyclic proof $\Pi_C = (\Pi, R)$; Inductive definition set $\Phi$
\Ensure 
Proof schema $\mathbf{P} = (\rho_0, \mathcal{R}_P, \mathcal{N}_0, t, \mathit{Seq}, \Pi_P)$

\Statex \Comment{\textit{Step 1: Initialization}}
\State $\mathit{pdef} \gets \Call{createInductivePredicateAxiomSet}{\Phi}$
\State $C \gets \Call{getCompanionNodes}{R}$

\Statex \Comment{\textit{Step 2: Cycle Decomposition}}
\State $T \gets \Call{splitAtCompanions}{\Pi, C}$ 
\Comment{Returns dictionary $\{\rho_i \mapsto T_i\}$}

\State $\mathcal{N}_{\mathrm{loc}} \gets \Call{collectParameters}{\Pi,T}$
\Comment{Returns dictionary $\{\rho_i \mapsto \mathcal{N}_i\}$}

\State $\Pi_P \gets \text{new Dictionary}$

\Statex \Comment{\textit{Step 3: Case Partitioning and Transformation}}
\ForAll{$(\rho_i, T_i) \in T$}
    \State $\mathcal{N}_i \gets \mathcal{N}_{\mathrm{loc}}[\rho_i]$
   
    \State $\mathcal{T}_{i}^C \gets \Call{splitCases}{T_i}$ 
    \State $D(\rho_i) \gets \text{new Dictionary}$

    \Statex \Comment{\textit{Step 4: Subderivation Transformation}}
    \ForAll{$T_i^{C_j} \in \mathcal{T}_{i}^C$}
        \State $\mathit{guards}_i^{C_j} \gets \Call{collectGuards}{T_i^{C_j}}$
        \State $\mathit{cond}_i^{C_j} \gets \Call{normalizeGuards}{\mathit{guards}_i^{C_j}, \mathcal{N}_i}$
        
        \State $\mathit{subst}_i^{C_j} \gets \Call{extractSubstitutions}{\mathit{guards}_i^{C_j}, \mathcal{N}_i}$
  
        \State $\zeta_i^{C_j} \gets 
        \Call{transformSubderivation}
        {T_i^{C_j}, \mathit{cond}_i^{C_j}, \mathit{subst}_i^{C_j}, \mathit{pdef}, R, T, \mathcal{N}_{\mathrm{loc}}}$
   
        \State $D(\rho_i)[\mathit{cond}_i^{C_j}] \gets \zeta_i^{C_j}$
    \EndFor

    \If{$\neg \Call{isCompletePartition}{\text{keys}(D(\rho_i))}$}
        \State $\zeta_{else} \gets \Call{generateNegationProof}{\mathit{pdef}, \text{end-sequent}(\rho_i)}$
        \State $D(\rho_i)[\text{"else"}] \gets \zeta_{else}$
    \EndIf
    
    \State $\Pi_P[\rho_i] \gets D(\rho_i)$
\EndFor

\Statex \Comment{\textit{Step 5: Schema Assembly}}
\State $\mathcal{R}_P \gets \text{dom}(T)$
\State $t \gets \mathcal{N}_{\mathrm{loc}}$
\State $\mathcal{N}_0 \gets \bigcup_{\rho_i \in \mathcal{R}_P} \mathcal{N}_{\mathrm{loc}}[\rho_i]$
\Comment{Global parameter set}
\State $\mathit{Seq} \gets \{ \rho_i \mapsto \text{end-sequent}(\rho_i) \mid \rho_i \in \mathcal{R}_P \}$

\State \Return $(\rho_0, \mathcal{R}_P, \mathcal{N}_0, t, \mathit{Seq}, \Pi_P)$
\end{algorithmic}
\end{algorithm}

We now proceed by detailing each stage of the translation process. For each step, we provide an explanation followed by an example and the corresponding algorithm. 
The auxiliary procedures used in the algorithms are routine syntactic operations and are therefore not expanded further.

\subsection{Step 1: Initialization}\label{sub.step1}

\subsubsection{Encoding Inductive Definitions: \pdef}

The first requirement is to represent the inductive definition set $\Phi$ as a set of first-order formulas, denoted by \pdef. In $\mathsf{CLKID}^{\omega}$, inductive predicates are defined via production rules that often utilize successor symbols in the conclusion to represent the inductive step. To facilitate translation into a proof schema, these structural rules must be converted into guarded logical axioms.

For each production $\phi \in \Phi$, we apply a transformation $T$ that simplifies the conclusion by removing successor symbols and aligning premises using predecessor symbols. In addition, the transformation extracts matching constraints from the conclusion pattern, capturing cases where a variable occurs multiple times in the conclusion and thereby preserving dependencies between variables. This transformation is detailed in Algorithm~\ref{alg.pdef}.

\begin{enumerate}
    \item \textbf{Conclusion Simplification:}  
    Let $\mathcal{V}$ be the set of variables occurring in the conclusion term $\mathbf{t}$. For each variable $x \in \mathcal{V}$, let $k_x$ be the maximum number of successor applications to $x$ in $\mathbf{t}$ (i.e., $s^{k_x}(x)$ is a subterm of $\mathbf{t}$). All occurrences of maximal depth $s^{k_x}(x)$ in the conclusion are replaced by $x$ itself, and all shallower occurrences $s^j(x)$ (where $j<k_x$) are replaced by $p^{k_x-j}(x)$. Let the resulting simplified conclusion be $\mathbf{t}'$.

    \item \textbf{Premise Alignment:}  
    Every occurrence of a variable $x$ in the premises is replaced by $p^{k_x}(x)$. If a premise contains a successor application $s^m(x)$, the substitution yields $s^m(p^{k_x}(x))$, which simplifies to $p^{k_x-m}(x)$ or $s^{m-k_x}(x)$ via cancellation. Denote the resulting set of premises as $\mathcal{P}'$.

    \item \textbf{Domain Guards:}  
    Introduce a guard to ensure that each variable has sufficient predecessor depth:
    \[
    \Gamma_{\mathbf{t}} := \bigwedge_{x \in \mathcal{V}} (x \geq k_x).
    \]
   For zero-premise productions, $\Gamma_{\mathbf{t}} = \top$.
\end{enumerate}

\paragraph{Matching Constraints}
Independently of the premises, we extract the matching constraint from the conclusion:
\[
\Delta_{\mathbf{t}} := \bigwedge_{i=1}^n (x_i = t'_i),
\]
where $\mathbf{x} = (x_1,\dots,x_n)$ is the argument vector of $P$. This captures all structural dependencies between arguments that arise when variables occur multiple times in the conclusion pattern.

\begin{itemize}
    \item \textbf{Empty-premise productions:}  
    For a production with no premises, we add the universally closed axiom
    \[
    \Delta_{\mathbf{t}} \rightarrow P(\mathbf{x})
    \]
    to \pdef.

\item \textbf{Productions with premises:}  
For a production with premises, the transformation yields two universally closed guarded implications; both directions are required in order to simulate the case-split rules:

    \[
    \forall \mathbf{x} \bigl(
    (\Gamma_{\mathbf{t}} \land \Delta_{\mathbf{t}}) \to \bigl(\bigwedge \mathcal{P}'
    \rightarrow P(\mathbf{x})
    \bigr)\bigr),
    \]
    \[
    \forall \mathbf{x} \bigl( (
    \Gamma_{\mathbf{t}} \land \Delta_{\mathbf{t}}) \to \bigl( P(\mathbf{x})
    \rightarrow \bigwedge \mathcal{P}'
    \bigr)\bigr).
    \]
\end{itemize}
The use of both directions is justified under the standard semantics of inductive predicates over the natural numbers, where $0$, $s$, and $p$ are interpreted as usual, and an inductive predicate $P(\mathbf{x})$ holds exactly if it is derivable from the corresponding productions.

\paragraph{Pattern Exhaustion}
To provide a complete characterization, \pdef\ must specify that an inductive predicate $P$ is false when applied to arguments that do not match any production pattern. While this is, strictly speaking, an extension of the calculus, it is justified under the standard semantics, where such negative predicate instances are semantically false whenever they are not derivable from the productions. This extension is necessary for our formalism, as we depend on a complete characterization of the predicates.

\begin{enumerate}
    \item \textbf{Defining the Allowed Predicate:}  
    Let $P$ be an inductive predicate of arity $n$ with argument vector $\mathbf{x}$. For each production in $\Phi$ whose conclusion is $P(\mathbf{t})$, the transformation yields a domain guard $\Gamma_{\mathbf{t}}$ and a matching constraint $\Delta_{\mathbf{t}}$ (with $\Gamma_{\mathbf{t}} = \top$ for zero-premise productions).

    The allowed predicate is defined as the disjunction over all productions:
    \[
    \text{Allowed}_P(\mathbf{x}) :=
    \bigvee_{P(\mathbf{t}) \in \text{Concl}(P)}
    \left( \Gamma_{\mathbf{t}} \land \Delta_{\mathbf{t}} \right).
    \]

    \item \textbf{Generating the Negation Axiom:}  
    We add the axiom
    \[
    \forall \mathbf{x} \left( \neg \text{Allowed}_P(\mathbf{x}) \rightarrow \neg P(\mathbf{x}) \right),
    \]
    where the negation of $\text{Allowed}_P$ is obtained by pushing negation inward using standard propositional transformations.
\end{enumerate}

\begin{example}\label{ex.pdef_translation}
Applying the transformation $T$ to the definitions in Example~\ref{ex.productions}, we obtain the following inductive predicate axiom set. For predicates $N$ and $E$, the transformation follows the standard simplification and alignment:

\[
N^* = \Bigl\{ N(0), \quad 
\forall x \bigl( (x \ge \bar{1}) \to (N(px) \to N(x)) \bigr), \quad
\forall x \bigl( (x \ge \bar{1}) \to (N(x) \to N(px)) \bigr) \Bigr\}
\]

\[
E^* = \Bigl\{ E(0), \quad
\forall x \bigl( (x \ge \bar{1}) \to (O(px) \to E(x)) \bigr), \quad
\forall x \bigl( (x \ge \bar{1}) \to (E(x) \to O(px)) \bigr) \Bigr\}
\]

\noindent
For $O$, the construction is slightly more involved. The only allowed pattern for $O$ in $\Phi$ is $s(y)$, so we first derive the exhaustion axiom using the Allowed predicate:
\[
\text{Allowed}_O(x) := x \ge \bar{1}, \qquad
\forall x (\neg \text{Allowed}_O(x) \to \neg O(x))
\]
In the signature $\Sigma = \{0, s\}$, this effectively implies $\forall x (x = 0 \to \neg O(x))$. Combining this with the aligned inductive production, we then define:
\[
O^* = \Bigl\{ 
\forall x ( (x \ge \bar{1}) \to (E(px) \to O(x)) ), \quad
\forall x ( (x \ge \bar{1}) \to (O(x) \to E(px)) ), \quad
\forall x (x = 0 \to \neg O(x))
\Bigr\}
\]

Thus, the complete set is:
\[
\mathit{pdef} = N^* \cup E^* \cup O^*
\]

Note that for all these productions, the matching constraints $\Delta_{\mathbf{t}}$ are empty, as there are no dependencies between variables in the conclusions.
\end{example}

\begin{algorithm}[H]
\caption{\textsc{createInductivePredicateAxiomSet}($\Phi$)\label{alg.pdef}}
\begin{algorithmic}[1]
\Require Inductive definition set $\Phi$
\Ensure Inductive predicate axiom set \pdef
\State $\mathit{pdef} \gets \emptyset$
\State $Preds \gets \{ P \mid P \text{ is an inductive predicate in } \Phi \}$

\Statex \Comment{Store matching information per production}
\State $\mathcal{M} \gets \emptyset$ \Comment{will store tuples $(P,\Gamma_{\mathbf{t}},\Delta_{\mathbf{t}})$}

\ForAll{production $\phi \in \Phi$ of the form $\mathcal{P} \Rightarrow P(\mathbf{t})$}

    \State $\mathcal{V} \gets$ variables occurring in $\mathbf{t}$
    
    \Comment{Step 1: Compute maximal successor depth}
    \ForAll{$x \in \mathcal{V}$}
        \State $k_x \gets \max \{ k \mid s^k(x) \text{ is a subterm of } \mathbf{t} \}$
    \EndFor
    
    \Comment{Step 2: Domain guards}
    \If{$\mathcal{P} = \emptyset$}
        \State $\Gamma_{\mathbf{t}} \gets \top$
    \Else
        \State $\Gamma_{\mathbf{t}} \gets \bigwedge_{x \in \mathcal{V}} (x \ge k_x)$
    \EndIf

    \Comment{Step 3: Simplify conclusion and extract matching constraints}
    \State $\mathbf{t}' \gets \mathbf{t}$ with each maximal-depth $s^{k_x}(x)$ replaced by $x$ and each shallower $s^j(x)$ replaced by $p^{k_x-j}(x)$
    \State $\Delta_{\mathbf{t}} \gets \bigwedge_i (x_i = t'_i)$

    \State $\mathcal{M} \gets \mathcal{M} \cup \{ (P,\Gamma_{\mathbf{t}},\Delta_{\mathbf{t}}) \}$

    \If{$\mathcal{P} = \emptyset$} \Comment{Empty-premise production}
        \State $\mathit{pdef} \gets \mathit{pdef} \cup \{ \text{UnivClosure}(\Delta_{\mathbf{t}} \rightarrow P(\mathbf{x})) \}$
    \Else \Comment{Productions with premises}

        \Comment{Step 4: Align and simplify premises}
        \State $\mathcal{P}' \gets \{ \text{Simplify}(Q[x \mapsto p^{k_x}(x)]) \mid Q \in \mathcal{P} \}$

        \Comment{Step 5: Add guarded implications}
        \State $Axiom_1 \gets \text{UnivClosure}((\Gamma_{\mathbf{t}} \land \Delta_{\mathbf{t}}) \to (\bigwedge \mathcal{P}' \rightarrow P(\mathbf{x})))$
        \State $Axiom_2 \gets \text{UnivClosure}((\Gamma_{\mathbf{t}} \land \Delta_{\mathbf{t}}) \to (P(\mathbf{x}) \rightarrow \bigwedge \mathcal{P}'))$
        \State $\mathit{pdef} \gets \mathit{pdef} \cup \{Axiom_1, Axiom_2\}$
    \EndIf

\EndFor

\Statex \Comment{Step 6: Pattern exhaustion}
\ForAll{$P \in Preds$}
    \State $\mathbf{x} \gets$ fresh variables $(x_1,\dots,x_{\text{arity}(P)})$
    \State $\text{Allowed}_P(\mathbf{x}) \gets
    \bigvee_{(P,\Gamma_{\mathbf{t}},\Delta_{\mathbf{t}}) \in \mathcal{M}}
    (\Gamma_{\mathbf{t}} \land \Delta_{\mathbf{t}})$
    \State $\mathit{pdef} \gets \mathit{pdef} \cup \{ \forall \mathbf{x} (\neg \text{Allowed}_P(\mathbf{x}) \rightarrow \neg P(\mathbf{x})) \}$
\EndFor

\State \Return $\mathit{pdef}$
\end{algorithmic}
\end{algorithm}

\subsubsection{Identification of Companion Nodes}\label{sub.companion_nodes}

To determine the recursive entry points for the proof schema, we identify the set of nodes in the cyclic proof $\Pi_C= (\Pi, R)$ that serve as targets for back-links. Let $R$ be a mapping that associates each bud node $B_i$ (a leaf node) with its corresponding companion node $C_j$ (an internal node with an identical sequent). 

In the resulting proof schema, each $C_j$ in the set of companion nodes $C$ will later be assigned a unique proof symbol. This set is extracted by projecting the range of the mapping: $\textsc{getCompanionNodes}(\mathcal{R}) := \{ c \mid (b, c) \in \mathcal{R} \}$.

\begin{example}\label{ex:getCompanionNodes}
For the cyclic proof in Example~\ref{ex.even_odd_cyclic}, there are two bud-companion pairs with two different companion nodes. Hence, the algorithm extracts both companion nodes containing the sequents $Ex \vdash Nx$ and $Ox \vdash  Nx$ as the set $C$.
\end{example}


\subsection{Step 2: Cycle Decomposition}\label{sub.step2}

\subsubsection{Companion-Based Decomposition}
To translate a cyclic derivation $\Pi$ into a proof schema, we must decompose the global cyclic structure into a collection of finite sub-derivations. Each component corresponds to a unique proof symbol $\rho_i$, which serves as the identifier for a specific proof fragment in the resulting schema.

The procedure \textsc{splitAtCompanions} (Algorithm~\ref{alg.split_at_companions}) executes this decomposition by isolating sub-derivations based on the set of companion nodes $C$. It first assigns the root of the cyclic derivation to the main proof symbol $\rho_0$, regardless of whether this root node is a companion node or not. The algorithm then extracts a sub-derivation for every other companion node $C_i \in C$, defined as the sub-derivation of $\Pi$ rooted at $C_i$.

Once these sub-derivations are extracted, they are processed by the \textsc{pruneAtCompanions} procedure (Algorithm~\ref{alg.prune_at_companions}). This algorithm performs a depth-first traversal of each sub-derivation and "prunes" every branch at the first point it encounters a companion node (excluding the root of the sub-derivation itself). By removing the sub-derivations above these points, we effectively transform the cyclic back-links (companions) into leaves. These leaves act as placeholders that will be transformed into proof calls to the appropriate proof symbols in the final schema construction. The final output of the decomposition is a dictionary mapping each proof symbol $\rho_i$ to its corresponding finite, pruned sub-derivation $T_i$.

\begin{example}\label{ex.split_application}
Applying \textsc{splitAtCompanions} to the derivation in Example~\ref{ex.even_odd_cyclic} with the companion set $C$ from Example~\ref{ex:getCompanionNodes} yields the dictionary $\mathcal{D} = \{ \rho_0 \mapsto T_0, \rho_1 \mapsto T_1, \rho_2 \mapsto T_2 \}$, where each $T_i$ is a finite derivation tree:

\begin{itemize}
    \item \textbf{$T_0 $:} Rooted at $Ex \lor Ox \vdash Nx$.
    \begin{center}
    \small
    \AxiomC{$Ex \vdash Nx$ ($b$)}
    \AxiomC{$Ox \vdash Nx$ ($a$)}
    \RightLabel{$\lor_l$}
    \BinaryInfC{$Ex \lor Ox \vdash Nx$}
    \DisplayProof
    \end{center}
    The branches are pruned immediately as both are companions.

    \item \textbf{$T_1$ :} Rooted at $Ex \vdash Nx$.
    \begin{center}
    \small
    \AxiomC{}
    \RightLabel{$N_{1r}$}
    \UnaryInfC{$\vdash N0$}
    \RightLabel{$w$}
    \UnaryInfC{$x= 0 \vdash N0$}
    \RightLabel{$=_l$}
    \UnaryInfC{$x = 0 \vdash Nx$}

    \AxiomC{$Ox \vdash Nx$ ($a$)}
    \RightLabel{\scriptsize$\mathsf{subst}$}
    \UnaryInfC{$Oy \vdash Ny$}
    \RightLabel{$N_{2r}$}
    \UnaryInfC{$Oy \vdash Ns(y)$}
    \RightLabel{$w$}
    \UnaryInfC{$x = s(y),Oy \vdash Ns(y)$}
    \RightLabel{$=_l$}
    \UnaryInfC{$x = s(y), Oy \vdash Nx$}
    
    \RightLabel{$(\mathsf{Case}\, E)$}
    \BinaryInfC{$Ex \vdash Nx$}
    \DisplayProof
    \end{center}

\newpage
    \item \textbf{$T_2$ :} Rooted at $Ox \vdash Nx$.
    \begin{center}
    \small
    \AxiomC{$Ex \vdash Nx$ ($b$)}
    \RightLabel{\scriptsize $\mathsf{subst}$}
    \UnaryInfC{$Ey \vdash Ny$}
    \RightLabel{$N_{2r}$}
    \UnaryInfC{$Ey \vdash Ns(y)$}
    \RightLabel{$w$}
    \UnaryInfC{$x = s(y),Ey \vdash Ns(y)$}
    \RightLabel{$=_l$}
    \UnaryInfC{$x = s(y), Ey \vdash Nx$}
    \RightLabel{$(\mathsf{Case}\, O)$}
    \UnaryInfC{$Ox \vdash Nx$}
    \DisplayProof
    \end{center}
    
\end{itemize}
\end{example}

\begin{algorithm}[H]
\caption{\textsc{splitAtCompanions}($\Pi, C$)}\label{alg.split_at_companions}
\begin{algorithmic}[1]
\Require Derivation $\Pi$, set of companion nodes $C = \{C_1, \dots, C_k\}$
\Ensure A dictionary $\mathcal{D}$ mapping proof symbols $\rho_i$ to pruned sub-derivations $T_i$
\State $\mathcal{D} \gets \text{empty dictionary}$
\State $r \gets \text{root node of } \Pi$
\State $L \gets [r]$ \Comment{Initialize list with the root of the proof}
\ForAll{$c \in C$}
    \If{$c \neq r$} 
        \State add $c$ to the end of $L$ \Comment{Avoid duplicate entry if root is a companion}
    \EndIf
\EndFor

\For{$i = 0$ \textbf{to} $\text{length}(L) - 1$}
    \State $n \gets L[i]$
    \State $T_{sub} \gets \text{extract sub-derivation of } \Pi \text{ rooted at } n$
    \State $T_i \gets \textsc{pruneAtCompanions}(T_{sub}, C \setminus \{n\})$
    \State $\mathcal{D}[\rho_i] \gets T_i$ \Comment{Map symbol $\rho_i$ to the finite tree $T_i$}
\EndFor
\State \Return $\mathcal{D}$
\end{algorithmic}
\end{algorithm}

\begin{algorithm}[H]
\caption{\textsc{pruneAtCompanions}($T, C_{prune}$)}\label{alg.prune_at_companions}
\begin{algorithmic}[1]
\Require A derivation $T$ and a set of companion nodes $C_{prune}$
\Ensure A derivation $T'$ truncated at the first encounter of any $c \in C_{prune}$ along any branch
\State $T' \gets T$
\State \textbf{procedure} \textsc{TraverseAndPrune}($n$):
    \If{$n \in C_{prune}$}
        \State Remove all sub-derivations above $n$ \Comment{$n$ becomes a leaf}
        \State \Return \Comment{Stop exploring this specific branch}
    \Else
        \State Identify the premises $p_1, \dots, p_m$ of the rule applied at $n$
        \ForAll{$p_j \in \{p_1, \dots, p_m\}$}
            \State \textsc{TraverseAndPrune}($p_j$) \Comment{Continue searching other branches}
        \EndFor
    \EndIf
\State \textsc{TraverseAndPrune}(root of $T'$)
\State \Return $T'$
\end{algorithmic}
\end{algorithm}

\subsubsection{Identification of Parameters}\label{sub.collect_parameters}

After the cyclic proof has been decomposed into finite sub-derivations, the parameter sets of the resulting proof symbols are determined from the variables that control case distinctions in the original cyclic proof. The procedure \textsc{collectParameters} first traverses the whole cyclic proof $\Pi$ and collects all variables occurring on the left-hand side of guards in case-split rules. This yields a global candidate set of parameters.

The algorithm then restricts this candidate set separately for each sub-derivation $T_i$. Since the proof symbol $\rho_i$ represents the component $T_i$, only those candidate parameters that occur in the end-sequent of $T_i$ are kept for $\rho_i$. Thus, the output is a dictionary
$\mathcal{N}_{\mathrm{loc}} = \{\rho_i \mapsto \mathcal{N}_i\}$,
where each $\mathcal{N}_i$ contains precisely those global candidate parameters that occur in the end-sequent of $T_i$.

\begin{example}
Consider the dictionary
$T = \{\rho_0 \mapsto T_0,\rho_1 \mapsto T_1,\rho_2 \mapsto T_2\}$ 
obtained from \textsc{splitAtCompanions} in Example~\ref{ex.split_application}.  
The procedure \textsc{collectParameters} first traverses the whole cyclic proof $\Pi$ and collects all variables occurring on the left-hand side of guards in case-split rules. In the running example, the relevant guards are of the form $x=0$ and $x=s(y)$, so the global candidate set is
\[
L = \{x\}.
\]

The algorithm then restricts this set to the variables occurring in the end-sequent of each component $T_i$. Since the end-sequents of $T_0$, $T_1$, and $T_2$ all contain $x$, we obtain
\[
\mathcal{N}_{\mathrm{loc}}
=
\{\rho_0 \mapsto \{x\},
  \rho_1 \mapsto \{x\},
  \rho_2 \mapsto \{x\}\}.
\]
Thus, each proof symbol receives the local parameter set consisting of the global candidate parameters that occur in its own end-sequent.
\end{example}

\begin{algorithm}[H]
\caption{\textsc{collectParameters}($\Pi,T$)}\label{alg.collect_params}
\begin{algorithmic}[1]
\Require Cyclic proof $\Pi$; dictionary $T = \{\rho_i \mapsto T_i\}$ mapping proof symbols to pruned sub-derivations
\Ensure A dictionary $\mathcal{N}_{\mathrm{loc}} = \{\rho_i \mapsto \mathcal{N}_i\}$ mapping proof symbols to local parameter sets

\State $L \gets \emptyset$
\Comment{Variables occurring on the left-hand side of guards in $\Pi$}
\State $\mathcal{N}_{\mathrm{loc}} \gets \text{empty dictionary}$

\Statex \Comment{\textit{Collect global parameter candidates}}
\State \textbf{traverse} each rule $R'$ in $\Pi$:
\If{$R'$ is a case-split rule}
    \State $\mathcal{G} \gets \text{set of guards in the premises of } R'$
\ForAll{$(t = t') \in \mathcal{G}$} \State $L \gets L \cup \text{free variables occurring in } t$
    \EndFor
\EndIf

\Statex \Comment{\textit{Restrict candidates to the end-sequent of each component}}
\ForAll{$(\rho_i,T_i) \in T$}
    \State $V_i \gets \text{free variables occurring in } \text{end-sequent}(T_i)$
    \State $\mathcal{N}_i \gets L \cap V_i$
    \State $\mathcal{N}_{\mathrm{loc}}[\rho_i] \gets \mathcal{N}_i$
\EndFor

\State \Return $\mathcal{N}_{\mathrm{loc}}$
\end{algorithmic}
\end{algorithm}


\subsection{Step 3: Case Partitioning}\label{sub.step3}

After decomposing the cyclic proof into finite sub-derivations based on the set of companion nodes, we must address the branching introduced by case-split rules. In a proof schema, each proof symbol $\rho_i$ is assigned a definition $D(\rho_i)$. This definition consists of multiple components, where each component pairs a specific condition on the parameters with a corresponding s-proof.

To facilitate this, any derivation extracted in the previous step that contains a case-split rule must be partitioned into multiple \textit{case-linear} derivations. In this context, \textit{linear} refers strictly to the \textit{elimination} of case-split branching; any other inference rules, whether unary or binary, remain untouched at this transformational stage. Each of these case-linear derivations will eventually correspond to a single guarded component in the definition $D(\rho_i)$.

The procedure \textsc{splitCases} (Algorithm~\ref{alg.split_cases}) transforms a derivation $T$ into a set of derivations $\mathcal{S}$ such that each member of $\mathcal{S}$ follows exactly one premise branch for every case-split rule encountered. The algorithm operates bottom-up: it identifies the case-split rule closest to the root and duplicates the entire derivation for each premise of that rule. Because this process is applied recursively, it successfully handles nested case-splits, producing one derivation for every possible combination of cases across all splits in the tree.

\begin{example}\label{ex.split_cases_application}
Consider the pruned sub-derivation $T_1$ from Example~\ref{ex.split_application}, which is rooted at the sequent $Ex \vdash Nx$. This derivation contains one case-split rule, $(\mathsf{Case}\, E)$, which partitions the proof into the cases ($x=0$) and ($x=s(y)$).

Applying \textsc{splitCases}($T_1$) results in the set $\mathcal{S}_{T_1} = \{T_{1,1}, T_{1,2}\}$, where:

\begin{itemize}
    \item \textbf{$T_{1,1}$:}
    \begin{center}
    \small
    \AxiomC{}
    \RightLabel{$N_{1r}$}
    \UnaryInfC{$\vdash N0$}
    \RightLabel{$w$}
    \UnaryInfC{$x= 0 \vdash N0$}
    \RightLabel{$=_l$}
    \UnaryInfC{$x = 0 \vdash Nx$}
    \RightLabel{$(\mathsf{Case}\, E)$}
    \UnaryInfC{$Ex \vdash Nx$}
    \DisplayProof
    \end{center}

    \item \textbf{$T_{1,2}$:}
    \begin{center}
    \small
    \AxiomC{$Ox \vdash Nx$ ($n_a$)}
    \RightLabel{\scriptsize$\mathsf{subst}$}
    \UnaryInfC{$Oy \vdash Ny$}
    \RightLabel{$N_{2r}$}
    \UnaryInfC{$Oy \vdash Ns(y)$}
    \RightLabel{$w$}
    \UnaryInfC{$x = s(y),Oy \vdash Ns(y)$}
    \RightLabel{$=_l$}
    \UnaryInfC{$x = s(y), Oy \vdash Nx$}
    \RightLabel{$(\mathsf{Case}\, E)$}
    \UnaryInfC{$Ex \vdash Nx$}
    \DisplayProof
    \end{center}
\end{itemize}
In $T_{1,1}$, the second premise of $(\mathsf{Case}\, E)$ has been removed, while in $T_{1,2}$, the first premise has been removed. Both derivations maintain the original root.
\end{example}

\begin{algorithm}[H]
\caption{\textsc{splitCases}($T$)}\label{alg.split_cases}
\begin{algorithmic}[1]
\Require A derivation $T$
\Ensure A set of derivations $\mathcal{S}$ where each case-split rule is restricted to a single premise
\If{$T$ contains no case-split rules}
    \State \Return $\{T\}$
\EndIf

\State Let $R$ be the case-split rule in $T$ that is closest to the root
\State Let $p_1, \dots, p_k$ be the premises of $R$
\State $\mathcal{S} \gets \emptyset$

\For{$i = 1$ \textbf{to} $k$}
    \State $T_i \gets$ copy of $T$
    \State In $T_i$, remove all premise branches of $R$ except the one starting with $p_i$
    \State \Comment{All inference steps below $R$ and above $p_i$ are preserved}
    \State $\mathcal{S}_i \gets \textsc{splitCases}(T_i)$ \Comment{Recursively resolve any remaining case-splits in $T_i$}
    \State $\mathcal{S} \gets \mathcal{S} \cup \mathcal{S}_i$
\EndFor

\State \Return $\mathcal{S}$
\end{algorithmic}
\end{algorithm}

\subsection{Step 4: Subderivation Transformation}\label{sub.step4}
This stage operates on the case-linear derivations obtained from the previous step to construct the mapping $\Pi_P$, which assigns a definition $D(\rho)$ to every proof symbol $\rho$. We encode $\Pi_P$ as a dictionary mapping proof symbols to a collection of pairs, each consisting of a condition on parameters and a corresponding s-proof.

\subsubsection{Normalization of Guards, Conditions and Substitutions}

The transformation of cyclic derivations into proof schemata requires a systematic way to handle the conditions under which different proof paths are valid. This process begins with the identification of \textit{guards}, the equation systems found in case-split rules. In our case-linear derivations, every case-split rule is unary, representing one specific branch of the original split.

However, these guards often introduce auxiliary variables that are not part of the parameter set $\mathcal{N}_i$ associated with the current proof symbol $\rho_i$. For example, a guard might state $x = s(y)$, where $x \in \mathcal{N}_i$ is a parameter of the current component but $y$ is a new auxiliary variable. To obtain a proper s-proof, these additional variables must be eliminated or re-expressed. We achieve this in two ways:
\begin{itemize}
    \item \textbf{\textsc{normalizeGuards}}: Used for the partition. It transforms guards into a single conjunctive formula imposing conditions only on parameters in $\mathcal{N}_i$ (e.g., $x > 0$).
    \item \textbf{\textsc{extractSubstitutions}}: Used for the derivation itself. It rewrites auxiliary variables in terms of the parameters in $\mathcal{N}_i$, together with $s$ and $p$ (e.g., $\{y \xleftarrow{} p(x) \}$).
\end{itemize}

\begin{example}\label{ex.collect_guards_even_odd}
Consider the case-linear derivation $T_{1,2}$ from Example~\ref{ex.split_cases_application}. The $(\mathsf{Case}\, E)$ rule at the root of this sub-derivation contains a premise defined by the guard $x = s(y)$.
\begin{center}
    \small
    \AxiomC{$\vdots$}
    \noLine
    \UnaryInfC{$x = s(y), Oy \vdash Nx$}
    \RightLabel{$(\mathsf{Case}\, E)$}
    \UnaryInfC{$Ex \vdash Nx$}
    \DisplayProof
\end{center}
In this instance, $x \in \mathcal{N}_1$ is a parameter of the component corresponding to $\rho_1$, but $y$ is an auxiliary variable. The procedure \textsc{collectGuards} isolates this equation so that subsequent normalization steps can eventually eliminate $y$. For $T_{1,2}$, the procedure yields the set $\mathcal{G} = \{ x = s(y) \}$.

Following the decomposition in Example~\ref{ex.split_application} and the partitioning in Example~\ref{ex.split_cases_application}, we apply \textsc{collectGuards} to each case-linear derivation. For the components considered here, the relevant local parameter sets contain $x$. The results are summarized below:

\begin{table}[H]
\centering
\renewcommand{\arraystretch}{1.3} 
\begin{tabular}{|l|l|l|}
\hline
\textbf{Derivation} & \textbf{Root Sequent} & \textbf{Output $\mathcal{G}$} \\ \hline \hline
$T_0$               & $Ex \lor Ox \vdash Nx$ & $\emptyset$                   \\ \hline
$T_{1,1}$           & $Ex \vdash Nx$        & $\{ x = 0 \}$                 \\ \hline
$T_{1,2}$           & $Ex \vdash Nx$        & $\{ x = s(y) \}$              \\ \hline
$T_{2}$             & $Ox \vdash Nx$        & $\{ x = s(y) \}$              \\ \hline
\end{tabular}
\caption{Guards collected from the Even/Odd case-linear sub-derivations.}
\end{table}

\noindent
In $T_{1,1}$, the variable $x$ is matched directly to a ground term $0$. In $T_{1,2}$ and $T_{2}$, the guard introduces the auxiliary variable $y$, which is not a member of the corresponding local parameter set. The existence of these $y$ variables is exactly why the subsequent normalization and substitution steps are necessary.
\end{example}

The formalization of this gathering step is provided in Algorithm~\ref{alg.collect_guards}.

\begin{algorithm}[H]
\caption{\textsc{collectGuards}($T$)}\label{alg.collect_guards}
\begin{algorithmic}[1]
\Require Derivation $T$ where all occurring case-split rules are unary
\Ensure Set of guard equations $\mathcal{G}$ extracted from the case-split rules in $T$
\State $\mathcal{G} \gets \emptyset$

\ForAll{nodes $N$ in $T$ traversed from root to leaves}
    \If{$N$ is the premise of a case-split rule}
        \State $g \gets$ set of equations $\mathbf{u} = \mathbf{t}(\mathbf{y})$ found in $N$
        \State $\mathcal{G} \gets \mathcal{G} \cup g$ \Comment{Flatten systems into a single set}
    \EndIf
\EndFor
\State \Return $\mathcal{G}$
\end{algorithmic}
\end{algorithm}


\paragraph{\textsc{normalizeGuards}} 
To convert the guards extracted from case-split rules into a standardized format for the proof schema's partition, we define the procedure \textsc{normalizeGuards}. This algorithm processes the set of equations $\mathcal{G}$ relative to the local parameter set $\mathcal{N}_i$ through three stages:

\begin{itemize}
    \item \textbf{Auxiliary Substitution}: The algorithm first eliminates all auxiliary variables $v \notin \mathcal{N}_i$ by propagating their definitions through the set of equations. These substitutions collapse chains of dependencies until the equations consist solely of relationships between parameters, constants, or irreducible auxiliary terms.
    
    \item \textbf{Depth Extraction}: Each side of an equation is analyzed using a helper function \textsc{ExtractDepth}. This function identifies the base term (a parameter $v$, the constant $0$, or an auxiliary variable) and the number of successor applications $n$ surrounding it. For example, the term $s(s(x))$ is represented as the pair $(x, 2)$.
    
    \item \textbf{Arithmetic Translation}: The algorithm maps these depths to arithmetic constraints based on the nature of the base terms:
    \begin{itemize}
        \item \textbf{Parameter-to-Parameter}: If both sides involve parameters $v_L, v_R \in \mathcal{N}_i$, the \textsc{Align} function simplifies the successor applications by removing common successors from both sides (e.g., $s(s(y)) = s(z) \to s(y) = z$).
        \item \textbf{Parameter-to-Constant}: If a parameter $v_L$ is found within a term $s^{n_L} v_L$ that is equated to a ground term $s^{n_R} 0$, the integer value is calculated by offsetting the depths: $(v_L = n_R - n_L)$. For instance, $s(x) = s(s(0))$ yields $x = \bar{1}$.
        \item \textbf{Parameter-to-Auxiliary}: If a parameter $v_L$ in $s^{n_L} v_L$ is equated to an irreducible auxiliary variable $v_R$ in $s^{n_R} v_R$ (where $v_R \notin \mathcal{N}_i \cup \{0\}$), it implies $v_L$ must be at least large enough to satisfy the successor chain. This is translated into an inequality $(v_L > n_L + n_R - 1)$.
    \end{itemize}
\end{itemize}

\paragraph{Disjointness and Completeness} 
A crucial property of the conditions generated by \textsc{NormalizeGuards} is their mutual exclusivity. Because the sets $\mathcal{G}$ are extracted from the branches of case-split rules in the original cyclic proof, and these rules are constructed to be non-overlapping (e.g., $x=0$ and $x=s(y)$), the resulting arithmetic formulas $\Phi_i$ for different case-linear derivations will also be disjoint. 

However, these conditions might not be \textit{complete}; they only cover the specific paths present in the original proof. To satisfy the proof schema formalism's requirement for a total partition of the parameter space, we will later introduce an \texttt{else} case to handle any missing assignments.

\begin{example}\label{ex.normalize_guards_results}
Applying \textsc{NormalizeGuards} to the results of Example~\ref{ex.collect_guards_even_odd} yields the following arithmetic conditions relative to the corresponding local parameter sets:
\begin{itemize}
    \item \textbf{ $T_{1,1}$}: $\mathcal{G} = \{x = 0\} \implies \Phi_{1,1} \equiv (x = 0)$.
    \item \textbf{ $T_{1,2}$}: $\mathcal{G} = \{x = s(y)\} \implies \Phi_{1,2} \equiv (x > 0)$.
    \item \textbf{ $T_{2}$}: $\mathcal{G} = \{x = s(y)\} \implies \Phi_{2} \equiv (x > 0)$.
\end{itemize}

These results directly determine the partitions for the proof schema definitions:
\begin{itemize}
    \item The derivations $T_{1,1}$ and $T_{1,2}$ correspond to the same proof symbol $\rho_1$. Their respective conditions, $(x=0)$ and $(x>0)$, already form a complete partition of the natural numbers. Thus, the definition $D(\rho_1)$ is later composed of these two exhaustive components.
    
    \item The derivation $T_2$ corresponds to the proof symbol $\rho_2$. Since its condition $(x > 0)$ does not form a complete partition on its own, an \texttt{else} branch will be added to $D(\rho_2)$ to satisfy the partition requirements of the proof schema formalism.
    
    \item For the derivation $T_0$ corresponding to $\rho_0$, no case-split rules were encountered earlier. 
    Consequently, this definition consists of a single component with no restriction on the parameters.
\end{itemize}
\end{example}

\begin{algorithm}[H]
\caption{\textsc{normalizeGuards}($\mathcal{G}, \mathcal{N}_i$)}\label{alg.normalize_guards}
\begin{algorithmic}[1]
\Require Set of equations $\mathcal{G}$ without $p$, local parameter set $\mathcal{N}_i$
\Ensure A formula $\Phi$

\State $\mathcal{E} \gets \text{Apply all possible substitutions for } v \notin \mathcal{N}_i$
\State $\Phi \gets \text{True}$

\ForAll{equation $(L = R) \in \mathcal{E}$}
    \State $(v_L, n_L) \gets \textsc{ExtractDepth}(L)$ \Comment{e.g., $s(s(x)) \to (x, 2)$}
    \State $(v_R, n_R) \gets \textsc{ExtractDepth}(R)$ \Comment{e.g., $s(0) \to (0, 1)$}
    
    \If{$v_L \in \mathcal{N}_i$ \textbf{and} $v_R \in \mathcal{N}_i$}
        \State $\Phi \gets \Phi \land \textsc{Align}(v_L, n_L, v_R, n_R)$
        \Comment{e.g., $s(s(y)) = s(z) \to s(y) = z$}
    \ElsIf{$v_L \in \mathcal{N}_i$ \textbf{and} $v_R = 0$}
        \State $\Phi \gets \Phi \land (v_L = n_R - n_L)$
    \ElsIf{$v_L \in \mathcal{N}_i$ \textbf{and} $v_R \notin \mathcal{N}_i \cup \{0\}$}
        \State $\Phi \gets \Phi \land (v_L > n_L + n_R - 1)$
        \Comment{Inequality case}
    \EndIf
\EndFor

\State \Return $\Phi$
\end{algorithmic}
\end{algorithm}


\paragraph{\textsc{extractSubstitutions}}
While \textsc{NormalizeGuards} focuses on the conditions for the partition, the procedure \textsc{extractSubstitutions} prepares the actual transformations required for the s-proofs. The goal is to produce a set of equations where each auxiliary variable $y$ (introduced by case-split rules) is isolated on the left-hand side, and the right-hand side consists strictly of terms constructed from constants, parameters from the local parameter set $\mathcal{N}_i$, and the operators $\{s, p\}$.

These substitutions are essential for the subsequent construction of the proof schema, as they allow us to eliminate all auxiliary variables from the case-linear derivations by replacing them with terms constructed from the local parameters in $\mathcal{N}_i$ (e.g., for guards $x=sx' , x'=sx''$ where only $x$ is a parameter the algorithm would output $\{x' \xleftarrow{} px, x'' \xleftarrow{} ppx \}$). 
The operation \textsc{SolveFor} is a syntactic inversion over successor and predecessor terms.

\begin{example}\label{ex:extractSubstitutions}
Applying \textsc{ExtractSubstitutions} to the derivations $T_{1,2}$ and $T_{2}$, where $\mathcal{G}=\{x=s(y)\}$ and $x$ belongs to the local parameter set of the corresponding proof symbol, yields the substitution
\[
\{y \xleftarrow{} p(x)\}
\]
for both derivations.

The auxiliary variable $y$ is thereby expressed entirely in terms of the local parameter $x$ and the predecessor function $p$. Consequently, all occurrences of $y$ in the corresponding case-linear derivations can be replaced by $p(x)$ during the construction of the proof schema.
\end{example}

\begin{algorithm}[H]
\caption{\textsc{ExtractSubstitutions}($\mathcal{G}, \mathcal{N}_i$)}\label{alg.extract_substitutions}
\begin{algorithmic}[1]
\Require Set of equations $\mathcal{G}$ (guards), local parameter set $\mathcal{N}_i$
\Ensure A set of equalities $\mathcal{E}$ expressing auxiliary variables in terms of $\mathcal{N}_i \cup \{s, p, 0\}$

\State $\mathcal{E} \gets \emptyset$
\State $Defs \gets \text{empty map}$

\Statex \Comment{\textbf{Step 1: Isolation of variables}}
\ForAll{equation $eq \in \mathcal{G}$}
    \State $(L, R) \gets \textsc{IsolateLHS}(eq)$ \Comment{e.g., $s(y)=x \to y=p(x)$}
    \State $Defs[L] \gets R$
\EndFor
\State $\mathcal{E} \gets \{ (v = expr) \mid Defs[v] = expr \}$

\Statex \Comment{\textbf{Step 2: Re-orientation}}
\Repeat
    \State $changed \gets \text{False}$
    \ForAll{$(v = expr) \in \mathcal{E}$}
        \If{$v \in \mathcal{N}_i$ \textbf{and} $expr$ contains some $v' \notin \mathcal{N}_i$}
            \State Remove $(v = expr)$ from $\mathcal{E}$
            \State $(v', expr') \gets \textsc{SolveFor}(v, expr, v')$ \Comment{e.g., $x=s(y) \to y=p(x)$}
            \State Add $(v' = expr')$ to $\mathcal{E}$
            \State $changed \gets \text{True}$
        \EndIf
    \EndFor
\Until{$\neg changed$}

\Statex \Comment{\textbf{Step 3: Dependency Resolution}}
\Repeat
    \State $substituted \gets \text{False}$
    \ForAll{$(v = expr) \in \mathcal{E}$}
        \ForAll{$(v_{def} = expr_{def}) \in \mathcal{E}$ where $v \neq v_{def}$}
            \If{$v_{def}$ occurs in $expr$}
                \State Replace all occurrences of $v_{def}$ in $expr$ with $(expr_{def})$
                \State Simplify $expr$ \Comment{Apply reductions like $s(p(x)) \to x$}
                \State $substituted \gets \text{True}$
            \EndIf
        \EndFor
    \EndFor
        
    \ForAll{$v$ appearing on LHS in multiple equations in $\mathcal{E}$}
        \State Keep only one definition
        \Comment{Which one does not matter}
    \EndFor
\Until{$\neg substituted$}

\Statex \Comment{\textbf{Step 4: Final Filtering}}
\State \Return $\{ (v = expr) \in \mathcal{E} \mid v \notin \mathcal{N}_i \}$
\Comment{Return only definitions for auxiliary variables}
\end{algorithmic}
\end{algorithm}



\paragraph{Macro-rules} For clarity in the following transformations, we introduce two macro-rules. Both macro-rules are used inside an $s$-proof defined under a condition $C$. Recall that, under such a condition, we may use as axioms all atomic sequents that are semantically entailed by $C$ under the standard interpretation; see Definition~\ref{def.semi-proof-schema}. Thus, whenever \[ C \models_{\mathcal M} D \] for an atomic formula $D$, we may use the sequent $\vdash D$ as a $C$-axiom. We denote such an axiom by $\Ax_C(D)$. If $D$ is non-atomic, there exists a simple propositional derivation of $\vdash D$ from atomic sequents $Ax_C (D): \{ S_1, \dots , S_n  \}$ such that $C \models_{\mathcal M} S_i$.

\medskip
\noindent
With this convention in place, we may use the following derived rules as abbreviations for the corresponding $\LK$-derivations.

\medskip
\medskip
\noindent
\begin{minipage}[t]{0.42\textwidth}
\text{$Ax_r^{C,D}:$}
\small
\begin{prooftree}
    \AxiomC{$\Gamma \vdash A,\Delta$}
    \RightLabel{$Ax_r^{C,D}$}
    \UnaryInfC{$\Gamma,\, D \to (A\to B) \vdash B,\Delta$}
\end{prooftree}
\end{minipage}
\hfill
\begin{minipage}[t]{0.56\textwidth}
\text{is defined by:}

\small
\begin{prooftree}
    \AxiomC{$\Ax_C(D)$}
    \RightLabel{$*$}
    \UnaryInfC{$\vdash D$}

    \AxiomC{$\Gamma \vdash A,\Delta$}

    \AxiomC{$A \vdash A$}
    \AxiomC{$B \vdash B$}
    \RightLabel{$\to_l$}
    \BinaryInfC{$A\to B,\, A \vdash B$}

    \RightLabel{$cut$}
    \BinaryInfC{$\Gamma,\, A\to B \vdash B,\Delta$}

    \RightLabel{$\to_l$}
    \BinaryInfC{$\Gamma,\, D \to (A\to B) \vdash B,\Delta$}
\end{prooftree}
\end{minipage}

\medskip\medskip

\noindent
\begin{minipage}[t]{0.42\textwidth}
\text{$Ax_l^{C,D}:$}
\small
\begin{prooftree}
    \AxiomC{$\Gamma,\, A \vdash \Delta$}
    \RightLabel{$Ax_l^{C,D}$}
    \UnaryInfC{$\Gamma,\, D \to (B\to A),\, B \vdash \Delta$}
\end{prooftree}
\end{minipage}
\hfill
\begin{minipage}[t]{0.56\textwidth}
\text{is defined by:}

\small
\begin{prooftree}
    \AxiomC{$\Ax_C(D)$}
    \RightLabel{$*$}
    \UnaryInfC{$\vdash D$}

    \AxiomC{$B \vdash B$}

    \AxiomC{$\Gamma,\, A \vdash \Delta$}

    \RightLabel{$\to_l$}
    \BinaryInfC{$\Gamma,\, B\to A,\, B \vdash \Delta$}

    \RightLabel{$\to_l$}
    \BinaryInfC{$\Gamma,\, D \to (B\to A),\, B \vdash \Delta$}
\end{prooftree}
\end{minipage}

\medskip

\noindent
We write $Ax_r^{D}$ and $Ax_l^{D}$ if $C$ is clear from the context.

\paragraph{Transformation of Subderivations}
The procedure \textsc{transformSubderivation} takes a case-linear derivation and converts it into an s-proof. Its purpose is to simulate all remaining (unary) case-split rules, unfolding rules for inductive predicates on the right-hand side, and the remaining companion and bud nodes under the condition $C$ produced by \textsc{normalizeGuards}. This process yields an \emph{s-proof} whose initial sequents are restricted to regular axioms, C-axioms and proof terms only. For an input derivation with end-sequent $\Gamma \vdash \Delta$, the procedure outputs an s-proof with the end-sequent $\mathit{pdef}, \Gamma \vdash \Delta$, where $\mathit{pdef}$ contains the necessary axioms to simulate the inductive definitions.
We will go through the major transformation steps before giving the algorithm.

\paragraph{(Unary) Split-Case Rules}
In this intermediary stage, split-case rules are unary, representing a single specific branch of the original derivation. In the original cyclic derivation, such a split-case rule is always followed by
applications of the equality rule $(=)$ and weakening $(w)$.
The algorithm simulates this entire block by discarding the rules $(=)$ and $(w)$. We distinguish two scenarios based on whether the premise has case-descendants or not.

\medskip
\noindent
\textbf{Case 1: Premise contains no case-descendants.}
When no case-descendants exist (can be considered a base case), the algorithm substitutes the guard (e.g., $x=0$) directly into the conclusion of this rule. This is semantically valid because this branch is only invoked when the condition associated with it is satisfied.

\begin{example}
Consider the branch of $T_{1,1}$ from Example~\ref{ex.split_cases_application} where the guard is $x=0$. The algorithm replaces the case-split, equality and weakening rules with a direct substitution  into the sequent:
\begin{center}
\begin{minipage}{0.45\textwidth}
\centering \textit{Intermediary Input}
\begin{prooftree}
    \AxiomC{$\vdots$}
    \UnaryInfC{$\vdash N0$}
    \RightLabel{\scriptsize $w$}
    \UnaryInfC{$x= 0 \vdash N0$}
    \RightLabel{\scriptsize $=_l$}
    \UnaryInfC{$x = 0 \vdash Nx$}
    \RightLabel{\scriptsize $(\mathsf{Case}\, E)$}
    \UnaryInfC{$Ex \vdash Nx$}
\end{prooftree}
\end{minipage}
$\to$
\begin{minipage}{0.45\textwidth}
\centering \textit{Transformed Output}
\begin{prooftree}
    \AxiomC{$\vdots$}
    \UnaryInfC{$\mathit{pdef} \vdash N0$}
    \RightLabel{\scriptsize $w$}
    \UnaryInfC{$\mathit{pdef}, E0 \vdash N0$}
\end{prooftree}
\end{minipage}
\end{center}

\end{example}

\medskip
\noindent

\medskip
\noindent
\textbf{Case 2: The premise contains case-descendants.}
In this scenario, the premise takes the form:
\[
\Gamma,\, \mathbf{u} = \mathbf{t}(\mathbf{y}),\, Q_1(\mathbf{u_1}(\mathbf{y})), \dots, Q_h(\mathbf{u_h}(\mathbf{y})),\, P_{j_1}(\mathbf{t_1}(\mathbf{y})), \dots, P_{j_m}(\mathbf{t_m}(\mathbf{y})) \;\vdash\; \Delta,
\]
where $P_{j_1}(\mathbf{t_1}(\mathbf{y})), \dots, P_{j_m}(\mathbf{t_m}(\mathbf{y}))$ are the \emph{case-descendants}. As before, this rule is followed by applications of the equality rule $(=_l)$ and weakening $(w)$. 

The algorithm simulates this entire block by discarding $(=_l)$ and weakening $(w)$ and replacing them with logical inferences utilizing $\mathit{pdef}$. Starting from the conclusion of the split-case rule, it applies $\forall_l$ to the corresponding predicate definition in $\mathit{pdef}$, instantiating the universally quantified variables with the terms $\mathbf{u}$ from the guard. This is followed by an application of $Ax_l^{C,D}$.

\begin{example}
Consider the derivation $T_{1,2}$ from Example~\ref{ex.split_cases_application} for $x = s(y)$, where $Oy$ is a case-descendant. The transformation yields:
\begin{center}
\begin{minipage}{0.45\textwidth}
\centering \textit{Intermediary Input}
\begin{prooftree}
    \AxiomC{$Ox \vdash Nx$ ($a$)}
    \RightLabel{\scriptsize$\mathsf{subst}$}
    \UnaryInfC{$Oy \vdash Ny$}
    \RightLabel{$N_{2r}$}
    \UnaryInfC{$Oy \vdash Ns(y)$}
    \RightLabel{\scriptsize $w$}
    \UnaryInfC{$x= s(y), Oy \vdash Ns(y)$}
    \RightLabel{\scriptsize $=_l$}
    \UnaryInfC{$x = s(y), Oy \vdash Nx$}
    \RightLabel{\scriptsize $(\mathsf{Case}\, E)$}
    \UnaryInfC{$Ex \vdash Nx$}
\end{prooftree}
\end{minipage}
$\to$
\begin{minipage}{0.45\textwidth}
\centering \textit{Transformed Output}
\begin{prooftree}
       \AxiomC{$Ox \vdash Nx$ ($a$)}
    \RightLabel{\scriptsize$\mathsf{subst}$}
    \UnaryInfC{$Oy \vdash Ny$}
    \RightLabel{$N_{2r}$}
    \UnaryInfC{$\mathit{pdef}, Opx \vdash Nx$}
    \RightLabel{\scriptsize $Ax_l^{x \geq \bar{1}}$}
    \UnaryInfC{$\mathit{pdef},  (x \geq \bar{1} \to (Ex \to Opx)), Ex \vdash Nx$}
    \RightLabel{\scriptsize $\forall_l, c_l$}
    \UnaryInfC{$\mathit{pdef}, Ex \vdash Nx$}
\end{prooftree}
\end{minipage}
\end{center}

\noindent
Note that the transformed output has incorrect derivations at this stage.

\end{example}

\paragraph{Unfold-Right Rules}

An unfolding rule applied to a formula on the right-hand side of a sequent is simulated by applying $\forall_l$ to the corresponding formula in $\mathit{pdef}$, followed by an application of the $Ax_r^{C,D}$ macro.

\begin{example}\label{ex:unfoldTransformation}
Consider the intermediary transformation from the example before:
\begin{center}
\begin{minipage}{0.45\textwidth}
\centering \textit{Intermediary Input}
\begin{prooftree}
       \AxiomC{$Ox \vdash Nx$ ($a$)}
    \RightLabel{\scriptsize$\mathsf{subst}$}
    \UnaryInfC{$Oy \vdash Ny$}
    \RightLabel{$N_{2r}$}
    \UnaryInfC{$\mathit{pdef}, Opx \vdash Nx$}
    \RightLabel{\scriptsize $Ax_l^{x \geq \bar{1}}$}
    \UnaryInfC{$\mathit{pdef},  (x \geq \bar{1} \to (Ex \to Opx)), Ex \vdash Nx$}
    \RightLabel{\scriptsize $\forall_l, c_l$}
    \UnaryInfC{$\mathit{pdef}, Ex \vdash Nx$}
\end{prooftree}
\end{minipage}
$\to$
\begin{minipage}{0.45\textwidth}
\centering \textit{Transformed Output}
\begin{prooftree}
       \AxiomC{$Ox \vdash Nx$ ($a$)}
    \RightLabel{\scriptsize$\mathsf{subst}$}
    \UnaryInfC{$Oy \vdash Npx$}
    \RightLabel{\scriptsize$Ax_r^{x \geq \bar{1}}$}
    \UnaryInfC{$\mathit{pdef}, (x \geq \bar{1} \to (Npx \to Nx)), Opx \vdash Nx$}
    \RightLabel{\scriptsize $\forall_l, c_l$}
    \UnaryInfC{$\mathit{pdef}, Opx \vdash Nx$}
    \RightLabel{\scriptsize $Ax_l^{x \geq \bar{1}}$}
    \UnaryInfC{$\mathit{pdef},  (x \geq \bar{1} \to (Ex \to Opx)), Ex \vdash Nx$}
    \RightLabel{\scriptsize $\forall_l, c_l$}
    \UnaryInfC{$\mathit{pdef}, Ex \vdash Nx$}
\end{prooftree}
\end{minipage}
\end{center}

\noindent
Note that the transformed output has incorrect derivations at this stage.

\end{example}


\paragraph{Companion and Bud Nodes}
At this stage, companion and bud nodes appear exclusively as the leaves of the case-linear derivations. Both are transformed into proof terms which references to proof symbols defined in the schema.

\begin{itemize}
    \item \textbf{Companion Nodes:} Let $C$ be a companion node. Recall that \textit{splitAtCompanions} provides a mapping from proof symbols $\rho_i$ to their respective derivations $T_i$. For every companion node in the original proof, there exists a corresponding proof symbol $\rho_i$. We replace the companion node with a substituted proof term $\rho_i(\mathcal{N}_i)\sigma$, where $\mathcal{N}_i$ is the parameter set associated with the proof symbol $\rho_i$ and $\sigma$ is the substitution $subst_i^{C_j}$ extracted by \textsc{extractSubstitutions} to eliminate auxiliary variables.
    \begin{prooftree}
        \AxiomC{$\rho_i(\mathcal{N}_i)\sigma$}
        \RightLabel{\scriptsize $\rho_i\sigma$}
        \UnaryInfC{$C\sigma$}
    \end{prooftree}

\end{itemize}
    \begin{example}
Consider $T_0$ from Example~\ref{ex.split_application}. The proof symbols $\rho_1$ and $\rho_2$ both have local parameter set $\{x\}$. Since both leaves of $T_0$ were companion nodes in the original derivation, the transformation converts them into the proof terms $\rho_1(x)$ and $\rho_2(x)$:
\begin{center}
    \small
    \begin{minipage}{0.45\textwidth}
    \centering \textit{Intermediary Input} \\
    \begin{prooftree}
        \AxiomC{$Ex \vdash Nx$}
        \AxiomC{$Ox \vdash Nx$}
        \RightLabel{\scriptsize $\lor_l$}
        \BinaryInfC{$Ex \lor Ox \vdash Nx$}
    \end{prooftree}
    \end{minipage}
    $\to$
    \begin{minipage}{0.45\textwidth}
    \centering \textit{Transformed Output} \\
    \begin{prooftree}
       \AxiomC{$\rho_1 (x) $}
    \RightLabel{$\rho_1$}
    \UnaryInfC{$Ex \vdash Nx$ }

    \AxiomC{$\rho_2 (x) $}
    \RightLabel{$\rho_2$}
    \UnaryInfC{$Ox \vdash Nx$ }

    \RightLabel{$\lor_l$}
    \BinaryInfC{$Ex \lor Ox \vdash Nx$}
    \end{prooftree}
    \end{minipage}
\end{center}
In this instance, no explicit auxiliary substitutions $\sigma$ are required because no case-split rules preceded these specific companions.
\end{example}
   \begin{itemize}
   
    \item \textbf{Bud Nodes:} Bud nodes typically serve as premises to substitution rules ($\mathsf{subst}$) with an associated substitution $\Theta$. We discard the $\mathsf{subst}$ rule but preserve $\Theta$. We then identify the target proof symbol $\rho_i$ in the dictionary $T$ and append the proof term as we did in the case of companion nodes. Crucially, we apply the auxiliary variable substitution $\sigma$ (from \textsc{extractSubstitutions}) to both the substitution $\Theta$ and the conclusion $B'$ of the original $\mathsf{subst}$ rule. This ensures that all auxiliary variables are eliminated. For a bud $B$ that is the premise of a $\mathsf{subst}$ rule with substitution $\Theta$ and conclusion $B'$, we construct:
    \begin{prooftree}
        \AxiomC{$\rho_i(\mathcal{N}_i)\Theta\sigma$}
        \RightLabel{\scriptsize $\rho_i\Theta\sigma$}
        \UnaryInfC{$B'\sigma$}
    \end{prooftree}

\end{itemize}

\begin{example}
Consider the intermediary derivation from Example~\ref{ex:unfoldTransformation} where the leaf is a bud node $B$. Let $\Theta$ be the substitution $\{x \mapsto y\}$ from the original $\mathsf{subst}$ rule, and $\sigma = \{y \mapsto p(x)\}$ be the substitution from \textsc{extractSubstitutions}:
\begin{prooftree}
    \AxiomC{$\rho_2(x)\Theta\sigma$}
    \RightLabel{\scriptsize $\rho_2\Theta\sigma$}
    \UnaryInfC{$Opx \vdash Npx$}
    \RightLabel{\scriptsize $Ax_r^{x \geq \bar{1}}$}
    \UnaryInfC{$\mathit{pdef},  (x \geq \bar{1} \to (Npx \to Nx)), Opx \vdash Nx$}
    \RightLabel{\scriptsize $\forall_l, c_l$}
    \UnaryInfC{$\mathit{pdef}, Opx \vdash Nx$}
    \RightLabel{\scriptsize $Ax_l^{x \geq \bar{1}}$}
    \UnaryInfC{$\mathit{pdef},  (x \geq \bar{1} \to (Ex \to Opx)), Ex \vdash Nx$}
    \RightLabel{\scriptsize $\forall_l, c_l$}
    \UnaryInfC{$\mathit{pdef}, Ex \vdash Nx$}
\end{prooftree}

\end{example}

\paragraph{Other rules}
For every other rule encountered during the traversal, the rule itself is maintained in the final derivation. However, to ensure consistency across the s-proof, the substitution $\sigma$ (from \textsc{extractSubstitutions}) is applied to both the conclusion and all premises of the rule. This step is vital to eliminate any auxiliary variables that may have been introduced in the original proof's context.

\begin{algorithm}[H]
\caption{\textsc{transformSubderivation}($T, C, \sigma, \mathit{pdef}, \mathcal{R}, \mathcal{T}, \mathcal{N}_{\mathrm{loc}}$)}\label{alg.transform_subderivation}
\begin{algorithmic}[1]
\Require Case-linear derivation $T$, a condition $C$, substitution set $\sigma$, definitions \pdef, bud-to-companion mapping $\mathcal{R}$, dictionary of derivations and proof symbols $\mathcal{T}$, parameter dictionary $\mathcal{N}_{\mathrm{loc}}$
\Ensure s-proof $S$

\State $N \gets \text{root node of } T$ \Comment{The node currently being visited}

\Statex \Comment{\textbf{Step 1: Check Node Type (Leaves vs. Rules)}}
\If{$N$ is a \textbf{Companion Node}} 
    \State $\rho \gets \mathcal{R}[N]$
    \State $\mathcal{N}_{\rho} \gets \mathcal{N}_{\mathrm{loc}}[\rho]$
    \State \Return \Call{BuildProofLink}{$\rho, \mathcal{N}_{\rho}, \sigma$}
\ElsIf{$N$ is a \textbf{Bud Node}} 
    \State $\Theta \gets \text{local substitution associated with } N$
    \State $\rho \gets \text{target proof symbol for } N$
    \State $\mathcal{N}_{\rho} \gets \mathcal{N}_{\mathrm{loc}}[\rho]$
    \State \Return \Call{BuildProofLink}{$\rho, \mathcal{N}_{\rho}, \Theta\sigma$}
\EndIf

\State $R \gets \text{inference rule applied at } N$ \Comment{Identify rule for internal nodes}

\If{$R$ is a \textbf{Case-Split}}
    \If{$R$ has \textbf{no case-descendants}} 
        \State $S \gets$ \Call{ApplyDirectSubstitution}{$\text{conclusion}(N), \text{guard}(R)$}
    \Else \Comment{Unfold-Left via $Ax_l^{C,D}$}
        \State $S \gets$ \Call{SimulateLeftUnfold}{$\text{conclusion}(N), \mathit{pdef}, \text{guard}(R)$}
    \EndIf
\ElsIf{$R$ is an \textbf{Unfold-Right} rule} \Comment{Unfold-Right via $Ax_r^{C,D}$}
    \State $S \gets$ \Call{SimulateRightUnfold}{$\text{conclusion}(N), \mathit{pdef}$}
\Else \Comment{Application of $\sigma$ to standard logical rules}
    \State $S \gets$ \Call{MaintainRule}{$R, \sigma$}
\EndIf

\Statex \Comment{\textbf{Step 2: Recursion}}
\ForAll{premises $P_j$ of $N$} \Comment{Recurse on the sub-derivations above node $N$}
    \State $S_{child} \gets$ \Call{transformSubderivation}{$P_j, C, \sigma, \mathit{pdef}, \mathcal{R}, \mathcal{T}, \mathcal{N}_{\mathrm{loc}}$}
    \State \Call{AttachTo}{$S, S_{child}$}
\EndFor

\State \Return $S$
\end{algorithmic}
\end{algorithm}

\paragraph{\textsc{generateNegationProof}}

As mentioned previously, the conditions in a definition $D(\rho_i)$ for a proof symbol $\rho_i$ might not form a complete partition. In such cases, the algorithm utilizes the respective negation axiom from \pdef to construct a derivation for these cases.

\begin{example}
Consider the definition $D(\rho_2)$ for the proof symbol $\rho_2$, where the extracted conditions do not form a complete partition. Specifically, the case $x=0$ is missing from the partition over the local parameter set of $\rho_2$. To have a total partition, we must construct a derivation for the end-sequent $\mathit{pdef}, Ox \vdash Nx$ utilizing the negation axiom for $O$ from Example~\ref{ex.pdef_translation}: $\forall x (x=0 \to \neg Ox)$.

The procedure \textsc{generateNegationProof} produces the following derivation to close this branch:
\begin{center}
\begin{prooftree}
    \AxiomC{}   
    \UnaryInfC{$\vdash x=0$}

    \AxiomC{$ Ox \vdash Ox$}
    \RightLabel{\scriptsize$w_l , w_r$}
    \UnaryInfC{$\mathit{pdef},  Ox \vdash Ox, Nx$}
    \RightLabel{\scriptsize $\neg_l$}
    \UnaryInfC{$\mathit{pdef}, \neg Ox, Ox \vdash Nx$}
    
    \RightLabel{\scriptsize $\to_l$}
    \BinaryInfC{$\mathit{pdef},  (x=0 \to \neg Ox), Ox \vdash Nx$}
    \RightLabel{\scriptsize $\forall_l$}
    \UnaryInfC{$\mathit{pdef},\forall x (x=0 \to \neg Ox), Ox \vdash Nx$}
    \RightLabel{\scriptsize $c_l$}
    \UnaryInfC{$\mathit{pdef}, Ox \vdash Nx$}
\end{prooftree}
\end{center}
Note that this s-proof gets called only if the condition $x = 0$ is satisfied; hence we can use the $C$-axiom $\vdash x = 0$.

\end{example}


\subsection{Step 5: Schema Assembly}\label{sub.step5}

In the final stage, all components are aggregated into the proof schema
\[
\mathbf{P} = (\rho_0, \mathcal{R}_P, \mathcal{N}_0, t, \mathit{Seq}, \Pi_P).
\]
Here, $\mathcal{N}_0$ is the global parameter set obtained as the union of all local parameter sets. The term assignment $t$ maps each proof symbol $\rho_i$ to its corresponding local parameter set $\mathcal{N}_i$, while $\mathit{Seq}$ maps each proof symbol to its associated end-sequent.

\section{The Two-Hydra Example}\label{sec.2hydraexample}

In this section we consider the \emph{2-Hydra} example introduced by Berardi and Tatsuta
in~\cite{BT.2017} and further developed in~\cite{berardi2019}. Their result shows that the
2-Hydra statement is provable in $\mathsf{CLKID}^{\omega}$ but not in $\mathsf{LKID}$,
thereby refuting the conjectured equivalence between the two systems.

The purpose of this section is threefold. First, we translate the cyclic proof of the 2-Hydra statement into the proof schema formalism using the translation procedure developed in the previous section. Second, combining this construction with the known unprovability result for 2-Hydra in $\mathsf{LKID}$, we show that proof schemata based on point transition systems can prove inductive statements that are not provable in $\mathsf{LKID}$. Third, we extract a Herbrand system from the resulting proof schema.

\subsection{The Two-Hydra Statement}

We work in the first-order language $\Sigma_N = \{ 0, s, N, P \}$,
where $0$ is a constant, $s$ is the successor function, $N$ is an inductive predicate
for natural numbers, and $P$ is a binary predicate symbol.
We assume the standard $(0,s)$-axioms stating that $0$ is not a successor and that
the successor function is injective.
The \emph{2-Hydra} statement formalises a termination game on pairs of natural numbers,
intuitively representing the lengths of two Hydra heads.
The predicate $P(x,y)$ expresses that the game starting from heads of lengths $x$ and $y$ is winning.
Throughout this section, we abbreviate $s(0)$ by $\bar{1}$ and omit parentheses when using the predecessor and successor symbols; hence, $p(p(x))$ is abbreviated as $ppx$. The inductive predicate $N$ for natural numbers is defined as before.

\begin{definition}[2-Hydra Statement]\label{def:2HydraStatement_Cyclic}
The 2-Hydra statement $H$ is defined as
\[
(H_a \wedge H_b \wedge H_c \wedge H_d) \;\rightarrow\; \forall x,y \in N.\, P(x,y),
\]
where the components are given as follows:
\begin{align*}
(H_a)\quad & \forall x \in N.\; P(0,0) \;\wedge\; P(\bar{1},0) \;\wedge\; P(x,\bar{1}), \\
(H_b)\quad & \forall x,y \in N.\; P(x,y) \rightarrow P(sx,ssy), \\
(H_c)\quad & \forall y \in N.\; P(sy,y) \rightarrow P(0,ssy), \\
(H_d)\quad & \forall x \in N.\; P(sx,x) \rightarrow P(ssx,0).
\end{align*}
\end{definition}


Clause $(H_a)$ encodes the terminal winning configurations.
Clauses $(H_b), \dots ,(H_d)$ correspond to the transition rules of the game and ensure that
each move strictly decreases a suitable measure on head lengths.
Under the standard arithmetic axioms, for every closed pair $(n,m)$ exactly one instance
of $(H_a), \dots ,(H_d)$ applies.

\subsection{Translation of the Cyclic 2-Hydra Proof}\label{subsec:hydraTranslation}

To translate the cyclic proof of the 2-Hydra statement into the proof schema
formalism, we apply the translation procedure introduced in the previous section.
The cyclic 2-Hydra proof is formulated in terms of successor-based transition rules. In contrast, proof schemata generate recursive proof calls by means of predecessor terms. Therefore, before applying the translation procedure, we extend the language by the predecessor function symbol $p$ and reformulate the clauses of Definition~\ref{def:2HydraStatement_Cyclic} in predecessor-based form.

\begin{definition}[2-Hydra Statement for Proof Schemata]\label{def:2HydraStatement_schema}
Let $\hat{H}$ consist of the formulas
\[
P(0,0), \quad P(\bar{1},0), \quad H_a, \quad H_b, \quad H_c, \quad H_d,
\] 
where the components are given as follows:
\begin{align*}
    (H_a)\quad & \forall u.  \; ( Nu \to P(u,\bar{1})), \\
    (H_b)\quad & \forall u,v.\; ( Nu \land Nv \to ( u>0 \land v > \bar{1} \to (   P(pu,ppv) \rightarrow P(u,v)))), \\
    (H_c)\quad & \forall v.  \; (Nv \to ( v >\bar{1} \to ( P(pv,ppv) \rightarrow P(0,v)))), \\
    (H_d)\quad & \forall u.  \;  ( Nu \to ( u > \bar{1} \to  ( P(pu,ppu) \rightarrow P(u,0)))). 
\end{align*}
    
\end{definition}

Note that, since proof schemata are defined over the domain of natural numbers, the predicate $N$ is not needed explicitly. Omitting it would yield a considerably shorter proof schema while preserving the statement represented by the original cyclic proof. For illustrative purposes, however, we give a translation into the proof schema formalism that follows the procedure from the previous section.

We now apply this procedure to the cyclic 2-Hydra proof $\Pi_{Hydra}$ from~\cite{BT.2017,berardi2019}. The presentation follows the five steps of the algorithm and records those transformations that determine the resulting proof schema.

\paragraph{Step~1: Initialization}
The inductive predicate axioms for $N$ are
\[
\mathit{pdef} := \{\, N0, \ \forall x.\, (x \ge \bar{1} \rightarrow (Nx \rightarrow Npx)),\ 
\forall x.\, (x \ge \bar{1} \rightarrow (Npx \rightarrow Nx)) \,\}.
\]
The cyclic proof yields the parameter dictionary $\mathcal{N}_{\mathrm{loc}}=\{\rho \mapsto \{x,y\}\}$, and a single companion node, namely the end-sequent $\hat{H}, Nx, Ny \;\vdash\; P(x,y)$.  
Consequently, Step~2 of the translation procedure is vacuous in this example: since there is only one companion node, the cycle decomposition yields a single component associated with the proof symbol $\rho$.

Moreover, all inductive predicates occurring in the 2-Hydra proof have a complete characterization. Hence no negated predicate axioms have to be added to $\mathit{pdef}$, and the calculus need not be extended through additional definitions for this example.

\paragraph{Step~3: Case Partitioning}
$\Pi_{Hydra}$ contains five case-split rules. We partition the proof into six
case-linear derivations $T_1, \dots, T_6$.

\paragraph{Step~4: Subderivation Transformation}
The guards, conditions, and substitutions for $T_1, \dots, T_6$ are summarized in Table~\ref{tab:HydraSubderivations}.

\begin{table}[h]
\centering

\begin{tabular}{|c|l|l|l|}
\hline
Derivation & Guards & Conditions & Substitutions \\
\hline
$T_1$ & $\{y=0 , x= 0 \}$ & $C_1: y= 0 \land x= 0$& $\emptyset$\\
$T_2$ & $\{y=0 , x= sx', x' = 0\}$ &$C_2:y= 0 \land x= \bar{1}$ & $\{ x' \xleftarrow{} 0\}$\\
$T_3$ & $\{y=0 , x= sx', x' = sx''\}$ & $C_3: y= 0 \land x> \bar{1}$& $\{x' \xleftarrow{} px , x''\xleftarrow{} ppx  \}$\\
$T_4$ & $\{ y = sy', y'=0 \}$ & $C_4: y= \bar{1}$& $\{ y' \xleftarrow{} 0\}$\\
$T_5$ & $\{ y = sy' , y' = sy'' , x= 0\}$ &$C_5:y > \bar{1} \land x=0$ & $\{ y' \xleftarrow{} py , y'' \xleftarrow{} ppy \}$\\
$T_6$ & $\{ y = sy', y' = sy'' , x=sx' \}$ &$C_6 :y > \bar{1} \land x>0$ & $\{ y' \xleftarrow{} py , y'' \xleftarrow{} ppy , x' \xleftarrow{} px \}$\\
\hline
\end{tabular}
\caption{Guards, Conditions, and Substitutions extracted from case-linear subderivations.}
\label{tab:HydraSubderivations}
\end{table}

\noindent
The transformation of the subderivations yields: 

\medskip

\begin{minipage}{0.45\textwidth}
$T1$: 
\begin{prooftree}
\def\fCenter{\mbox{\ $\Rightarrow$\ }}
    \AxiomC{$P(0,0) \vdash P(0,0)$}
    \RightLabel{$w_l^*$ }

    \UnaryInfC{$\mathit{pdef}, \hat{H}, N0, N0 \vdash P(0,0)$}

\end{prooftree}
\end{minipage}\begin{minipage}{0.45\textwidth}
$T2$: 
\begin{prooftree}
\def\fCenter{\mbox{\ $\Rightarrow$\ }}
    \AxiomC{$P(\bar{1},0) \vdash P(\bar{1},0)$}
    \RightLabel{$w_l^*$ }

    \UnaryInfC{$\mathit{pdef}, \hat{H}, N\bar{1}, N0 \vdash P(\bar{1},0)$}

\end{prooftree}
\end{minipage}

\medskip
$T3$: 
\small
\begin{prooftree}
\def\fCenter{\mbox{\ $\Rightarrow$\ }}

\AxiomC{$Nx \vdash Nx$}

    \AxiomC{$\rho (px,ppx)$}
    \RightLabel{$\rho \{x \xleftarrow{} px , y \xleftarrow{} ppx \}$}
    \UnaryInfC{$\mathit{pdef}, \hat{H}, Npx, Nppx \vdash P(px,ppx)$}

    \RightLabel{$Ax_r^{x > \bar{1}}$}
    \UnaryInfC{$\mathit{pdef}, \hat{H},Npx,  Nppx, (x > \bar{1} \to (P(px ,ppx) \to P(x,0)))  \vdash P(x,0)$}

    \RightLabel{$\to_l$}

    \BinaryInfC{$\mathit{pdef}, \hat{H}, Nx,Npx,  Nppx, ( Nx \to (x > \bar{1} \to (P(px ,ppx) \to P(x,0))))  \vdash P(x,0)$}
    \RightLabel{$\forall_l$}
    
    \UnaryInfC{$\mathit{pdef}, \hat{H}, Nx, Npx,  Nppx, \forall u. ( Nu \to ( u> \bar{1} \to (P(pu ,ppu) \to P(u,0 )))) \vdash P(x,0)$}
    \RightLabel{$c_l$}

     \UnaryInfC{$\mathit{pdef}, \hat{H},Nx, Npx,  Nppx  \vdash P(x,0)$}
     \RightLabel{$Ax_l^{px \geq \bar{1}}$}
     \UnaryInfC{$\mathit{pdef}, \hat{H},Nx, Npx, Npx, (px \geq \bar{1}) \to ( Npx \to Nppx) \vdash P(x,0)$}
     \RightLabel{$\forall_l$}
     \UnaryInfC{$\mathit{pdef}, \hat{H},Nx, Npx, Npx, \forall w. ( w\geq \bar{1} \to (Nw \to Npw)) \vdash P(x,0)$}
     \RightLabel{$c_l$}
     \UnaryInfC{$\mathit{pdef}, \hat{H},Nx, Npx , Npx \vdash P(x,0)$}
     \RightLabel{$c_l$}
     \UnaryInfC{$\mathit{pdef}, \hat{H}, Nx,Npx \vdash P(x,0)$}
     \RightLabel{$Ax_l^{x \geq \bar{1}}$}
     \UnaryInfC{$\mathit{pdef}, \hat{H}, Nx, Nx, (x \geq \bar{1}) \to ( Nx \to Npx) \vdash P(x,0)$}
     \RightLabel{$c_l$}

     \UnaryInfC{$\mathit{pdef}, \hat{H}, Nx, (x \geq \bar{1}) \to ( Nx \to Npx) \vdash P(x,0)$}     
     \RightLabel{$\forall_l$}
     \UnaryInfC{$\mathit{pdef}, \hat{H}, Nx,  \forall w.( w\geq \bar{1} \to (Nw \to Npw)) \vdash P(x,0)$}
     \RightLabel{$c_l$}

     \UnaryInfC{$\mathit{pdef}, \hat{H}, Nx \vdash P(x,0)$}
     \RightLabel{$w_l$}
    \UnaryInfC{$\mathit{pdef}, \hat{H}, Nx, N0 \vdash P(x,0)$}

\end{prooftree}

\noindent $T_3$ is valid under $C_3\colon y=0 \land x>\bar{1}$. Since 
\[ C_3 \models_{\mathcal M} x>\bar{1}, \qquad C_3 \models_{\mathcal M} x \geq \bar{1}, \qquad\text{and}\qquad C_3 \models_{\mathcal M} px \geq \bar{1}, \] the sequents $\vdash x>\bar{1}$, $\vdash x\geq \bar{1}$, and $\vdash px\geq \bar{1}$ are available as $C_3$-axioms.

\medskip
$T4$: 
\begin{prooftree}
\def\fCenter{\mbox{\ $\Rightarrow$\ }}

\AxiomC{$Nx \vdash Nx$}

    \AxiomC{$P(x,\bar{1}) \vdash P(x,\bar{1})$}
    \RightLabel{$w_l^*$ }
    \UnaryInfC{$\mathit{pdef}, \hat{H}, N\bar{1}, P(x,\bar{1}) \vdash P(x,\bar{1})$}

    \RightLabel{$\to_l$}
    \BinaryInfC{$\mathit{pdef}, \hat{H},Nx, N\bar{1}, (Nx \to P(x,\bar{1})) \vdash P(x,\bar{1})$}
    \RightLabel{$\forall_l$}
    \UnaryInfC{$\mathit{pdef}, \hat{H},Nx, N\bar{1}, \forall u. (Nu \to  P(u,\bar{1})) \vdash P(x,\bar{1})$}
    \RightLabel{$c_l$}
    \UnaryInfC{$\mathit{pdef}, \hat{H}, Nx, N\bar{1} \vdash P(x,\bar{1})$}

\end{prooftree}
\newpage
\medskip
$T5$: 
\begin{prooftree}
\def\fCenter{\mbox{\ $\Rightarrow$\ }}

    \AxiomC{$Ny \vdash Ny$}

    \AxiomC{$\rho (py,ppy)$}
    \RightLabel{$\rho \{x \xleftarrow{} py , y \xleftarrow{} ppy \}$}
    \UnaryInfC{$\mathit{pdef},\hat{H}, Npy, Nppy \vdash P(py,ppy)$}
    \RightLabel{$Ax_r^{y > \bar{1}}$}
    \UnaryInfC{$\mathit{pdef},\hat{H},Npy, Nppy, (y> \bar{1} ) \to (P(py ,ppy) \to P(0,y)) \vdash P(0,y)$}

    \RightLabel{$\to_l$}
    \BinaryInfC{$\mathit{pdef},\hat{H},Ny, Npy, Nppy, ( Ny \to (y> \bar{1} ) \to (P(py ,ppy) \to P(0,y))) \vdash P(0,y)$}
    \RightLabel{$\forall_l$}
    \UnaryInfC{$\mathit{pdef},\hat{H},Ny, Npy, Nppy, \forall v . ( Nv \to ( v> \bar{1} \to (P(pv ,ppv) \to P(0,v) )))\vdash P(0,y)$}
    \RightLabel{$c_l$}
    \UnaryInfC{$\mathit{pdef},\hat{H},Ny, Npy, Nppy \vdash P(0,y)$}
    \RightLabel{$Ax_l^{py \geq \bar{1}}$}
    \UnaryInfC{$\mathit{pdef},\hat{H},Ny, Npy, Npy, (py \geq \bar{1}) \to (Npy \to Nppy) \vdash P(0,y)$}
    \RightLabel{$\forall_l$}
    \UnaryInfC{$\mathit{pdef},\hat{H}, Ny,Npy, Npy, \forall w . ( w\geq \bar{1} \to (Nw \to Npw)) \vdash P(0,y)$}
    \RightLabel{$c_l$}
    \UnaryInfC{$\mathit{pdef},\hat{H},Ny, Npy, Npy \vdash P(0,y)$}
    \RightLabel{$c_l$}
    \UnaryInfC{$\mathit{pdef},\hat{H},Ny, Npy \vdash P(0,y)$}
    \RightLabel{$Ax_l^{y \geq \bar{1}}$}

    \UnaryInfC{$\mathit{pdef},\hat{H}, Ny, Ny, (y \geq \bar{1}) \to (Ny \to Npy) \vdash P(0,y)$}
    \RightLabel{$c_l$}
    \UnaryInfC{$\mathit{pdef},\hat{H}, Ny, (y \geq \bar{1}) \to (Ny \to Npy) \vdash P(0,y)$}
    \RightLabel{$\forall_l$}
    \UnaryInfC{$\mathit{pdef},\hat{H}, Ny, \forall w . ( w\geq \bar{1} \to (Nw \to Npw)) \vdash P(0,y)$}
    \RightLabel{$c_l$}
    \UnaryInfC{$\mathit{pdef},\hat{H}, Ny \vdash P(0,y)$}
    \RightLabel{$w_l$}  
    \UnaryInfC{$\mathit{pdef},\hat{H}, N0, Ny \vdash P(0,y)$}

    \end{prooftree}

    \noindent
 $T_5$ is valid under the condition $C_5\colon y>\bar{1} \land x=0$. Since \[ C_5 \models_{\mathcal M} y>\bar{1}, \qquad C_5 \models_{\mathcal M} py\geq \bar{1}, \qquad\text{and}\qquad C_5 \models_{\mathcal M} y\geq \bar{1}, \] the sequents $\vdash y>\bar{1}$, $\vdash py\geq \bar{1}$, and $\vdash y\geq \bar{1}$ may be used as $C_5$-axioms in $T_5$.

\medskip
$T6$: 

\begin{prooftree}
\def\fCenter{\mbox{\ $\Rightarrow$\ }}

\AxiomC{$Ax$}
\UnaryInfC{$\vdots$}

\UnaryInfC{$Nx, Ny \vdash Nx \land Ny$\kern-1.5em}
    \AxiomC{$\rho (px,ppy)$}
    \RightLabel{$\rho \{x \xleftarrow{} px , y \xleftarrow{} ppy \}$}
    \UnaryInfC{$\mathit{pdef}, \hat{H}, Npx, Nppy \vdash P(px,ppy)$}
    \RightLabel{$Ax_r^{x > 0 \land y > \bar{1}}$}
    \UnaryInfC{$\mathit{pdef}, \hat{H}, Npx, Nppy, ((x>0 \land y > \bar{1}) \to (P(px ,ppy) \to P(x,y)))  \vdash P(x,y)$}
    \RightLabel{$\to_l$}
    \BinaryInfC{$\mathit{pdef}, \hat{H}, Nx,Npx,Ny, Nppy, ( Nx \land Ny \to ((x>0 \land y > \bar{1}) \to (P(px ,ppy) \to P(x,y))))  \vdash P(x,y)$}
    \RightLabel{$\forall_l$ 2x}
    \UnaryInfC{$\mathit{pdef}, \hat{H}, Nx, Npx,Ny, Nppy, \forall u,v .( Nu \land Nv \to ( u>0 \land v > \bar{1} \to  ( P(pu ,ppv) \to P(u,v)))) \vdash P(x,y)$}
    \RightLabel{$c_l$} 
    \UnaryInfC{$\mathit{pdef}, \hat{H}, Nx,Npx,Ny, Nppy \vdash P(x,y)$}
    \RightLabel{$Ax_l^{py \geq \bar{1}}$}
    \UnaryInfC{$\mathit{pdef}, \hat{H},Nx, Npx,Ny, Npy , (py \geq \bar{1}) \to (Npy \to Nppy) \vdash P(x,y)$}
    \RightLabel{$\forall_l$}
    \UnaryInfC{$\mathit{pdef}, \hat{H},Nx, Npx,Ny, Npy , \forall w .( w\geq \bar{1} \to (Nw \to Npw))\vdash P(x,y)$}
    \RightLabel{$c_l$}    
    \UnaryInfC{$\mathit{pdef}, \hat{H},Nx, Npx,Ny, Npy \vdash P(x,y)$}
    \RightLabel{$Ax_l^{y \geq \bar{1}}$}
    \UnaryInfC{$\mathit{pdef}, \hat{H},Nx, Npx, Ny, Ny,(y \geq \bar{1}) \to ( Ny \to Npy) \vdash P(x,y)$}
    \RightLabel{$c_l$}
    \UnaryInfC{$\mathit{pdef}, \hat{H},Nx, Npx, Ny,(y \geq \bar{1}) \to ( Ny \to Npy) \vdash P(x,y)$}
    \RightLabel{$\forall_l$}
    \UnaryInfC{$\mathit{pdef}, \hat{H},Nx, Npx, Ny, \forall w. (w\geq \bar{1} \to ( Nw \to Npw) )\vdash P(x,y)$}
    \RightLabel{$c_l$}
    \UnaryInfC{$\mathit{pdef}, \hat{H}, Nx,Npx, Ny \vdash P(x,y)$}
    \RightLabel{$Ax_l^{x \geq \bar{1}}$}
    \UnaryInfC{$\mathit{pdef}, \hat{H}, Nx, Nx, Ny, (x \geq \bar{1}) \to ( Nx \to Npx) \vdash P(x,y)$}
\RightLabel{$c_l$}
    \UnaryInfC{$\mathit{pdef}, \hat{H}, Nx, Ny, (x \geq \bar{1}) \to ( Nx \to Npx) \vdash P(x,y)$}
    \RightLabel{$\forall_l$}
    \UnaryInfC{$\mathit{pdef}, \hat{H}, Nx, Ny, \forall w.(w\geq \bar{1} \to ( Nw \to Npw ))\vdash P(x,y)$}
    \RightLabel{$c_l$}
    \UnaryInfC{$\mathit{pdef}, \hat{H}, Nx, Ny \vdash P(x,y)$}
\end{prooftree} 

\noindent
$T_6$ is valid under the condition $C_6\colon x>0 \land y>\bar{1}$. Since
\[
C_6 \models_{\mathcal M} x>0,
\qquad
C_6 \models_{\mathcal M} y>\bar{1},
\qquad
C_6 \models_{\mathcal M} py\geq \bar{1},
\qquad
C_6 \models_{\mathcal M} y\geq \bar{1},
\qquad\text{and}\qquad
C_6 \models_{\mathcal M} x\geq \bar{1},
\]
the corresponding sequents may be used as $C_6$-axioms in $T_6$. In particular, the non-atomic formula $x>0 \land y>\bar{1}$ is handled by using the $C_6$-axioms $\vdash x>0$ and $\vdash y>\bar{1}$ as its atomic components.

\paragraph{Step 5: Schema Assembly}

The procedure returns a proof schema
\[
\mathbf{P}_{Hydra} = (\rho,\{\rho\},\mathcal N_0,t,Seq,\Pi_P),
\]
where
$\mathcal N_0=\{x,y\}$.
Since the schema contains only a single proof symbol, the term assignment $t$ therefore maps $\rho$ to the whole parameter set $\{x,y\}$, while $Seq$ maps $\rho$ to its end-sequent.
Finally, $\Pi_P$ assigns the following definition to $\rho$:


\begin{eqnarray*}
\rho(x,y) &\defeq& 
\{x = 0 \land y = 0 \colon T_1 (x,y), \\
&& x = \bar{1} \land y = 0 \colon T_2 (x,y), \\
&& x > \bar{1} \land y = 0 \colon T_3 (x,y), \\
&& y = \bar{1} \colon T_4 (x,y), \\
&& x = 0 \land y > \bar{1} \colon T_5 (x,y), \\
&& x > 0 \land y > \bar{1} \colon T_6 (x,y) \}
\end{eqnarray*}

The resulting proof schema is denoted by $\mathbf P_{Hydra}$.

\subsection{Termination of the Associated Point Transition System}

It remains to verify that $\mathbf P_{Hydra}$ satisfies the termination condition for proof schemata. For this purpose, we construct the canonical point transition system associated with $\mathbf P_{Hydra}$ and exhibit a ranking function that strictly decreases along every non-terminal transition.

The canonical point transition system associated with $\mathbf P_{Hydra}$ is
\[
\mathbf P_{Hydra}^* = \{ \delta, \Delta^*, \Delta_e, \mathbf P_{Hydra} \}.
\]
Since the schema contains only one proof symbol, we have $\delta = \{\Delta^*\}$. The set of end-states is $\Delta_e = \{\delta_e\}$, and the transition relation is given by:

\begin{eqnarray*}
\mathbf P_{Hydra}(\delta) &=& \{(\delta,(x,y)) \to \{(\delta_e,(x,y))\} \colon x=0 \land y=0,\\ 
                          & & (\delta,(x,y)) \to \{(\delta_e,(x,y))\} \colon x=\bar{1} \land y=0,\\
                         & & (\delta,(x,y)) \to \{(\delta,(p(x),pp(x)))\} \colon x>\bar{1} \land y=0,\\ 
                         & & (\delta,(x,y)) \to \{(\delta_e,(x,y))\} \colon y=\bar{1},\\
                         & & (\delta,(x,y)) \to \{(\delta,(p(y),pp(y)))\} \colon x=0 \land y>\bar{1},\\ 
                         & & (\delta,(x,y)) \to \{(\delta,(p(x),pp(y)))\} \colon x>0 \land y>\bar{1} \}.\\[1ex]
\end{eqnarray*}

To demonstrate that $\mathbf P_{Hydra}^*$ is terminating, we define a well-founded ordering $\prec$ on labeled points such that for every point transition $(\delta, p) \to \Lcal : C$ in $\mathbf P_{Hydra}$, and for every $(\delta', q) \in \Lcal$, the condition $(\delta', q) \prec (\delta, p)$ holds under any parameter assignment $\sigma$ satisfying $\sigma(C) = \top$.

\begin{definition}[Ranking Function for Points]
 We define the ranking function $\Phi: \mathbb{N}^2 \to \mathbb{N}$ as:
\begin{equation}
    \Phi(x, y) = \max(x, y)
\end{equation}
We define the ordering on labeled points such that $(\delta, q) \prec (\delta, p)$ if $\Phi(q) < \Phi(p)$. Because $(\mathbb{N}, <)$ is well-founded, $\prec$ is a well-founded ordering.
\end{definition}

\begin{theorem}\label{thm:hydraTermination}
The point transition system $\mathbf P_{Hydra}^*$ is terminating for all parameter assignments $\sigma$.
\end{theorem}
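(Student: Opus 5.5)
The plan is to show that $\mathbf P_{Hydra}^*$ is \emph{canonic} in the sense of Definition~\ref{def.pts-canonic}, using the ordering $\prec$ induced by the ranking function $\Phi(x,y)=\max(x,y)$ as the Noetherian ordering $<_l$ on ground pairs. Termination then follows from Theorem~\ref{theo.canonic-terminate}.

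First I deal with the label part of $<_C$ in Definition~\ref{def.C-order}. The transition system has only the labels $\delta$ and $\delta_e$. The label $\delta$ calls itself, so $\delta \eqPbf \delta$. No label is reachable from $\delta_e$, so $\delta_e \ltPbf \delta$.

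Consequently the three non-recursive transitions, under $C_1$, $C_2$ and $C_4$, need no work: each target is $(\delta_e,(x,y))$, and $(\delta_e,(x,y)) <_C (\delta,(x,y))$ holds by the first clause of Definition~\ref{def.C-order}. For the three recursive transitions the second clause applies. For every $\sigma$ with $\sigma[C]=\top$ I must show $\Phi(\sigma(q)\Eval) < \Phi(\sigma(x,y))$, where $q$ is the target point.

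The step needing care is the evaluation of $p$: since $p(0)=0$, the identity $\sigma(p(t))\Eval = \sigma(t)-1$ holds only when $\sigma(t)\ge 1$. I therefore first read off from each condition that every argument of $p$ is large enough. Write $a=\sigma(x)$ and $b=\sigma(y)$.
\begin{itemize}
\item Under $x>\bar{1}\land y=0$ we have $a\ge 2$. The target $(p(x),pp(x))$ evaluates to $(a-1,a-2)$, so $\max(a-1,a-2)=a-1<a=\max(a,0)$.
\item Under $x=0\land y>\bar{1}$ we have $b\ge 2$. The target $(p(y),pp(y))$ evaluates to $(b-1,b-2)$, so $\Phi$ drops from $b$ to $b-1$.
\item Under $x>0\land y>\bar{1}$ we have $a\ge1$ and $b\ge2$. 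The target $(p(x),pp(y))$ evaluates to $(a-1,b-2)$. Both components strictly decrease, so $\max(a-1,b-2)<\max(a,b)$: each component is at most $\max(a,b)-1$.
\end{itemize}
Hence every transition satisfies the requirement of Definition~\ref{def.pts-canonic}. The system is canonic, and Theorem~\ref{theo.canonic-terminate} yields termination for every parameter assignment.

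As a sanity check, and as an alternative in case one prefers not to rely on the external proof of Theorem~\ref{theo.canonic-terminate}, there is a direct argument. Every right-hand side is a singleton, so $\mathbf P_{Hydra}^*[\sigma]$ is a single path. Its $\delta$-labelled nodes carry strictly decreasing natural numbers $\Phi$, so the path is finite by well-foundedness of $(\mathbb N,<)$. The only genuine obstacle is the predecessor-truncation issue handled above: without the guards ensuring $a\ge1$ or $b\ge2$, the term $pp(y)$ could evaluate to $0$ without any decrease. It is settled by deriving the lower bounds $a\ge 1$, $a\ge2$ or $b\ge 2$ from each condition $C_3$, $C_5$, $C_6$ before computing $\Phi$.
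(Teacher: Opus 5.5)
Your proof is correct and follows the paper's argument: it uses the same ranking function $\Phi(x,y)=\max(x,y)$ and the same three-case check on the recursive transitions, and your direct single-path argument is exactly the paper's proof. Routing it through canonicity only tightens details the paper leaves implicit: you handle the end transitions via $\delta_e \ltPbf \delta$, and you make explicit the lower bounds from the conditions that keep the truncated predecessor from stalling. It also has the side benefit of establishing canonicity, which is what Definition~\ref{def.proof-schema} formally requires.
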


\begin{proof}
For every non-terminal transition, the value
$\Phi(x,y)=\max(x,y)$ strictly decreases.
\begin{itemize}
    \item If $x>\bar{1} \land y=0$, the successor state is $(px,ppx)$ and
\[
\max(px,ppx)=px < x=\max(x,0).
\]

\item If $x=0 \land y>\bar{1}$, the successor state is $(py,ppy)$ and
\[
\max(py,ppy)=py < y=\max(0,y).
\]

\item If $x>0 \land y>\bar{1}$, the successor state is $(px,ppy)$ and both coordinates are strictly smaller
than their predecessors, hence
\[
\max(px,ppy) < \max(x,y).
\]

\end{itemize}

Therefore every transition decreases the ranking
function, and since $(\mathbb N,<)$ is well-founded,
$\mathbf P_{Hydra}^{*}$ is terminating.

\end{proof}

\subsection{Expressive Power of Proof Schemata}

The preceding subsections constructed a proof schema for the Two-Hydra statement
and established termination of its associated point transition system. We can now
state the main consequence of this construction.

\begin{theorem}\label{thm:2hydraIsProvable}
The Two-Hydra statement is provable by a proof schema based on a
point transition system.
\end{theorem}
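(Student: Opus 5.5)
The plan is to assemble the result from the construction of Subsection~\ref{subsec:hydraTranslation} together with Theorem~\ref{thm:hydraTermination} and Theorem~\ref{the.proof-schema}. First, I verify that $\mathbf P_{Hydra}$ is a semi-proof schema in the sense of Definition~\ref{def.semi-proof-schema}. The end-sequent is $\Seq(\rho)=\mathit{pdef},\hat{H},Nx,Ny\vdash P(x,y)$ with $t(\rho)=(x,y)$. The six conditions $C_1,\ldots,C_6$ of the definition of $\rho(x,y)$ must form a partition. This is a finite case check over $\N^2$: the value of $y$ splits into $y=0$, $y=\bar{1}$, $y>\bar{1}$. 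For $y=0$, $x$ splits into $0$, $\bar{1}$, $>\bar{1}$. For $y>\bar{1}$, $x$ splits into $0$, $>0$. The pieces are pairwise disjoint and exhaust $\N^2$.

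Next, I check that each $T_i$ is an $s$-proof corresponding to $\rho$. Each must be an $\LKN$-derivation of (an instance of) $\Seq(\rho)$, possibly after the obvious substitution of the condition (e.g.\ $N0$ for $Ny$ under $y=0$). Its leaves must be standard axioms, $C_i$-axioms, or proof-term leaves $\rho(px,ppx)$, $\rho(py,ppy)$, $\rho(px,ppy)$. The point needing care is the $C_i$-axioms used through the macros $Ax_l^{C,D}$ and $Ax_r^{C,D}$; the paper already records $C_i\models_{\Mcal} D$ for each relevant atom $D$. I also note that only weak quantifier rules $\forall_l$ are applied, and never to parameters in a strong position, so the $\LKN$ side conditions are met. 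Under $y=0$ or $x=0$, the sequents in $T_1,\ldots,T_5$ differ from $\Seq(\rho)$ only by instantiating the parameter fixed by the condition. I would handle this by observing that the semantics (Definition~\ref{def.sem-proofterms}) evaluates under $\sigma$ with $\sigma(C_i)=\top$, so $\sigma$ of both sequents coincide. If necessary, I would insert equality axioms $\vdash y=0$ as $C_i$-axioms followed by cuts to make this explicit.

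Then I extract the point transition system $p(\mathbf P_{Hydra})$ by Definition~\ref{def.extract-pts}. This is exactly $\mathbf P^*_{Hydra}$ from the previous subsection, with regular transitions and partition conditions. I then show it is canonic in the sense of Definition~\ref{def.pts-canonic}. Since $\rho\eqPbf\rho$, the label ordering does not help, so I need a Noetherian ordering $<_l$ on pairs of numerals under which each recursive call strictly decreases under its condition. I take $<_l$ to be $(a,b)<_l(c,d)$ iff $\max(a,b)<\max(c,d)$. This is well-founded, and the case analysis in the proof of Theorem~\ref{thm:hydraTermination} shows precisely that $\sigma(q)\Eval<_l\sigma(p)\Eval$ whenever $\sigma(C)=\top$. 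Hence $\mathbf P_{Hydra}$ is a proof schema by Definition~\ref{def.proof-schema}.

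Finally, Theorem~\ref{the.proof-schema} yields that for every $\sigma$, $\sigma(\rho(x,y))\Eval$ is an $\LK$-proof of $\mathit{pdef},\hat H,N\bar m,N\bar n\vdash P(\bar m,\bar n)$. Since $\hat H$ is the predecessor reformulation of the hypotheses of Definition~\ref{def:2HydraStatement_Cyclic}, this represents the 2-Hydra statement uniformly in all instances. I expect the main obstacle to be the second step: the $T_i$ must be verified as genuine $s$-proofs, particularly the matching of end-sequents under instantiated conditions and the correct use of the conditional axioms. The termination part is essentially already done by Theorem~\ref{thm:hydraTermination}.
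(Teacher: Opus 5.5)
Your proposal is correct and takes essentially the same route as the paper, which combines the construction of $\mathbf P_{Hydra}$ in Subsection~\ref{subsec:hydraTranslation} with the $\max$-ranking argument of Theorem~\ref{thm:hydraTermination}. You are more careful than the paper's one-line proof in two respects: you make explicit that the $\max$-comparison is the Noetherian ordering $<_l$ witnessing canonicity, which is what Definition~\ref{def.proof-schema} actually requires rather than mere termination, and you repair the mismatch between the condition-instantiated end-sequents of $T_1,\ldots,T_5$ and $\Seq(\rho)$ by adding $C$-axioms such as $\vdash y=0$ together with atomic cuts.
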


\begin{proof}
The construction in Subsection~\ref{subsec:hydraTranslation} yields the proof schema $\mathbf P_{Hydra}$ for the end-sequent $\hat H, Nx, Ny \vdash P(x,y)$. By Theorem~\ref{thm:hydraTermination}, the associated point transition system $\mathbf P_{Hydra}^*$ is terminating. Hence $\mathbf P_{Hydra}$ is a valid proof schema proving the Two-Hydra statement.
\end{proof}

The significance of this theorem follows from the result of Berardi and Tatsuta~\cite{BT.2017}
showing that the Two-Hydra statement is not provable in $\mathsf{LKID}$.

\begin{corollary}
Proof schemata based on point transition systems can prove inductive statements
that are not provable in $\mathsf{LKID}$.
\end{corollary}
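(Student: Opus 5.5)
The plan is to obtain the corollary by combining Theorem~\ref{thm:2hydraIsProvable} with the independence result of Berardi and Tatsuta~\cite{BT.2017}, taking care that the statement proved by the schema is the same as the statement shown unprovable in $\mathsf{LKID}$.

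\textbf{Step 1: the schema proves the statement.} By Theorem~\ref{thm:2hydraIsProvable}, the schema $\mathbf P_{Hydra}$ is a proof schema in the sense of Definition~\ref{def.proof-schema} with end-sequent $\mathit{pdef},\hat H, Nx, Ny \vdash P(x,y)$. This rests on two facts:
\begin{itemize}
\item canonicity of $p(\mathbf P_{Hydra})$, witnessed by the ranking $\max(x,y)$ of Theorem~\ref{thm:hydraTermination} as the Noetherian ordering $<_l$;
\item Theorem~\ref{the.proof-schema}.
\end{itemize}
Together they give, for every parameter assignment $\sigma$, an $\LK$-proof of $\sigma(\mathit{pdef},\hat H, Nx, Ny \vdash P(x,y))$ from the admitted axioms. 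Hence the schema establishes all numeral instances of the 2-Hydra conclusion uniformly.

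\textbf{Step 2: this is the 2-Hydra statement.} This is the main obstacle. The original statement $H$ of Definition~\ref{def:2HydraStatement_Cyclic} is successor-based, while the schema proves the predecessor-based sequent over $\hat H$ with the auxiliary axioms $\mathit{pdef}$. I would argue equivalence under the standard semantics over $\mathbb N$, with $p$ interpreted as predecessor. Each clause $H_b,H_c,H_d$ of Definition~\ref{def:2HydraStatement_schema} is the reindexing $u=sx$, $v=ssy$ of the corresponding successor clause. This reindexing is valid exactly under the guards $u>0$, $v>\bar 1$. The axioms of $\mathit{pdef}$ for $N$ are true in the intended model. Hence a proof schema for the reformulated sequent is a proof of the 2-Hydra statement in the sense intended by the comparison.

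\textbf{Step 3: unprovability in $\mathsf{LKID}$.} Berardi and Tatsuta~\cite{BT.2017} show that $H$ is not provable in $\mathsf{LKID}$, even with the standard $(0,s)$-axioms. Together with Steps 1 and 2, this exhibits an inductive statement provable by a proof schema based on a point transition system but not in $\mathsf{LKID}$.

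The delicate point is that the schema uses a stronger language (the symbol $p$) and condition axioms, so the comparison is semantic rather than a derivability inclusion. I would state this explicitly rather than claim a formal embedding.
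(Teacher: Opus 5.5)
Your proposal is correct and takes the same route as the paper: the corollary follows by combining Theorem~\ref{thm:2hydraIsProvable} with the unprovability result of Berardi and Tatsuta~\cite{BT.2017}. Your Step~2, the semantic equivalence of the predecessor-based clauses of $\hat H$ with the successor-based $H$, and your use of the ranking $\max(x,y)$ to get canonicity rather than mere termination both spell out points the paper's two-line proof leaves implicit. Keep in mind that the separation depends on reading the parametric end-sequent as the universally quantified statement, since each individual numeral instance is already derivable in $\mathsf{LKID}$.
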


\begin{proof}
By Theorem~\ref{thm:2hydraIsProvable}, the Two-Hydra statement is provable by a proof schema based on a point transition system. On the other hand, Berardi and Tatsuta~\cite{BT.2017} show that the Two-Hydra statement is not provable in $\mathsf{LKID}$. Hence proof schemata based on point transition systems can prove inductive statements that are not provable in $\mathsf{LKID}$.
\end{proof}

\subsection{Herbrand System Extraction}

We now provide the Herbrand system for $\mathbf P_{Hydra}$. 
For every $F \in \hat{H}$, we define $\Theta_F(x,y)$ as the schematic set of substitutions for the quantified variables of the formula $F$, depending on the parameters $x$ and $y$. The Herbrand system is given by:

%
\begin{minipage}[t]{0.45\textwidth}

\begin{eqnarray*}
\Theta_{H_a}(x,y) &=_d&  \ x=0 \land y=0\ : \ \emptyset,\\
                                  & & \ x=\bar{1} \land y=0\  : \ \emptyset,\\
                                  & &  \ x>\bar{1} \land y=0\  : \  \Theta_{H_a}(px,ppx),\\
                                   & & \ y=\bar{1}\ : \ \{\{u \ass x\}\},\\
                                   & &  \ x=0 \land y>\bar{1}\   : \  \Theta_{H_a}(py,ppy),\\
                                   & &  \ x>0 \land y>\bar{1}\  : \  \Theta_{H_a}(px,ppy).
\end{eqnarray*}
\end{minipage}
\begin{minipage}[t]{0.45\textwidth}

\begin{eqnarray*}
\Theta_{H_b}(x,y) &=_d& \ x=0 \land y=0\ : \ \emptyset,\\
                                  & & \ x=\bar{1} \land y=0\  : \ \emptyset,\\
                                  & &  \ x>\bar{1} \land y=0\  : \  \Theta_{H_b}(px,ppx),\\
                                   & & \ y=\bar{1}\ : \ \emptyset,\\
                                   & &  \ x=0 \land y>\bar{1}\  : \  \Theta_{H_b}(py,ppy),\\
                                   & &  \ x>0 \land y>\bar{1}\  : \  \{\{u \ass x,v \ass y\}\} \union \\ 
                                   & & \qquad \qquad \qquad \qquad \quad \quad \, \Theta_{H_b}(px,ppy).
\end{eqnarray*}
\end{minipage}

\begin{minipage}[t]{0.45\textwidth}
\begin{eqnarray*}
\Theta_{H_c}(x,y) &=_d& \ x=0 \land y=0\ : \ \emptyset,\\
                                  & & \ x=\bar{1} \land y=0\  : \ \emptyset,\\
                                  & &  \ x>\bar{1} \land y=0\  : \  \Theta_{H_c}(px,ppx),\\
                                   & & \ y=\bar{1}\ : \ \emptyset,\\
                                   & &  \ x=0 \land y>\bar{1}\  : \  \{\{v \ass y\}\} \union \\ 
                                   & & \qquad \qquad \qquad  \quad \, \, \Theta_{H_c}(py,ppy),\\
                                   & &  \ x>0 \land y>\bar{1}\  : \  \Theta_{H_c}(px,ppy).
\end{eqnarray*}
\end{minipage}
\begin{minipage}[t]{0.45\textwidth}

\begin{eqnarray*}
\Theta_{H_d}(x,y) &=_d& \ x=0 \land y=0\ : \ \emptyset,\\
                                  & & \ x=\bar{1} \land y=0\  : \ \emptyset,\\
                                  & &  \ x>\bar{1} \land y=0\  : \  \{\{u \ass x\}\} \union \\ 
                                  & & \qquad \qquad \qquad  \quad \, \,  \Theta_{H_d}(px,ppx),\\
                                   & & \ y=\bar{1}\ : \ \emptyset,\\
                                   & &  \ x=0 \land y>\bar{1}\  : \  \Theta_{H_d}(py,ppy),\\
                                   & &  \ x>0 \land y>\bar{1}\  : \  \Theta_{H_d}(px,ppy).
\end{eqnarray*}
\end{minipage}

\begin{example}
The Herbrand system with parameter assignment $\sigma(x) = \bar{1}$ and $\sigma(y) = \bar{5}$ yields:
\[
\begin{array}{l}
\Theta_{H_a}(\bar{1},\bar{5}) = \Theta_{H_a}(0,\bar{3}) = \Theta_{H_a}(\bar{2},\bar{1}) = \{\{u \ass \bar{2}\}\},\\[1ex]
\Theta_{H_b}(\bar{1},\bar{5}) = \{\{u \ass \bar{1}, v \ass \bar{5}\}\} \cup \Theta_{H_b}(0,\bar{3}) = 
\{\{u \ass \bar{1}, v \ass \bar{5}\}\} \cup \Theta_{H_b}(\bar{2},\bar{1}) = \{\{u \ass \bar{1}, v \ass \bar{5}\}\},\\[1ex]
\Theta_{H_c}(\bar{1},\bar{5}) = \Theta_{H_c}(0,\bar{3}) = \{\{v \ass \bar{3}\}\} \cup \Theta_{H_c}(\bar{2},\bar{1}) = \{\{v \ass \bar{3}\}\},\\[1ex]
\Theta_{H_d}(\bar{1},\bar{5}) = \Theta_{H_d}(0,\bar{3}) = \Theta_{H_d}(\bar{2},\bar{1}) = \emptyset.
\end{array}
\]
\end{example}

\begin{example}
A more complex example. For the parameter assignment $\sigma(x) = \bar{6}$ and $\sigma(y) = \bar{8}$, we need four instantiations of the formula $H_b$:
\[
\begin{array}{l}
\Theta_{H_b}(\bar{6},\bar{8}) = \{\{u \ass \bar{6}, v \ass \bar{8}\}\} \cup \Theta_{H_b}(\bar{5},\bar{6}) =\\
\{\{u \ass \bar{6}, v \ass \bar{8}\}, \{u \ass \bar{5}, v \ass \bar{6}\}\} \cup \Theta_{H_b}(\bar{4},\bar{4}) =\\
\{\{u \ass \bar{6}, v \ass \bar{8}\}, \{u \ass \bar{5}, v \ass \bar{6}\}, \{u \ass \bar{4}, v \ass \bar{4}\}\} \cup \Theta_{H_b}(\bar{3},\bar{2}),\\[1ex]
\Theta_{H_b}(\bar{3},\bar{2}) = \{\{u \ass \bar{3}, v \ass \bar{2}\}\} \cup \Theta_{H_b}(\bar{2},0) = \{\{u \ass \bar{3}, v \ass \bar{2}\}\} \cup \Theta_{H_b}(\bar{1},0) =\\
\{\{u \ass \bar{3}, v \ass \bar{2}\}\}. \\[1ex]

\text{Therefore, }\\
\Theta_{H_b}(\bar{6},\bar{8}) = \{\{u \ass \bar{6}, v \ass \bar{8}\}, \{u \ass \bar{5}, v \ass \bar{6}\}, \{u \ass \bar{4}, v \ass \bar{4}\}, \{u \ass \bar{3}, v \ass \bar{2}\}\}.
\end{array}
\]
\end{example}

In the special case of the 2-Hydra example, the Herbrand system can be read off directly from the point transition system, without first reconstructing the full proof schema. This is possible because the only variables requiring substitution instances are parameters, and the relevant quantifier instances are already reflected in the transitions of the point transition system. This simplification is specific to the present example and does not extend to the general case, as illustrated in Example~\ref{ex.schema-pts}.

\section{Conclusion}\label{sec.conclusion}

In this paper, we have investigated the relationship between proof schemata, cyclic proofs, and the extraction of Herbrand information in the presence of induction. 
Previous approaches have been extended by incorporating proof schemata based on point transition systems, thereby increasing the expressive power of recursive proof representations. 
Moreover, we showed that for such schemata with at most quantifier-free cuts Herbrand systems can be effectively constructed based on traditional proof-theoretic methods (Definition \ref{def.Herbrandschema} and Theorem \ref{th.HerbrandsystemSoundComplete}).

Furthermore, we established a connection between cyclic proofs and proof schemata by demonstrating that a relevant subclass of cyclic proofs can be algorithmically translated into proof schemata.
An interesting example that falls into this subclass of cyclic proofs is the 2-Hydra example. 
The translation of this example into the proof schema formalism (Theorem \ref{thm:2hydraIsProvable}) illustrates the strength of our approach, showing that proof schemata based on point transition systems can capture inductive arguments beyond $\mathsf{LKID}$.

The present work opens a number of directions for further investigation. In particular, we plan to investigate the correspondence between more general classes of cyclic proofs and proof schemata by defining translations between these two systems that may extend the underlying theory. An important aspect in this context is that many cyclic proofs are based on Henkin semantics, whereas proof schemata are typically interpreted under standard semantics, and aligning these perspectives poses an interesting challenge. 
Another direction for future research is the analysis and comparison of our approach to Herbrand schema computation with the Herbrand structures extracted from cyclic proofs via higher-order recursion schemes \cite{afshari2025herbrand}. At present, the relationship between these Herbrand structures and the Herbrand systems obtained from proof schemata in our framework is not well understood. In particular, it remains open whether the Herbrand structures arising from cyclic proofs can always be recovered by translating the cyclic proof into the proof schema formalism and subsequently computing its Herbrand system, and vice versa. Establishing such a correspondence, or identifying limitations of the translation, is left for future investigation.
We also plan to extend our methods to richer forms of induction, which would further clarify the scope and limitations of the proof schema formalism. Ultimately, we expect that these developments will contribute to a more comprehensive understanding of inductive reasoning and the extraction of computational content from complex proofs.

\bibliographystyle{plain}
\bibliography{references}

\end{document}